\theoremstyle{plain}
\newtheorem{theorem}{Theorem}
\newtheorem{proposition}[theorem]{Proposition}
\newtheorem{conjecture}[theorem]{Conjecture}
\theoremstyle{remark}
\newtheorem{remark}{Remark}
\title{Bogoyavlensky lattices and generalized Catalan numbers}
\author{V.E.\:Adler\thanks{L.D.\:Landau Institute for Theoretical Physics, Akademika Semenova av. 1A, 142432, Chernogolovka, Russian Federation (permanent address). E-mail: adler@itp.ac.ru} 
\thanks{Institute of Mathematics, Ufa Federal Research Centre, Russian Academy of Sciences, Chernyshevsky str. 112, 450008, Ufa, Russian Federation}}
\date{February 5, 2022}
\begin{document}\thispagestyle{empty}
\maketitle

\begin{abstract}
We study the problem of the decay of initial data in the form of a unit step for the Bogoyavlensky lattices. In contrast to the Gurevich--Pitaevskii problem of the decay of initial discontinuity for the KdV equation, it turns out to be exactly solvable, since the dynamics is linearizable due to termination on the half-line. The answer is written in terms of generalized hypergeometric functions, which serve as exponential generating functions for generalized Catalan numbers. This can be proved by the fact that the generalized Hankel determinants for these numbers are equal to 1, which is a well-known result in combinatorics. Another method is based on a non-autonomous symmetry reduction consistent with the dynamics. It reduces the lattice equation to a finite-dimensional system and makes it possible to solve the problem for a more general finite-parameter family of initial data.
\medskip

\noindent{\bf Keywords:}
Bogoyavlensky lattice, generalized hypergeometric function, generalized Catalan numbers, exponential generating function, master-symmetry, Toda lattice, Narayana polynomials, Hankel determinant, Painlev\'e equation.
\end{abstract}


\section{Introduction}

The Bogoyavlensky lattice BL$_p$, $p\in\mathbb N$, is of the form
\begin{equation}\label{BL}
 u'_j=u_j(u_{j+p}+\cdots+u_{j+1}-u_{j-1}-\cdots-u_{j-p}),\quad j\in{\mathbb Z},
\end{equation}
where $u_j=u_j(t)\in\mathbb R$ and $f'=df/dt$. This integrable equation has been well studied in terms of the inverse scattering method, solitons, conservation laws, Hamiltonian structures, symmetries, and B\"acklund transformations, starting from the pioneering works \cite{Manakov_1974, Kac_Moerbeke_1975} (case $p=1$, the Volterra lattice) and \cite{Narita_1982, Itoh_1987, Bogoyavlensky_1988, Bogoyavlensky_1991}. A detailed exposition and bibliography can be found in the book \cite{Suris_2003}. 

The problem we will study can be described using an intuitive ecological interpretation, the predator-prey model. In it, $u_j$ denotes the population of the species $j$ in suitable units and equation (\ref{BL}) means that the species $j$ feeds on species $j+1,\dots,j+p$ and, accordingly, serves as food for species $j-p,\dots,j-1$. The simplest stationary solution $u_j=1$ describes one of equilibrium states of this model. It is stable with respect to small perturbations, but what happens if at the moment $t=0$ all species with $j\le0$ die out and the food chain breaks? Obviously, this amounts to solving the Cauchy problem with initial data in the form of the step
\begin{equation}\label{step}
 u_j(0)=0,~~ j\le0,\quad u_j(0)=1,~~ j>0.
\end{equation}
It is clear that all variables with $j\le0$ remain at 0. Species 1 has no enemies, but the food remains, so it starts to grow rapidly and devours the species $2,\dots,p+1$. When their populations drop noticeably, the species $p+2$ begin to grow, devouring the species $p+3,\dots,2p+2$, and so on, like falling dominoes. Computational experiment confirms these naive reasoning. It shows that a decay wave is formed in the chain, propagating at a constant speed. After the passage of this wave, the variables with numbers $1+k(p+1)$ grow and tend to the same finite limit (not uniformly in $k$), while the remaining variables tend to zero (Fig.~\ref*{fig:step2} on the left). We only have to find out the quantitative characteristics of these processes. The evolution for $t<0$ is equivalent, due to the symmetry $j\to-j$, $t\to-t$, to the termination of the chain from the right, that is, to the disappearance of food. In this case, all variables tend to zero (Fig.~\ref*{fig:step2} on the right).

\begin{figure}[t]
\centering
\includegraphics[width=0.48\textwidth]{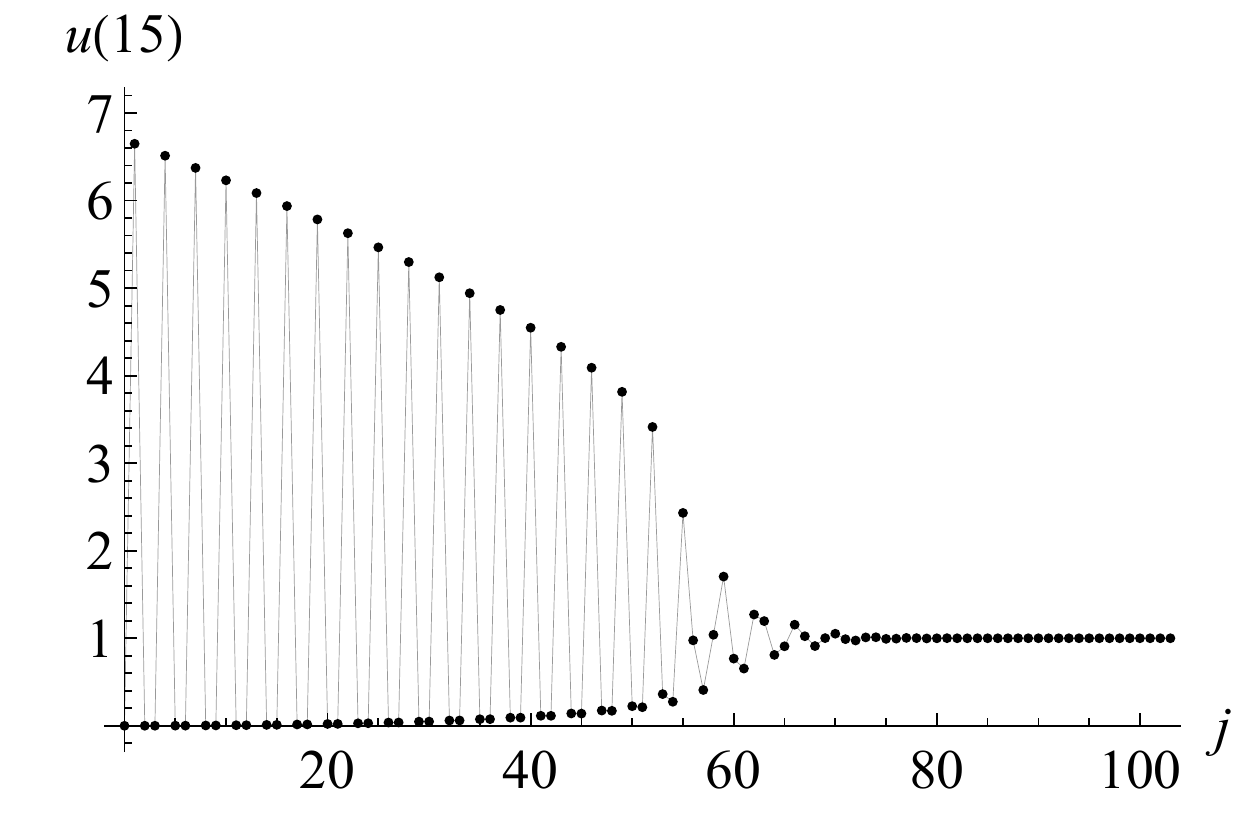}\quad
\includegraphics[width=0.48\textwidth]{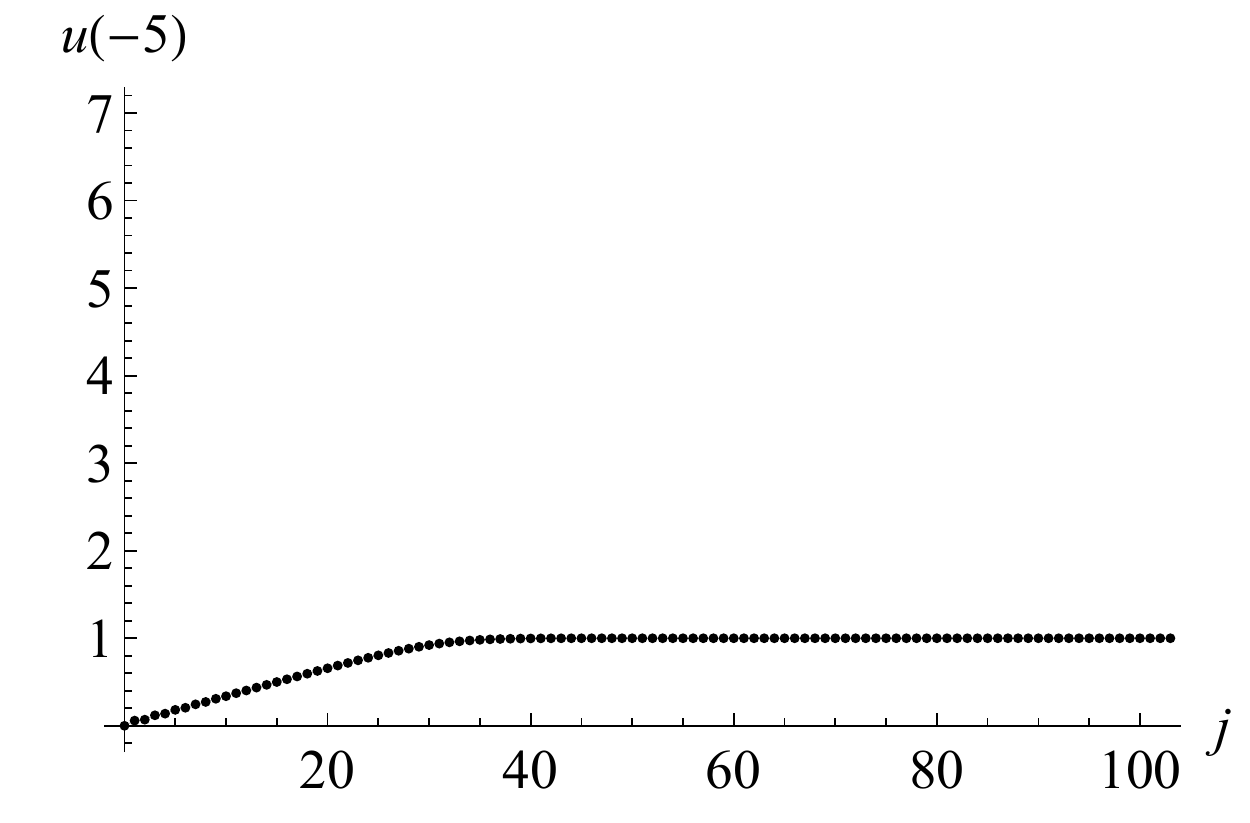}
\caption{Solution of the lattice equation (\ref{BL}) with $p=2$, for the initial data (\ref{step}), at the moments $t=15$ and $t=-5$. The upper limit for $t\to+\infty$ is equal to $(p+1)^{p+1}/{p^p}=27/4$.}
\label{fig:step2} 
\end{figure}

The problem posed is somewhat reminiscent of the Gurevich--Pitaevskii \cite{Gurevich_Pitaevskii_1973} problem on the decay of step-like profile for the KdV equation, but it turns out to be simpler, since the answer is found exactly. We will prove the following result.

\begin{proposition}\label{pr:step}
The solution of the Cauchy problem (\ref{BL}), (\ref{step}) is given by the formulae $u_1=f'_1/f_1$, $u_j=f'_j/f_j-f'_{j-1}/f_{j-1}$, where $f_j$ for $j=1,\dots,p+1$ are equal to the generalized hypergeometric functions
\begin{equation}\label{stepf}
 f_j= {}_pF_p\left(\frac{j}{p+1},\dots,\widehat1,\dots,\frac{j+p}{p+1};\; 
     \frac{j+1}{p},\dots,\frac{j+p}{p};\;\frac{(p+1)^{p+1}}{p^p}t\right),
\end{equation}
where $\widehat1$ denotes the excluded value in the first group of parameters (that is, the numerators of the fractions change from $j$ to $j+p$ with $p+1$ omitted).
\end{proposition}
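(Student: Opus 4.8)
The plan is to exploit the \emph{termination} first. Since the coupling in (\ref{BL}) has range $p$ and $u_j(0)=0$ for $j\le0$, each of these equations reads $u_j'=u_j\cdot(\text{something bounded})$, so $u_j\equiv0$ for $j\le0$ by uniqueness, and the Cauchy problem collapses onto the semi-infinite chain $j\ge1$. In particular the $j=1$ equation loses all of its negative neighbours and becomes $u_1'=u_1(u_2+\dots+u_{p+1})$. I would then pass to potential/tau variables: introduce the partial sums $\phi_j=u_1+\dots+u_j$ (so $\phi_j=0$ for $j\le0$) and set $\phi_j=(\log f_j)'=f_j'/f_j$, normalized by $f_j(0)=1$ and $f_j\equiv1$ for $j\le0$. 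This produces exactly $u_j=\phi_j-\phi_{j-1}=f_j'/f_j-f_{j-1}'/f_{j-1}$ as in the statement. Telescoping the neighbour sums turns (\ref{BL}) into $u_j'/u_j=\phi_{j+p}-\phi_j-\phi_{j-1}+\phi_{j-p-1}$, and summing over $j$ collapses it to the boundary form
\begin{equation*}
\phi_j'=\sum_{a=1}^{p}\;\sum_{i=j-a+1}^{j}u_i\,u_{i+a}=\sum_{\substack{m\le j<n\\ n-m\le p}}u_m\,u_n ,
\end{equation*}
i.e. a Hirota-type bilinear system for the $f_j$. The ``linearizable on the half-line'' mechanism advertised in the abstract is precisely what should let me solve this system in closed form.

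The heart of the proof is to verify that the functions (\ref{stepf}) satisfy this system. I would first match the initial data: clearly $f_j(0)=1$, and computing the $n=1$ coefficient of ${}_pF_p$ the parameter products give $\prod a_i/\prod b_i=j\,p^{p}/(p+1)^{p+1}$, so with the argument $\kappa t$, $\kappa=(p+1)^{p+1}/p^{p}$, the factor $\kappa$ cancels and $f_j'(0)=j$. Hence $\phi_j(0)=j=\sum_{i\le j}u_i(0)$, which reproduces the step (\ref{step}). For the dynamics I would differentiate (\ref{stepf}) and use the contiguous relations of ${}_pF_p$ — the ones that shift all top parameters by $1/(p+1)$ and all bottom parameters by $1/p$ — to express $f_j'$ and the combination $\phi_{j+p}-\phi_j-\phi_{j-1}+\phi_{j-p-1}$ through neighbouring $f$'s; the special parameters and the argument $\kappa t$ are engineered exactly so that these relations reproduce the boundary equation above.

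At the level of Taylor coefficients this is where the combinatorics enters: the Gauss multiplication formula collapses the Pochhammer products in the definition of ${}_pF_p$ into the generalized Catalan numbers, so each $f_j$ is their (shifted) exponential generating function, and the closure of the bilinear system becomes equivalent to the statement that the associated generalized Hankel determinants equal $1$. I would also note that the explicit formula (\ref{stepf}) applies literally only for $1\le j\le p+1$ — exactly the range in which one top parameter equals $1$ and is omitted, yielding a genuine ${}_pF_p$. This first block already closes the $j=1$ equation; the remaining $f_j$ are then generated by the same bilinear recursion (equivalently, fixed by the Hankel identity, or by the non-autonomous symmetry reduction of the paper's second method). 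Matching at $t=0$ together with uniqueness for the analytic Cauchy problem would then complete the argument.

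The hard part will be the hypergeometric bookkeeping with the two \emph{incommensurate} shifts $1/(p+1)$ and $1/p$: unlike the Volterra case $p=1$, where differentiation sends $f_j$ to $f_{j+1}$ outright, for general $p$ the derivative of $f_j$ is no longer one of the neighbouring $f$'s, so the relations must close only after invoking the full order-$(p+1)$ hypergeometric ODE. I expect this to be essentially equivalent to the evaluation of the generalized Hankel determinant, so that single determinant identity is both the main technical obstacle and the precise bridge to the generalized Catalan numbers.
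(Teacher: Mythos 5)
Your scaffolding (tau-functions, the bilinear form of the terminated lattice, the EGF identification, matching at $t=0$ plus analytic uniqueness) is the same as the paper's first proof, and your computations of $\phi'_j$ and of $f_j(0)=1$, $f'_j(0)=j$ are correct. But there is a genuine gap, and it sits exactly where you placed ``the heart of the proof''. The dynamical verification you propose cannot be carried out as described, because the bilinear system does not close on the hypergeometric block: for $j\ge2$ the equation for $\phi'_j$ involves $u_{j+p}$, hence $f_{j+p}$ with $j+p>p+1$, and those tau-functions are \emph{not} hypergeometric (the formula (\ref{stepf}) holds only for $j\le p+1$; the higher $f_j$ are Wronskians). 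The only lattice equation living entirely on $f_1,\dots,f_{p+1}$ is the one for $j=1$, and it collapses to $(\log f'_1)'=(\log f_{p+1})'$, i.e.\ to the elementary identity $F^p_{p+1}=(F^p_1)'$, so no amount of contiguous-relation bookkeeping on these $p+1$ functions can verify the dynamics for $j\ge2$. (Moreover, contiguous relations shifting all upper parameters by $1/(p+1)$ and all lower ones by $1/p$ do not exist; the usable rules (\ref{dF}) shift a single parameter by $1$, and in the paper they close into the finite \emph{linear} system (\ref{fsystem}) only because of the extra constraint (\ref{cG}) supplied by the reduction method --- an input your argument does not have.) Conversely, once the remaining $f_j$, $j>p+1$, are generated by the bilinear recursion, the dynamics holds automatically for \emph{arbitrary} seed functions $f_1,\dots,f_p$: this is Propositions \ref{pr:uf} and \ref{pr:detf}, proved via the Jacobi identity for Wronskians. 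So the dynamics is never the issue.

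What that leaves is the initial data, and your matching at $t=0$ covers only $j\le p+1$. For $j>p+1$ one has $u_j(0)=f_{j-p-1}(0)f_{j+p}(0)/(f_{j-1}(0)f_j(0))$ by (\ref{uff0}), where the values $f_j(0)$, $j>p+1$, of the recursively generated tau-functions are the Casoratians (\ref{fj0}) built from the Taylor coefficients of $f_1,\dots,f_p$, i.e.\ generalized Hankel determinants of generalized Catalan numbers. The claim $u_j(0)=1$ for \emph{all} $j$ is exactly the determinant identity (\ref{Cid}) of Proposition \ref{pr:gen-Hankel} (Krattenthaler). You correctly sense this (``the main technical obstacle \dots the precise bridge''), but you neither prove that identity nor invoke it as a known result, so the argument is incomplete at precisely the step that carries all the content. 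To close the gap you must either cite or prove the determinant evaluation (the route of Sect.~\ref{s:det}), or replace it by the constraint (\ref{cG}), whose consequence (\ref{fsystem}) is a closed linear system that the functions (\ref{stepf}) genuinely satisfy via (\ref{dF}) and which also forces the step initial data for all $j$ (the route of Theorem \ref{th:p}).
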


Let us explain that in order to determine the solution on the half-line, it suffices to specify $u_1,\dots,u_p$, since the variables with $j>p$ are calculated recursively directly from the lattice equations. The function $f_{p+1}$ has the same form as the previous ones, but further formulae become more complicated; they are given in Sect.~\ref{s:det} and are written in terms of Wronskians formed according to certain rules from the functions $f_1,\dots,f_p$ and their derivatives.
 
In the simplest case $p=1$, the problem was studied in papers \cite{Kulaev_Shabat_2019, Adler_Shabat_2018, Adler_Shabat_2019}. The approach from \cite{Adler_Shabat_2018} is based on the remarkable observation that $f_1$ serves as the exponential generating function (EGF) for the Catalan numbers $c_n$. For $p=1$, all $u_j$ are expressed in terms of the Hankel determinants of two types, depending on the parity of $j$:
\begin{equation}\label{f1}
 f_{2k+1}=\left|\begin{matrix}
  f_1 & \dots & f^{(k)}_1  \\
  \hdotsfor[2]{3} \\
  f^{(k)}_1 & \dots & f^{(2k)}_1
 \end{matrix}\right|,\quad
 f_{2k+2}=\left|\begin{matrix}
  f'_1 & \dots & f^{(k+1)}_1  \\
  \hdotsfor[2]{3} \\
  f^{(k+1)}_1 & \dots & f^{(2k+1)}_1
 \end{matrix}\right|,\quad k\ge0.
\end{equation}
For $t=0$, the derivatives are replaced with the Catalan numbers $c_n=f^{(n)}_1(0)$ and Proposition \ref{pr:step} follows from the well-known result that all such determinants are equal to 1 \cite{Aigner_1999, Stanley_1999, Layman_2001}. Such identities (that is, the calculation of the Hankel determinants for various sequences) have been actively studied in combinatorics for a long time; we mention only a few works on this topic \cite{Radoux_1979, Radoux_2002, Ehrenborg_2000, Mays_Wojciechowski_2000, Krattenthaler_2005, Krattenthaler_2010, Spivey_Steil_2006, Cvetkovic_Rajkovic_Ivkovic_2002, Rajkovic_Petkovic_Barry_2007, Peart_Woan_2000,Petkovic_Barry_Rajkovic_2012}. They are related to the problems of counting paths, graphs, trees, triangulations, etc.~and use various analytical methods, including integral transformations, continued fraction expansions of generating functions, and matrix factorization.

For a general $p$, functions $f_1,\dots,f_p$ serve as the EGFs for the generalized Catalan numbers introduced in \cite{Hilton_Pedersen_1991}, see also \cite{Graham_Knuth_Patashnik_1990} (the names of the Fuss--Catalan or the Raney sequences are also used in the literature). The variables $u_j$ are expressed by Wronskians formed of these functions is such a way that each row is obtained from the previous one by shifting to the left by $p$ positions rather than 1 as in (\ref{f1}). Determinants of such type were studied in papers \cite{Chamberland_French_2007, Krattenthaler_2010}. In particular, a rather general formula obtained in \cite{Krattenthaler_2010} implies an identity proving the Proposition \ref{pr:step}.

The emergence of combinatorial objects in problems of mathematical physics has been repeatedly noted. Let us mention only a few close examples: a connection of the Catalan numbers with the Chebyshev polynomials \cite{Artisevich_Bychkov_Shabat_2020, Bychkov_Shabat_2021}, the Catalan and Hurwitz numbers in the theory of the dispersionless Toda chain \cite{Kodama_Pierce_2009, Takasaki_2018}, the generalized Catalan numbers and spectral properties of products of random matrices \cite{Penson_Zyczkowski_2011}. The solution of our problem described above is another example of such a connection with combinatorics. This solution is very simple, but at the same time it gives some dissatisfaction, since it is based on some determinant identity, the nature and proof of which are not very clear to non-experts. In fact, the complexity of the problem is only moved to another place (in \cite{Krattenthaler_2010}, the technique of counting paths on a lattice is used, and the proof is based on some previous results and is not self-sufficient). Instead, it would be nice to have a solution which explains, in a language closer to the original equations, how the functions (\ref{stepf}) appear in the answer.

In the case $p=1$, such an alternative solution was proposed in \cite{Adler_Shabat_2019}. It uses a non-autonomous reduction consistent with the dynamics defined by the lattice equation and related to its master-symmetry. Thus, our problem finds its place among other problems related with the additional symmetry algebra and the string equations, which is a long-studied direction in the theory of integrable equations; we only mention papers \cite{Fokas_Its_Kitaev_1991, Chen_Hu_Muller-Hoissen_2018} which are also related to the Volterra lattice. Our reduction turns this lattice equation into a finite-dimensional system and leads to a family of solutions expressed in terms of the Painlev\'e transcendents, and the additional termination on the half-line leads to special solutions in terms of ${}_1F_1$. The set of initial data for this family is described by an explicit formula and it contains the unit step as a particular example, which gives an alternative proof of Proposition \ref{pr:step} for $p=1$. In this solution, the Wronskian formulae are not needed at all. However, if we take them into account then this proves the above identity about the Hankel determinants for the Catalan numbers, that is, the connection with combinatorics works in the opposite direction as well.

The purpose of this article is to generalize this method for arbitrary $p$. There are difficulties with the use of master-symmetry, because for $p>1$ it is non-local \cite{Zhang_Tu_Oevel_Fuchssteiner_1991, Wang_2012} and leads to overly complicated reductions. Nevertheless, it turns out that there is a rather simple reduction with the required properties
for the modified Bogoyavlensky lattice. In general, this reduction is related to some higher analogues of the Painlev\'e equations and the termination on the half-line leads to the generalized hypergeometry.

The outline of the paper is as follows. In Sect.~\ref{s:Taylor} we present experiments with Taylor series for solving the Cauchy problem (\ref{BL}), (\ref{step}). Combined with the use of such an extensive database as the On-Line Encyclopedia of Integer Sequences (OEIS) \cite{OEIS}, this makes possible to reveal the relation with the generalized Catalan numbers and, as a corollary, to guess the answer from Proposition \ref{pr:step}. We provide the necessary information about the generalized Catalan numbers, including the derivation of exponential generating functions, as well as an example related to the Toda lattice and the Narayana polynomials \cite{Bonin_Shapiro_Simion_1993,Sulanke_2000,Lassalle_2012,Petkovic_Barry_Rajkovic_2012}. 

Sect.~\ref{s:det} contains a proof of Proposition \ref{pr:step} based on comparison of the Wronskian formulae for the lattice equation (\ref{BL}) with identities for the generalized Catalan numbers from \cite{Krattenthaler_2010}.

In Sect.~\ref{s:p1}, we describe the master-symmetry reduction from \cite{Adler_Shabat_2019} which provide an alternative proof for the case of the Volterra lattice ($p=1$). Sect.~\ref{s:p1q} contains a similar reduction with one extra parameter for the Toda lattice. A well-known substitution between the Toda and Volterra lattices makes possible to obtain for the latter the solution of the Cauchy problem with the initial data in the form of two-level step (with different values for odd and even $j$). Under this deformation, the Catalan numbers are replaced with the Narayana polynomials. 

The main results of the work are contained in Sect.~\ref{s:p} devoted to the case of arbitrary $p$. The reduction method allows us to prove Theorem \ref{th:p} which describes the solution of the Cauchy problem for (\ref{BL}) with certain $2p$-parametric family of initial data. Proposition \ref{pr:step} is a particular case of this theorem. The use of exact formulae (\ref{stepf}) makes it possible to calculate the asymptotics for the variables $u_j$ with respect to $t$ and to obtain some numbers characterizing the decay of the unit step.

\section{Solutions in the form of Taylor series}\label{s:Taylor}

\subsection{Catalan numbers}

Heuristic arguments leading to Proposition \ref{pr:step} can be obtained by the following computational experiment. Let us construct $u_j$ in the form of Taylor series at $t=0$. The derivative $u^{(n)}_j$ is calculated by virtue of equation (\ref{BL}) as a polynomial in $u_{j-pn},\dots,u_{j+pn}$, after that it remains to substitute the initial data at $t=0$. This is not very efficient, but one can easily calculate about 10 coefficients. Naturally, for arbitrary numerical values of $u_j(0)$ nothing reasonable can be expected. The unit step (\ref{step}) turns out to be `magic' initial data which yield something known. For $p=1$, the corresponding values $u^{(n)}_1(0)$, $n\ge0$ are
\[
 1, 1, 1, 0, -4, -10, 15, 210, 504, -3528, -34440,\dotso.
\]
This is the sequence A302197 from OEIS, the Hurwitz logarithm of Catalan numbers, which means that
\begin{equation}\label{u1f}
 u_1= (\log f_1)',
\end{equation}
where $f_1$ is the EGF for the Catalan numbers $c_n$ ($1,1,2,5,14,42,\dots$). This function is known:
\begin{equation}\label{egf-c}
 f_1= c_0+c_1t+c_2\frac{t^2}{2!}+\dots+c_n\frac{t^n}{n!}+\dots={}_1F_1(\tfrac{1}{2},2,4t).
\end{equation}
Thus, if the derivatives of $u_1$ really match A302197 then the formulae (\ref{u1f}) and (\ref{egf-c}) give the solution of our Cauchy problem. All we need to get the answer is to calculate several coefficients of the Taylor series in the first node of the lattice and compare with the available database!

\subsection{Generalized Catalan numbers}

Let us repeat the calculations for $p=2$. The derivatives $u^{(n)}_1(0)$ form the sequence
\[
 1, 2, 5, 10, -8, -255, -1587, -1862, 76944,\dots,
\]
which is not in OEIS. However, if, as before, we pass to the series $f_1$ according to the formula (\ref{u1f}) then it turns out to be the EGF for the A001764 sequence, which belongs to the Fuss--Catalan family. This is also true in the general case. It turns out that well-known combinatorial sequences arise not for $u_j$ themselves, but for $\tau$-functions of the lattice, which we denote by $f_j$. For the lattice on the half-line (with general initial conditions, not just in the form of the unit step), these variables are introduced as follows.

\begin{proposition}\label{pr:uf}
The general solution of the lattice equation (\ref{BL}) terminated by boundary conditions $u_j=0$ for $j\le0$ and such that $u_j\ne0$ for $j>0$ admits the representation
\begin{equation}\label{uf}
 u_1=f'_1/f_1,\quad u_j=f'_j/f_j-f'_{j-1}/f_{j-1},~~ j>1,
\end{equation}
where $f_j$ satisfy the bilinear lattice equation
\begin{equation}\label{fL}
 f_{j-1}f'_j-f'_{j-1}f_j=f_{j-p-1}f_{j+p},\quad j=1,2,\dots,
\end{equation}
with functions $f_{-p},\dots,f_p$ such that
\begin{equation}\label{fic}
 f_{-p}=\dots=f_0=1,\quad f_1(0)=\dots=f_p(0)=1.
\end{equation}
\end{proposition}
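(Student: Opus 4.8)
The plan is to verify that the substitution (\ref{uf}) converts the bilinear chain (\ref{fL}) into the lattice (\ref{BL}) and conversely, so that the two descriptions of the terminated dynamics coincide. The natural variable throughout is the logarithmic derivative $v_j:=f'_j/f_j$. In its terms (\ref{uf}) becomes the telescoping relation $u_j=v_j-v_{j-1}$, i.e. $v_j=u_1+\dots+u_j$, where $v_0=0$ because $f_0=1$; the boundary values $f_{-p}=\dots=f_0=1$ give $v_j=0$ and hence $u_j=0$ for $j\le0$, which is exactly the termination. Dividing (\ref{fL}) by $f_{j-1}f_j$ puts it into the equivalent form
\begin{equation}\label{p-star}
 u_j=v_j-v_{j-1}=\frac{f_{j-p-1}f_{j+p}}{f_{j-1}f_j},\qquad j\ge1,
\end{equation}
which is the identity I would take as the bridge between the two pictures.

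First I would treat the direction needed for producing solutions: granting (\ref{fL}) and (\ref{fic}), I would show that the $u_j$ from (\ref{uf}) satisfy (\ref{BL}). Taking the logarithmic derivative of (\ref{p-star}) and using the telescoping sums $u_{j+1}+\dots+u_{j+p}=v_{j+p}-v_j$ and $u_{j-p}+\dots+u_{j-1}=v_{j-1}-v_{j-p-1}$ yields
\begin{equation}\label{p-key}
 \frac{u'_j}{u_j}=v_{j+p}-v_j-v_{j-1}+v_{j-p-1}
  =(u_{j+p}+\dots+u_{j+1})-(u_{j-1}+\dots+u_{j-p}),
\end{equation}
which is precisely (\ref{BL}); the terms of non-positive index vanish by the boundary conditions, so the half-line structure is automatic. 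Identity (\ref{p-key}) is the heart of the matter: it is nothing but (\ref{BL}) rewritten through the partial sums $v_j$.

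For the converse I would start from a solution $u_j$ of the terminated lattice with $u_j\ne0$ for $j>0$ and reconstruct the $f_j$ inductively. I would set $f_j:=\exp\!\int_0^t v_j\,ds$ for $1\le j\le p$ and $f_j:=1$ for $-p\le j\le0$, so that (\ref{uf}) and (\ref{fic}) hold by construction, and then define the remaining functions by the recursion read off from (\ref{fL}), namely $f_{j+p}:=(f_{j-1}f'_j-f'_{j-1}f_j)/f_{j-p-1}=f_{j-1}f_ju_j/f_{j-p-1}$ for $j\ge1$. Here the hypothesis $u_j\ne0$ keeps every $f_j$ nonvanishing, so the quotients and the logarithmic derivatives are well defined. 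A short induction, differentiating $\log f_{j+p}$ and substituting (\ref{BL}) in the form (\ref{p-key}) for the given solution, shows that $f'_{j+p}/f_{j+p}=v_{j+p}$; hence (\ref{uf}) propagates to all indices, while (\ref{fL}) holds by the very definition of $f_{j+p}$.

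I expect the main obstacle to be exactly this last step of the converse: checking that the recursively generated $f_{j+p}$ are simultaneously compatible with (\ref{uf}) for every $j$, not merely with the single bilinear relation used to define them. The point that makes it work is that the indices in (\ref{fL}) are shifted by exactly $p+1$, so the highest-index function $f_{j+p}$ occurs as the leading unknown in one and only one equation; combined with $u_j\ne0$ this guarantees that the induction never stalls and that no two equations impose conflicting demands on the same $f_{j+p}$. The forward implication, by contrast, is the one-line computation (\ref{p-key}) once (\ref{p-star}) is in hand.
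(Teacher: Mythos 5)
Your proof is correct, and it pivots on exactly the same identity as the paper's, namely $u_j=f'_j/f_j-f'_{j-1}/f_{j-1}=f_{j-p-1}f_{j+p}/(f_{j-1}f_j)$ (equation (\ref{uff}) there), but the surrounding scaffolding is genuinely different. The paper obtains this identity by factorizing the difference operator: it writes (\ref{BL}) as $(\log u_j)'=P(u_j)$ with $P=(T^{p+1}-1)(1-T^{-p})(T-1)^{-1}$ and resolves it through the potentials $\log f_j$; the normalization (\ref{fic}) is then imposed a posteriori via the gauge freedom $\widetilde f_j=\gamma(t)\gamma_jf_j$, $\gamma_{j-p-1}\gamma_{j+p}=\gamma_{j-1}\gamma_j$, after noting that $u_{-p+1}=\dots=u_0=0$ only forces $f_{-p},\dots,f_0$ to be constant multiples of one another. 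You instead split the statement into two verifications: the forward direction is the same telescoping computation, and for the converse you construct the normalized $\tau$-functions outright ($f_j=\exp\int_0^t v_j\,ds$ for $j=1,\dots,p$, equal to $1$ for $-p\le j\le 0$, the rest by the bilinear recursion) and prove by induction that $f'_{j+p}/f_{j+p}=v_{j+p}$ propagates. What your route buys is self-containedness: it avoids the operator calculus and the gauge argument entirely, and it makes explicit the one point the paper leaves implicit --- that the single bilinear relation defining each new $f_{j+p}$ is automatically compatible with (\ref{uf}) at all higher indices, thanks to the lattice equation itself. What the paper's route buys is generality: the gauge argument shows that \emph{every} system of $\tau$-functions representing the given solution can be brought to the normalized form (\ref{fic}), whereas your construction exhibits one normalized representative --- which is all that the proposition as stated requires, so no gap results.
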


\begin{proof} 
Assuming $u_j\ne0$, we put (\ref{BL}) in the form
\[
 (\log u_j)'=P(u_j),
\]
where $P$ is the difference operator ($T$ denotes the shift operator $j\mapsto j+1$)
\[
 P=T^p+\cdots+T-T^{-1}-\cdots-T^{-p}=(T^{p+1}-1)(1-T^{-p})(T-1)^{-1}.
\]
This equation can be resolved by setting
\[
 \log u_j = (T^{p+1}-1)(1-T^{-p})(\log f_{j-1}),\quad (T-1)(\log f_{j-1})'=u_j.
\]
This is equivalent to
\begin{equation}\label{uff}
 u_j=\frac{f'_j}{f_j}-\frac{f'_{j-1}}{f_{j-1}}=\frac{f_{j-p-1}f_{j+p}}{f_{j-1}f_j},
\end{equation}
which gives both equation (\ref{fL}) and the substitution to (\ref{BL}). To restrict the lattice on the half-line, it is necessary to extend these relations for the zero values $u_{-p+1}=\dots=u_0=0$, whence it follows that $f_{-p},\dots,f_0$
coincide up to numerical factors. The equations for $f_j$ do not change under the gauge transformation
\[
 \widetilde f_j=\gamma(t)\gamma_jf_j,\quad \gamma_{j-p-1}\gamma_{j+p}=\gamma_{j-1}\gamma_j,\quad 
 \gamma(t)\ne0,\quad \gamma_j\ne0,
\]
and we can set $f_{-p}=\dots=f_0=1$ without loss of generality, by suitable choice of $\gamma(t)$ and $\gamma_{-p},\dots,\gamma_0$. After that, we can adopt the normalization $f_1(0)=\dots=f_p(0)=1$ by
choosing $\gamma_1,\dots,\gamma_p$.
\end{proof}

A solution of equation (\ref{fL}) (and (\ref{BL}) with it) is uniquely constructed from given $f_1,\dots,f_p$, assuming that all $f_j$ for $j>p$ do not vanish identically (which can happen for a special choice of initial functions). Expressions for subsequent $f_j$ involve the determinant formulae from Sect.~\ref{s:det}, but at the moment they are not needed, since we only want to calculate the Taylor series for $f_1,\dots,f_p$. Technically, it is more convenient to do this using one more auxiliary equation instead of the bilinear lattice, the modified Bogoyavlensky lattice \cite{Bogoyavlensky_1988}
\begin{equation}\label{mBL}
 v'_j=v^2_j(v_{j+p}\cdots v_{j+1}-v_{j-1}\cdots v_{j-p})
\end{equation}
related with (\ref{BL}) and (\ref{fL}) by substitutions
\begin{equation}\label{uvf}
 u_j=v_j\cdots v_{j+p},\quad v_j=\frac{f_{j-p-1}f_j}{f_{j-p}f_{j-1}}.
\end{equation}
Under the boundary condition (\ref{fic}) we have $v_0=0$ and $v_j=f_j/f_{j-1}$ for $j=1,\dots,p$, that is
\[
 f_1=v_1,\quad f_2=v_1v_2,~~\dots,~~ f_p=v_1\cdots v_p.
\]
The derivatives $f^{(n)}_j(0)$ of these products are straightforwardly calculated in virtue of equation (\ref{mBL}) from given initial data for the variables $v_j$, and it is easy to see that the unit step (\ref{step}) corresponds to initial data of the same form $v_j$: $v_0(0)=0$, $v_j(0)=1$, $j>0$. As the result, we obtain the Table \ref{t:Cpjn} and comparison with the OEIS suggests that these are nothing but the generalized Catalan numbers 
\begin{equation}\label{Cpjn}
 C^p_{j,n}=\frac{j}{pn+j}\binom{(p+1)n+j-1}{n}=\frac{j}{(p+1)n+j}\binom{(p+1)n+j}{n},
\end{equation}
where $\binom{m}{n}=\frac{m!}{n!(m-n)!}$. For sequences in the rows of the table, the OEIS provides expressions for the general term, generating functions and numerous combinatorial and algebraic properties. Thus, we come to the following experimental fact.

\begin{table}[t]
\begin{gather*}
 \begin{array}{c|c|rrrrrrrrrr}
  p & j \backslash n & 0 & 1 & 2 & 3 &  4 &  5 & 6 & 7 & & \text{OEIS}~~\\
  \hline
  1 & 1 & 1 & 1 &  2 &  5 &  14 &   42 &   132 &    429 & \dots & \text{A000108}\\
  \hline
  2 & 1 & 1 & 1 &  3 & 12 &  55 &  273 &  1428 &   7752 & \dots & \text{A001764}\\
    & 2 & 1 & 2 &  7 & 30 & 143 &  728 &  3876 &  21318 & \dots & \text{A006013}\\
  \hline
  3 & 1 & 1 & 1 &  4 & 22 & 140 &  969 &  7084 &  53820 & \dots & \text{A002293}\\
    & 2 & 1 & 2 &  9 & 52 & 340 & 2394 & 17710 & 135720 & \dots & \text{A069271}\\
    & 3 & 1 & 3 & 15 & 91 & 612 & 4389 & 32890 & 254475 & \dots & \text{A006632}\\
  \hline
  4 & 1 & 1 & 1 &  5 & 35 & 285 & 2530 & 23751 & 231880 & \dots & \text{A002294}
 \end{array}\\[-8pt] \dots\dots\dots\dots 
\end{gather*} 
\caption{Generalized Catalan numbers $C^p_{j,n}$.}\label{t:Cpjn}
\end{table}

\begin{conjecture}\label{conj:fCpjn}
Functions $f_j$ corresponding to the initial data in the form of the unit step are such that
\begin{equation}\label{fCpjn}
 f^{(n)}_j(0)=C^p_{j,n} \quad\text{for}~~ p=1,2,\dots,~~ j=1,\dots,p+1,~~ n=0,1,\dots,
\end{equation}
that is, $f_1,\dots,f_{p+1}$ serve as the EGFs for the generalized Catalan numbers.
\end{conjecture}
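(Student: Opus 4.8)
The plan is to show that, when the bilinear lattice (\ref{fL}) is seeded with the generalized Catalan EGFs, it reproduces the unit step, and to isolate the single combinatorial fact that this requires. Put $F_j(t)=\sum_{n\ge0}C^p_{j,n}t^n/n!$ for the candidate functions. A short computation with the explicit formula (\ref{Cpjn}) shows that the ratio $C^p_{j,n+1}/C^p_{j,n}$ is a quotient of $p$ linear factors in $n$ over $p$ linear factors in $n$ with leading constant $(p+1)^{p+1}/p^p$; reading off the parameters identifies $F_j$ with the hypergeometric function ${}_pF_p$ of (\ref{stepf}) and in particular gives $F_j(0)=1$. I would also record the identity $C^p_{p+1,n}=C^p_{1,n+1}$: it follows from the Fuss--Catalan functional equation $B=1+xB^{p+1}$ for the ordinary generating function $B=\sum_n C^p_{1,n}x^n$ (so that $\sum_n C^p_{j,n}x^n=B^{j}$), and it is exactly the coefficient form of $f_{p+1}=f'_1$, the relation forced by (\ref{fL}) at $j=1$. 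Hence it is enough to pin down $f_1,\dots,f_p$.

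Next I would feed $f_1:=F_1,\dots,f_p:=F_p$ together with $f_{-p}=\dots=f_0=1$ into (\ref{fL}) and generate $f_{p+1},f_{p+2},\dots$ recursively, which is legitimate by Proposition \ref{pr:uf} and the remark following it. By that proposition the functions $u_j=f_{j-p-1}f_{j+p}/(f_{j-1}f_j)$ automatically solve (\ref{BL}) with $u_j=0$ for $j\le0$. The crucial point is that, for such a seeded solution, the step initial data (\ref{step}) is equivalent to the single normalization $f_j(0)=1$ for all $j$: assuming $f_i(0)=1$ for $-p\le i<j+p$, the relation (\ref{uff}) gives $u_j(0)=1$ if and only if $f_{j+p}(0)=1$, so by induction $u_j(0)=1$ for every $j>0$ precisely when every $f_j(0)=1$. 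Since the Cauchy problem (\ref{BL}), (\ref{step}) has a unique solution (its Taylor coefficients are determined recursively from the initial data), once $f_j(0)=1$ is known for all $j$ the seeded functions must coincide with the genuine $\tau$-functions, and then (\ref{fCpjn}) holds for $j=1,\dots,p+1$.

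The entire statement therefore collapses to verifying that the quantities produced from $F_1,\dots,F_p$ by (\ref{fL}), evaluated at $t=0$, are all equal to $1$. Using the Wronskian representation of $f_j$ in terms of $f_1,\dots,f_p$ from Section \ref{s:det} --- of which (\ref{f1}) is the case $p=1$ --- each value $f_j(0)$ is a generalized Hankel determinant formed from the entries $C^p_{i,n}$, so the claim is that all of these determinants equal $1$. This is the main obstacle, and it is precisely the combinatorial identity established in \cite{Krattenthaler_2010} (and, for $p=1$, the classical evaluation of the Catalan Hankel determinants \cite{Aigner_1999,Stanley_1999,Layman_2001}). I would either invoke that identity directly, or --- to avoid it and stay closer to the lattice --- replace this last step by the non-autonomous symmetry reduction developed later in the paper, which turns (\ref{mBL}) into a finite-dimensional system whose termination on the half-line forces the functions (\ref{stepf}) without any determinant evaluation.
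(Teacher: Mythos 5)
Your proposal is correct and takes essentially the same approach as the paper: both arguments reduce the conjecture, via the $\tau$-function representation (Proposition \ref{pr:uf}), the relation (\ref{uff0}) forcing $f_j(0)=1$ for the unit step, and the determinant formulae of Section \ref{s:det}, to the generalized Hankel determinant identity (\ref{Cid}) of Proposition \ref{pr:gen-Hankel}. The only difference is the direction of the argument --- you seed the lattice with the Catalan EGFs and use the evaluation half of that proposition together with uniqueness of the Cauchy problem, whereas the paper starts from the step solution and invokes the ``uniquely defined by them'' half --- which is an immaterial variation on the same proof.
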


The notation is slightly inconsistent: strictly speaking, we should attach the index $p$ to variables $u_j$, $v_j$ and $f_j$ as well, but this clutters up equations. We assume that $p$ is the global parameter and all objects may depend on it even when it is not indicated explicitly.  

For reference, we note that the first row of the table ($p=j=1$) contains the classical Catalan numbers, the rows $j=1$ for $p>1$ are known as the Fuss--Catalan or the Raney sequences \cite{Graham_Knuth_Patashnik_1990}, and the general definition appeared for the first time, apparently in the work \cite{Hilton_Pedersen_1991}. There are no unified notation for $C^p_{j,n}$ in the literature. Moreover, in many articles, the parameter $p'=p+1$ is used instead of $p$. One of the many combinatorial interpretations of these numbers is the following: $C^p_{j,n}$ is the number of paths which lead from $(0,0)$ to $(pn+j,n)$ and lie strictly below the line $j=pn$, if it is permitted to move only one unit to the right or up (see Fig.~\ref{fig:path}).

The most important property for us is that the EGFs of the sequences $C^p_{j,n}$ are expressed in terms of generalized hypergeometric functions
\begin{equation}\label{pFq}
 {}_pF_q(a_1,\dots,a_p;b_1,\dots,b_q;z)=\sum^\infty_{n=0}\frac{(a_1)_n\cdots(a_p)_n}{(b_1)_n\cdots(b_q)_n}\frac{z^n}{n!},
\end{equation}
where $(a)_0=1$ and $(a)_n=a(a+1)\cdots(a+n-1)$ (the rising factorial). Recall, that for $p<q+1$ these series are convergent for all $z\in\mathbb C$, that is, they define entire functions of $z$ (see e.g.~\cite{NIST}). In our case $p=q$.

The following statement implies that Conjecture \ref{conj:fCpjn} and Proposition \ref{pr:step} are equivalent. These formulae are well-known, but we provide the proof for the sake of completeness. 

\begin{figure}[t]
\centering
\includegraphics[width=0.6\textwidth]{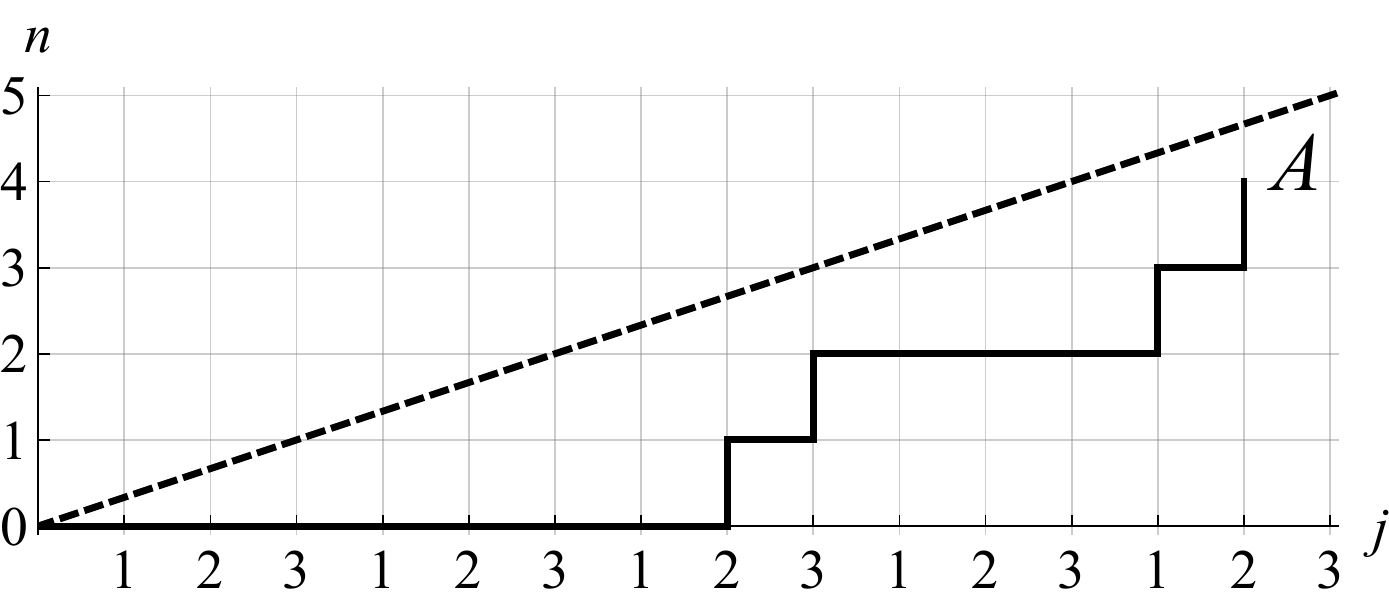}
\caption{There are $C^3_{2,4}=340$ paths to the point $A=(3\cdot4+2,4)$.}
\label{fig:path}
\end{figure}

\begin{proposition}\label{pr:egf}
For given $p$ and $j=1,\dots,p+1$, the EGF of the sequence $C^p_{j,n}$, that is, the series
\begin{equation}\label{egf-pj}
 C^p_{j,0} +C^p_{j,1}t+ C^p_{j,2}\frac{t^2}{2!} +\dots + C^p_{j,n}\frac{t^n}{n!} +\dots
\end{equation}
is equal to
\begin{equation}\label{Fpj} 
 F^p_j(t)= {}_pF_p\left(\frac{j}{p+1},\dots,\widehat1,\dots,\frac{j+p}{p+1};\; 
     \frac{j+1}{p},\dots,\frac{j+p}{p};\;\frac{(p+1)^{p+1}}{p^p}t\right), 
\end{equation}
where $\widehat1$ is the omitted value in the first group of parameters.
\end{proposition}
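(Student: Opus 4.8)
The plan is to reduce the claim that the series (\ref{egf-pj}) equals (\ref{Fpj}) to a single coefficient identity and then verify it by elementary means. Since the hypergeometric series in (\ref{Fpj}) is evaluated at the rescaled argument $z=z_0t$ with $z_0=(p+1)^{p+1}/p^p$, comparing it with the EGF (\ref{egf-pj}) coefficient by coefficient (the common factor $1/n!$ cancels on both sides) reduces everything to proving
\[
 C^p_{j,n}=\frac{(a_1)_n\cdots(a_p)_n}{(b_1)_n\cdots(b_p)_n}\,z_0^n,\qquad n\ge0,
\]
where $a_1,\dots,a_p$ run over $\frac{j}{p+1},\dots,\frac{j+p}{p+1}$ with the value $1$ omitted and $b_1,\dots,b_p$ over $\frac{j+1}{p},\dots,\frac{j+p}{p}$.

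First I would rewrite the generalized Catalan number (\ref{Cpjn}) in pure factorial form, $C^p_{j,n}=j\,((p+1)n+j-1)!/(n!\,(pn+j)!)$, and argue by induction on $n$. The base case is $C^p_{j,0}=1$, and for the inductive step I would compare the term ratios of the two sides. Cancelling factorials gives
\[
 \frac{C^p_{j,n+1}}{C^p_{j,n}}=\frac{1}{n+1}\cdot\frac{\prod_{i=0}^{p}((p+1)n+j+i)}{\prod_{i=1}^{p}(pn+j+i)},
\]
whereas the ratio of consecutive terms on the right is $z_0\prod_k(a_k+n)/\prod_k(b_k+n)$. Writing $a_k+n=\frac{(p+1)n+j+k'}{p+1}$ and $b_k+n=\frac{pn+j+k''}{p}$ turns the latter into $z_0\,\frac{p^p}{(p+1)^p}\cdot\frac{\prod_{k'\ne p+1-j}((p+1)n+j+k')}{\prod_{k''=1}^{p}(pn+j+k'')}$.

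The two ratios then coincide, and I expect the only delicate points to be two pieces of bookkeeping. First, among the $p+1$ factors $(p+1)n+j+i$ with $0\le i\le p$, exactly the one with $j+i=p+1$ equals $(p+1)(n+1)$; this is precisely the index $i=p+1-j$ that is absent from the product over the $a_k$ (the omitted $\widehat1$), and it cancels the stray $1/(n+1)$ to leave a clean prefactor $p+1$. Second, the rescaling is forced by $z_0\,p^p/(p+1)^p=p+1$, so the powers of $p$ and $p+1$ balance exactly. Keeping track of which parameter is omitted and watching the constant $z_0=(p+1)^{p+1}/p^p$ emerge from these powers is the main thing to get right; the rest is routine factorial manipulation and the induction closes at once.

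More conceptually, I could bypass the induction with the Gauss multiplication formula $(ma)_{mn}=m^{mn}\prod_{k=0}^{m-1}(a+\frac{k}{m})_n$. Taking $m=p+1$, $a=\frac{j}{p+1}$ collapses $\prod_{k=0}^{p}(\frac{j+k}{p+1})_n$ into $(p+1)^{-(p+1)n}(j)_{(p+1)n}$, and dividing out the omitted factor $(1)_n=n!$ yields $\prod_k(a_k)_n$; taking $m=p$, $a=\frac{j+1}{p}$ gives $\prod_k(b_k)_n=p^{-pn}(j+1)_{pn}$. The powers of $p$ and $p+1$ then cancel against $z_0^n$, and $(j)_{(p+1)n}/((j+1)_{pn}\,n!)$ simplifies straight back to $C^p_{j,n}$. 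I would present the inductive argument as the main proof, since it is self-contained, and mention this second derivation as the structural reason the answer is hypergeometric.
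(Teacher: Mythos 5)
Your main argument is essentially the paper's own proof: both reduce the statement to comparing consecutive coefficient ratios, and both hinge on the identical key observation that among the factors $(p+1)n+j+i$, $0\le i\le p$, the one with $j+i=p+1$ equals $(p+1)(n+1)$, so the omitted parameter $\widehat1$ is exactly what cancels the stray $(n+1)$ while the constant $(p+1)^{p+1}/p^p$ absorbs the powers of $p$ and $p+1$ from clearing denominators; your induction wrapper is just a formalization of the fact that two series with equal initial term and equal term quotients coincide. Your closing remark via the Gauss multiplication formula $(ma)_{mn}=m^{mn}\prod_{k=0}^{m-1}\bigl(a+\tfrac{k}{m}\bigr)_n$ is, however, a genuinely different and correct route that the paper does not take: it collapses the Pochhammer products in closed form, identifies the omitted factor structurally as $(1)_n=n!$, and recovers $C^p_{j,n}=j\,((p+1)n+j-1)!/(n!\,(pn+j)!)$ directly with no induction at all, which is arguably the more illuminating derivation even though the ratio computation is the one that generalizes to reading off hypergeometric parameters from an arbitrary term quotient.
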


\begin{proof}
By definition, the series (\ref{pFq}) is characterized by the property that its first term is equal to 1 and the ratio of the coefficients of $z^{n+1}$ and $z^n$ is equal to
\[
 \frac{(a_1+n)\cdots(a_p+n)}{(b_1+n)\cdots(b_q+n)(n+1)}.
\] 
The first term in the series $F^p_j$ is equal to 1. According to the formula (\ref{Cpjn}), the ratio of the coefficients of $t^{n+1}$ and $t^n$ is
\begin{align*}
 \frac{C^p_{j,n+1}}{C^p_{j,n}(n+1)}
  &= \frac{((p+1)n+j)\cdots((p+1)n+j+p)}{(pn+j+1)\cdots(pn+j+p)(n+1)^2}\\
  &= \frac{(\frac{j}{p+1}+n)\cdots(\frac{j+p}{p+1}+n)}
          {(\frac{j+1}{p}+n)\cdots(\frac{j+p}{p}+n)(n+1)^2}\cdot\frac{(p+1)^{p+1}}{p^p}.
\end{align*}
One of the factors in the numerator cancels with $(n+1)$ in the denominator and we arrive to the hypergeometric series (\ref{Fpj}) of the scaled variable $z=(p+1)^{p+1}/p^pt$.
\end{proof}

\begin{remark}
We emphasize that although the numbers $C^p_{j,n}$ and functions $F^p_j$, $f_j$ are defined for all $j>0$, Conjecture \ref{conj:fCpjn} and Proposition \ref{pr:egf} concern only $j$ from 1 to $p+1$. For $j>p+1$, the functions (\ref{egf-pj}), $F^p_j$ and $f_j$ are pairwise distinct. For $j=p+1$, their coincidence follows from the coincidence for $j=1$. Indeed, the defintion (\ref{Cpjn}) implies the easily verifiable identity
\[
 C^p_{p+1,n}=C^p_{1,n+1},
\]
that is, the sequence for $j=p+1$ is obtained from the sequence for $j=1$ by dropping the first term (this is why these numbers are not presented in Table \ref{t:Cpjn}) and, on the other hand, the rules of differentiation of hypergeometric functions and equations (\ref{fL}) imply the equalities
\[
 F^p_{p+1}=(F^p_1)',\quad f_{p+1}=f'_1.
\]
However, we will see that it is convenient to append $f_{p+1}$ to the main set of $f_1,\dots,f_p$.
\end{remark}

\subsection{Narayana polynomials}

The unit step is not the only `magic' example that leads to interesting answers. Let us return to the case $p=1$ and repeat the calculations for the alternating initial data
\begin{equation}\label{stepq}
 u_j=0,~~ j\le0,\quad u_1=u_3=\dots=1,\quad u_2=u_4=\dots=q.
\end{equation}
For the variables $v_j$ (such that $u_j=v_{j+1}v_j$), the corresponding initial data are
\[
 v_0=0,~~ v_1=v_2=1,~~ v_3=q,~~ v_4=q^{-1},~~ v_5=q^2,~~ v_5=q^{-2},~\dotso.
\]
The calculation of derivatives of $f_1=v_1$ by virtue of equation $v'_j=v^2_j(v_{j+1}-v_{j-1})$ brings to
\begin{gather*}
 f_1(0)=1,\quad f'_1(0)=1,\quad f''_1(0)=1+q,\quad f'''_1(0)=1+3q+q^2,\\
 f^{(4)}_1(0)=1+6q+6q^2+q^3,\quad f^{(5)}_1(0)=1+10q+20q^2+10q^3+q^4,~~\dotso.
\end{gather*}
The triangle of the polynomial coefficients is easily identified \cite[A001263]{OEIS} and we find that these are the so-called Narayana polynomials
\begin{equation}\label{Nq}
 N_0(q)=1,\quad N_n(q)= \sum^n_{k=1}\frac{1}{n}\binom{n}{k-1}\binom{n}{k}q^{k-1},\quad n>0,
\end{equation}
which determine a deformation of the Catalan numbers and have applications in problems of counting paths on a lattice, see e.g.~\cite{Bonin_Shapiro_Simion_1993, Sulanke_2000,Lassalle_2012} (in \cite{Bonin_Shapiro_Simion_1993}, this name was used for the polynomials $(1+q)N_n(1+q)$). The values $N_n(q)$ for integer $q$ give well-known combinatorial sequences, in particular, the Catalan numbers correspond to $q=1$, for $q=2$ we obtain the sequence A001003 (little Schr\"oder numbers) and so on (the OEIS includes the examples up to $q=11$). Thus, if the conjecture that $f^{(n)}_1(0)=N_n(q)$ for all $n$ is true, then $f_1$ is the EGF for the sequence $N_n(q)$. We will prove this in Sect.~\ref{s:p1q}.

\section{Determinant formulae}\label{s:det}

We derive formulae for solving the recurrence relations (\ref{fL})
\begin{equation}\label{fL'}
 f_{-p}=\dots=f_0=1,\quad f_{j-1}f'_j-f'_{j-1}f_j=f_{j-p-1}f_{j+p},\quad j=1,2,\dots,
\end{equation}
with given functions $f_1,\dots,f_p$ (at the moment, their normalization given in (\ref{fic}) is not necessary). Assume that all variables in (\ref{fL'}) do not vanish identically, so that the division by $f_{j-p-1}$ is possible. It turns out that then there is a complete cancellation and $f_{j+p}$ is expressed by a differential polynomial in the initial functions. More precisely, the answer is written in terms of the Wronskians, for which we use the notation $W(y_0,\dots,y_m)=\det\bigl(y^{(i)}_j\bigr)\big|^m_{i,j=0}$. Let us explain their structure by examples. For $p=1$, the Hankel determinants (\ref{f1}) appear. For $p=2$, the first three variables are
\[
 f_1,~ f_2,~ f_3=f'_1,
\]
the next three variables are expressed by the second order Wronskians
\[
 f_4=\begin{vmatrix} 
  f_1  & f_2\\ 
  f'_1 & f'_2\end{vmatrix},\quad
 f_5=\begin{vmatrix} 
  f_2  & f'_1\\ 
  f'_2 & f''_1\end{vmatrix},\quad
 f_6=\begin{vmatrix} 
  f'_1  & f'_2\\ 
  f''_1 & f''_2\end{vmatrix},
\]
the calculation of the next three variables gives the third order Wronskians
\[
 f_7=\begin{vmatrix}
  f_1   & f_2   & f'_1\\ 
  f'_1  & f'_2  & f''_1\\
  f''_1 & f''_2 & f'''_1
  \end{vmatrix},\quad
 f_8=\begin{vmatrix} 
  f_2   & f'_1   & f'_2\\ 
  f'_2  & f''_1  & f''_2\\
  f''_2 & f'''_1 & f'''_2
  \end{vmatrix},~~
 f_9=\begin{vmatrix} 
  f'_1   & f'_2   & f''_1\\ 
  f''_1  & f''_2  & f'''_1\\
  f'''_1 & f'''_2 & f^{(4)}_1
  \end{vmatrix},
\]
and so on. In the general case, we write down the interlacing sequence of derivatives
\begin{equation}\label{fff}
 f_1,\dots,f_p,f'_1,\dots,f'_p,f''_1,\dots,f''_p,f'''_1,\dots,f'''_p,\dots
\end{equation}
and construct the groups of Wronskians of the same size for the segments of this sequence, starting in turn with the elements $f_1,\dots,f_p,f'_1$. In each Wronskian, all rows are segments of the sequence (\ref{fff}) shifted by $p$ positions. This leads to equations
\begin{gather}
\label{gj}
 g_{i+pn}=f^{(n)}_i,\quad i=1,\dots,p,\quad n=0,1,2,\dots,\\
\label{fj}
 f_{k+(p+1)m}=W(g_k,\dots,g_{k+m}),\quad k=1,\dots,p+1,\quad m=0,1,2,\dots,
\end{gather}
which, if desired, can be combined into
\begin{equation}\label{fjshort}
 f_{k+(p+1)m} = \det\Bigl(f^{(\lfloor\frac{k+i-1}{p}\rfloor+j)}_{1+(k+i-1)\bmod p}\Bigr)\Big|^m_{i,j=0},
 \quad k=1,\dots,p+1,~~ m=0,1,\dotso.
\end{equation}
Similar representations for solutions of integrable equations are widely known, see e.g.~papers \cite{Leznov_1980, Leznov_Savel'ev_Smirnov_1981} devoted to the Toda lattices and the monograph \cite{Vein_Dale_1999} containing a lot of examples. The proof of formulae (\ref{gj}) and (\ref{fj}) given below is based on the Jacobi identity for Wronskians
\begin{equation}\label{wwwwww}
 W(f,G)(W(G,h))'-(W(f,G))'W(G,h)=W(G)W(f,G,h),
\end{equation}
where $f$ and $h$ are arbitrary functions and $G$ is any tuple of functions. 

\begin{proposition}\label{pr:detf}
Let functions $f_1,\dots,f_p$ be infinitely differentiable then formulae (\ref{gj}) and (\ref{fj}) define a solution of equation (\ref{fL'}).
\end{proposition}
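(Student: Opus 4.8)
The plan is to check that the functions defined by (\ref{gj})--(\ref{fj}) satisfy every instance of the bilinear recurrence in (\ref{fL'}) by a single application of the Jacobi identity (\ref{wwwwww}) at each step, with the segment $f_{-p}=\dots=f_0=1$ serving as boundary input. The feature that drives the whole argument is the structural consequence of (\ref{gj}): since $g_{i+p(n+1)}=f_i^{(n+1)}=(g_{i+pn})'$, the sequence satisfies the shift rule $g_{l+p}=g_l'$, so differentiation acts as a shift by $p$ and the Wronskians in (\ref{fj}), built from $p$-shifted windows of (\ref{fff}), can be freely re-expressed through derivatives of their entries.

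I would then fix an index, write $j=k+(p+1)m$ with $k\in\{1,\dots,p+1\}$, $m\ge0$, so that $f_j=W(g_k,\dots,g_{k+m})$, and read off the neighbours occurring in (\ref{fL'}): $f_{j-p-1}$ and $f_j$ begin at $g_k$, whereas $f_{j-1}$ and $f_{j+p}$ begin at $g_{k-1}$. For $2\le k\le p+1$ the four quantities $f_{j-1},f_j,f_{j-p-1},f_{j+p}$ are exactly the Wronskians $W(f,G),\ W(G,h),\ W(G),\ W(f,G,h)$ of the Jacobi identity with the choice
\[
 f=g_{k-1},\qquad G=(g_k,\dots,g_{k+m-1}),\qquad h=g_{k+m},
\]
and (\ref{wwwwww}) reproduces $f_{j-1}f_j'-f_{j-1}'f_j=f_{j-p-1}f_{j+p}$ verbatim. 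The subcase $m=0$ is covered by the empty core, $W(G)=1$, which matches the boundary value $f_{j-p-1}=1$ inherited from $f_{-p}=\dots=f_0=1$, while the first relation $j=1$ merely restates $f_{p+1}=f_1'$.

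The single case that is not immediately of this shape --- and which I expect to be the main obstacle --- is $k=1$, where the absent entry $g_0$ blocks the choice $f=g_{k-1}$. Here the reparametrisation lowers the sizes of $f_{j-1}$ and $f_{j+p}$ by one relative to $f_j$ and $f_{j-p-1}$, and the shift rule presents them as Wronskians of derivatives, namely $f_{j-1}=W(g_1',\dots,g_m')$ and $f_{j+p}=W(g_1',\dots,g_{m+1}')$, while $f_j=W(g_1,\dots,g_{m+1})$ and $f_{j-p-1}=W(g_1,\dots,g_m)$. The key idea I would use is to let the constant function play the role of the missing $g_0$: applying (\ref{wwwwww}) with
\[
 f=1,\qquad G=(g_1,\dots,g_m),\qquad h=g_{m+1},
\]
and invoking the elementary bordering identity $W(1,\phi_1,\dots,\phi_k)=W(\phi_1',\dots,\phi_k')$ (proved by expanding along the first column), one identifies $W(1,G)=f_{j-1}$ and $W(1,G,h)=f_{j+p}$, so that (\ref{wwwwww}) again delivers precisely the required relation. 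With all residue classes reduced to a single use of the Jacobi identity, every instance of (\ref{fL'}) holds identically in the entries, so the formulae (\ref{gj})--(\ref{fj}) define a solution of (\ref{fL'}).
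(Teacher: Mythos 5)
Your proof is correct and takes essentially the same route as the paper's own: the identical case split between $2\le k\le p+1$ and $k=1$, the same single application of the Jacobi identity (\ref{wwwwww}) with $f=g_{k-1}$ (resp.\ $f=1$), $G=(g_k,\dots,g_{k+m-1})$, $h=g_{k+m}$, and the same device of letting the constant function stand in for the missing $g_0$ via the bordering identity $W(1,G)=W(G')$. The only cosmetic difference is that you absorb the base cases into an empty-core convention $W(\,)=1$, where the paper checks $m=0$ separately.
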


\begin{proof}
For $m=0$ and $k=1,\dots,p$, the equality (\ref{fj}) is satisfied identically. For $m=0$ and $k=p+1$, we obtain $f_{p+1}=f'_1$, which is true due to (\ref{fL'}). Now let $m>0$ and $j=k+(p+1)m$, where $k=1,\dots,p+1$. Let $G$ denote the tuple $g_k,\dots,g_{k+m-1}$. For $k\ne1$, we have
\begin{align*}
 & f_j      = W(g_k,\dots,g_{k+m}) = W(G,g_{k+m}),\\
 & f_{j-1}  = W(g_{k-1},\dots,g_{k+m-1}) = W(g_{k-1},G),\\
 & f_{j-p-1}= W(g_k,\dots,g_{k+m-1}) = W(G),\\
 & f_{j+p}  = W(g_{k-1},\dots,g_{k+m}) = W(g_{k-1},G,g_{k+m}),
\end{align*}
and if $k=1$ then
\begin{align*}
 & f_j      = W(g_1,\dots,g_{1+m}) = W(G,g_{1+m}),\\
 & f_{j-1}  = W(g_{p+1},\dots,g_{p+m}) = W(g'_1,\dots,g'_m) = W(1,G),\\
 & f_{j-p-1}= W(g_1,\dots,g_m) = W(G),\\
 & f_{j+p}  = W(g_{p+1},\dots,g_{p+1+m}) = W(g'_1,\dots,g'_{1+m}) = W(1,G,g_{1+m}),
\end{align*}
where we use the obvious identity $W(G')=W(1,G)$. In both cases the Jacobi identity proves the equality $f_{j-1}f'_j-f'_{j-1}f_j=f_{j-p-1}f_{j+p}$, as required.
\end{proof}

\begin{remark}
In the proof, the division was not used, hence equations (\ref{gj}) and (\ref{fj}) give a solution of (\ref{fL'}) even when some determinants are equal to zero. This only leads to non-uniqueness of the solution. A rather meaningful example, when all determinants with sufficiently large numbers turn to 0 and the lattice is restricted onto a finite segment, is obtained when $f_1,\dots,f_p$ are arbitrary polynomials in $t$. However, in what follows we require that all $f_j\ne0$. In this case the solution is unique, which is obvious from the fact that equations (\ref{fL'}) are uniquely solvable with respect to $f_{j+p}$. 
\end{remark}

A direct use of equations (\ref{gj}) and (\ref{fj}) is to represent the general solution of the lattice equation in terms of arbitrary functions $f_1,\dots,f_p$. If these functions have such a remarkable property that some explicit formula for the Wronskians constructed from them is known for $t=0$ then according to (\ref{uff}) one can explicitly calculate the values
\begin{equation}\label{uff0}
 u_j(0)=\frac{f_{j-p-1}(0)f_{j+p}(0)}{f_{j-1}(0)f_j(0)},\quad j=1,2,\dots
\end{equation}
(assuming that $f_{-p}=\dots=f_0=1$) and, by construction, we know the solution which correspond to these special initial data. 

Conversely, solving equations (\ref{uff0}) with respect to $f_j(0)$ gives expressions for the Wronskians at $t=0$ in terms of the initial data $u_j(0)$. This leads to the following formula, which is easy to prove by induction.

\begin{proposition}\label{pr:fuu0}
Solution of equation (\ref{uff0}) with given $u_j(0)\ne0$ and the initial condition $f_{-p}(0)=\dots=f_p(0)=1$ is of the form
\begin{equation}\label{fuu0}
 f_j(0)= \prod^j_{i=1}u_i(0)^{\lfloor\frac{j-i}{p}\rfloor - \lfloor\frac{j-i}{p+1}\rfloor},\quad j=1,2,\dotso.
\end{equation}
\end{proposition}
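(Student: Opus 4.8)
The plan is to solve the recurrence (\ref{uff0}) from the largest index downward, linearize it by taking logarithms, and then confirm the closed form (\ref{fuu0}) by induction, the only genuine content being an elementary floor-function identity.

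First I would rewrite (\ref{uff0}) in the triangular form
\begin{equation*}
 f_{j+p}(0)=u_j(0)\,\frac{f_{j-1}(0)f_j(0)}{f_{j-p-1}(0)},\qquad j\ge1,
\end{equation*}
which expresses $f_{j+p}(0)$ through strictly smaller indices and so, together with $f_{-p}(0)=\dots=f_p(0)=1$, determines every $f_j(0)$ uniquely. Setting $\phi_j=\log f_j(0)$ and $\mu_i=\log u_i(0)$ turns this into the additive recurrence
\begin{equation*}
 \phi_j=\mu_{j-p}+\phi_{j-p-1}+\phi_{j-p}-\phi_{j-2p-1},\qquad j\ge p+1,
\end{equation*}
with $\phi_{-p}=\dots=\phi_p=0$, while the asserted formula (\ref{fuu0}) becomes $\phi_j=\sum_{i=1}^{j}e_{j-i}\,\mu_i$ with $e_d=\lfloor d/p\rfloor-\lfloor d/(p+1)\rfloor$. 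I would then proceed by strong induction on $j$, substituting the inductive hypothesis for $\phi_{j-p-1}$, $\phi_{j-p}$, $\phi_{j-2p-1}$ and comparing the coefficient of each $\mu_i$.

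The key step is therefore the fundamental identity obtained by setting $d=j-i$, namely
\begin{equation*}
 e_d=\delta_{d,p}+e_{d-p-1}+e_{d-p}-e_{d-2p-1},
\end{equation*}
in which the Kronecker term $\delta_{d,p}$ records the explicit factor $\mu_{j-p}$, and a summand $e_{d-r}$ is retained only when the corresponding product index stays at least $1$. When all three shifted terms are active---this happens exactly for $d\ge2p+1$, where $\delta_{d,p}=0$---the elementary relations $\lfloor(x-p)/p\rfloor=\lfloor x/p\rfloor-1$ and $\lfloor(x-p-1)/(p+1)\rfloor=\lfloor x/(p+1)\rfloor-1$ (and their iterates) make the right-hand side collapse cleanly to $\lfloor d/p\rfloor-\lfloor d/(p+1)\rfloor=e_d$.

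The main obstacle is exactly the truncation of the products near the ends of the range $1\le i\le j$. If one naively extended $e_d$ to negative $d$ it would not vanish (for instance $e_{-p-1}=-1$), so the convolution identity would fail unless the out-of-range terms are genuinely dropped. The honest check thus splits into the regimes $d=0$, $1\le d\le p-1$, $d=p$, $p+1\le d\le2p$, and $d\ge2p+1$, according to how many of the three shifted indices survive; in the boundary regimes the vanishing of the omitted terms and the correction $\delta_{d,p}$ at $d=p$ conspire so that $e_d$ comes out correctly. These few cases are routine floor computations but are the only place care is needed. Finally, the base values $f_j(0)=1$ for $1\le j\le p$ match (\ref{fuu0}) because $e_d=0$ for $0\le d\le p-1$, and the induction then propagates the formula to all $j$.
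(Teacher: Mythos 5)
Your proposal is correct and takes essentially the same route as the paper, which offers no details beyond stating that the formula ``is easy to prove by induction'': your write-up supplies exactly that induction, with the truncated convolution identity $e_d=\delta_{d,p}+e_{d-p-1}+e_{d-p}-e_{d-2p-1}$ for $e_d=\lfloor d/p\rfloor-\lfloor d/(p+1)\rfloor$ (terms dropped when out of range) as the key step, and your case analysis checks out. One cosmetic caveat: since the $u_i(0)$ are only assumed nonzero (possibly negative), the logarithms should be read formally, i.e.\ as bookkeeping for the integer exponent vectors of the Laurent monomials $f_j(0)$ in the $u_i(0)$, which is all the argument actually uses.
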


\begin{remark}
The occurrences of each factor $u_i(0)$ in the products, starting from $f_{i+p}(0)$, are governed by the sequence of powers $\lfloor\frac{k+p}{p}\rfloor -\lfloor\frac{k+p}{p+1}\rfloor$, $k=0,1,2,\dotso$. This family of sequences has many applications, see A004526, A008615, A008679, A165190, etc.~in the OEIS.
\end{remark}

If we know the functions $f_1,\dots,f_p$ that correspond to the given initial data $u_j(0)$ then equation (\ref{fuu0}) gives a method of calculation of Casoratians constructed from the coefficients of the Taylor expansions
\begin{equation}\label{faaa}
 f_j=a_{j,0}+a_{j,1}t+a_{j,2}\frac{t^2}{2!}+\dots+a_{j,n}\frac{t^n}{n!}+\dots,\quad j=1,\dots,p.
\end{equation}
More precisely, the Wronskians (\ref{fj}) at $t=0$ turn into Casoratians constructed along the same recipe as before, with the sequence (\ref{fff}) replaced by
\begin{equation}\label{aaa}
 a_{1,0},\dots,a_{p,0},a_{1,1},\dots,a_{p,1},a_{1,2},\dots,a_{p,2},a_{1,3},\dots,a_{p,3},\dots,
\end{equation}
and equations (\ref{gj}) and (\ref{fj}) replaced by
\begin{gather}
\label{bj}
 b_{i+pn}=a_{i,n},\quad i=1,\dots,p,\quad n=0,1,2,\dots,\\
\label{fj0}
 f_{k+(p+1)m}(0)=\det(b_{k+pi+j})\big|^m_{i,j=0},\quad k=1,\dots,p+1,\quad m=0,1,2,\dotso.
\end{gather}
For instance, if $p=2$ then the Wronskians listed at the beginning of the section turn into
\begin{gather*}
 f_1(0)=a_{1,0},\quad f_2(0)=a_{2,0},\quad f_3(0)=a_{1,1},\\
 f_4(0)=\begin{vmatrix} 
  a_{1,0} & a_{2,0}\\ 
  a_{1,1} & a_{2,1}\end{vmatrix},\quad
 f_5(0)=\begin{vmatrix} 
  a_{2,0} & a_{1,1}\\ 
  a_{2,1} & a_{1,2}\end{vmatrix},\quad
 f_6(0)=\begin{vmatrix} 
  a_{1,1} & a_{2,1}\\ 
  a_{1,2} & a_{2,2}\end{vmatrix},\\
 f_7(0)=\begin{vmatrix}
  a_{1,0} & a_{2,0} & a_{1,1}\\ 
  a_{1,1} & a_{2,1} & a_{1,2}\\
  a_{1,2} & a_{2,2} & a_{1,3}
  \end{vmatrix},\quad
 f_8(0)=\begin{vmatrix} 
  a_{2,0} & a_{1,1} & a_{2,1}\\ 
  a_{2,1} & a_{1,2} & a_{2,2}\\
  a_{2,2} & a_{1,3} & a_{2,3}
  \end{vmatrix},~~
 f_9(0)=\begin{vmatrix} 
  a_{1,1} & a_{2,1} & a_{1,2}\\ 
  a_{1,2} & a_{2,2} & a_{1,3}\\
  a_{1,3} & a_{2,3} & a_{1,4}
  \end{vmatrix}.
\end{gather*}
Notice that the bottom right entries of determinants $f_j(0)$ form exactly the sequence (\ref{aaa}), and the complementary minors are determinants $f_{j-p}(0)$ of smaller size from the same sequence of determinants. Consequently, if $f_j(0)\ne0$ for all $j$ then the transformation between sequences (\ref{aaa}) and (\ref{fj0}) is one-to-one. This transformation is especially helpful in the situation when the sequence (\ref{aaa}) is such that, firstly, the corresponding series (\ref{faaa}) are some known functions and, secondly, the determinants (\ref{fj0}) are expressed in a closed form. This is exactly the case with the generalized Catalan numbers. The corresponding Taylor series are given in Proposition \ref{pr:egf} and all determinants of the form (\ref{fj0}) are just equal to 1. This property is well-known in combinatorics, especially for the case $p=1$.

\begin{proposition}[\cite{Stanley_1999,Aigner_1999,Layman_2001}]\label{pr:Hankel}
The Catalan numbers $c_n=C^1_{1,n}$ satisfy the equalities (and are uniquely defined by them)
\begin{equation}\label{cid}
 \left|\begin{matrix} 
   c_0 & \dots & c_k\\
  \hdotsfor[2]{3} \\
   c_k & \dots & c_{2k}
 \end{matrix}\right|
 =\left|\begin{matrix} 
   c_1 & \dots & c_{k+1}\\
   \hdotsfor[2]{3} \\
   c_{k+1} & \dots & c_{2k+1}
 \end{matrix}\right|= 1,\quad k=0,1,2,\dotso.
\end{equation}
\end{proposition}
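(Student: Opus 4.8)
The plan is to prove both halves of the statement separately: that the genuine Catalan numbers satisfy the two determinant identities (existence), and that these identities force the sequence (uniqueness). I would dispose of uniqueness first, since it is entirely elementary. Write $A_k=(c_{i+j})_{0\le i,j\le k}$ and $B_k=(c_{i+j+1})_{0\le i,j\le k}$ for the two Hankel matrices. The crucial observation is that in a Hankel matrix the entry with the largest index occurs exactly once: $c_{2k}$ sits only at the corner $(k,k)$ of $A_k$, and $c_{2k+1}$ only at the corner of $B_k$. Hence $\det A_k$ is an affine function of $c_{2k}$ whose leading coefficient is the complementary minor $\det A_{k-1}$, and likewise $\det B_k$ is affine in $c_{2k+1}$ with leading coefficient $\det B_{k-1}$. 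Assuming inductively that $\det A_{k-1}=\det B_{k-1}=1$, the constraints $\det A_k=1$ and $\det B_k=1$ become linear equations for $c_{2k}$ and $c_{2k+1}$ with coefficient $1$, so each new term is uniquely determined by the previous ones. Together with $c_0=1$ read off from the $k=0$ case, this shows the entire sequence is forced, which is the uniqueness claim.

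For existence I would use the moment / continued-fraction picture, which handles both families at once. Recall that the Catalan generating function $C(x)=\sum_n c_n x^n$ satisfies $C=1+xC^2$, equivalently $C=1/(1-xC)$; iterating this self-similar relation expands $C$ as the Stieltjes continued fraction with every partial numerator equal to $1$. I would then invoke the standard dictionary between the coefficients of such an $S$-fraction and the Hankel determinants of the underlying sequence: contracting the $S$-fraction to a $J$-fraction $1/(1-b_0x-\lambda_1 x^2/(1-\cdots))$ gives $\lambda_n=1$ for all $n$, and the product formula yields $\det A_k=\prod_{n=1}^{k}\lambda_n^{\,k+1-n}=1$. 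The shifted determinants are handled the same way once one notes that the shifted sequence $(c_{n+1})$ has generating function $C^2=(C-1)/x$, whose continued fraction again has all partial numerators equal to $1$ (one checks $\det B_1=\det B_2=1$ directly as a sanity test); hence $\det B_k=1$ as well.

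An alternative combinatorial existence proof I would keep in reserve uses the Lindström--Gessel--Viennot lemma: interpreting $c_{i+j}$ as the number of Dyck-type lattice paths between suitably placed sources and sinks, the Hankel determinant counts families of pairwise non-intersecting such paths, and a short argument shows there is exactly one nested family, giving the value $1$; the same scheme, with sources shifted by one step, produces $\det B_k$. Either route makes the two determinant families transparent, and both mesh with the lattice-path language already in use in this paper.

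The main obstacle is making the existence half genuinely self-contained. The uniqueness half is immediate, but both the continued-fraction product formula and the \emph{LGV} count rely on background machinery — the Hankel--$J$-fraction correspondence, or the non-intersecting-path lemma — that one either cites, as the statement already does through \cite{Stanley_1999,Aigner_1999,Layman_2001}, or reproves. I would also take care with the shifted family $\det B_k$: it is the one place where the argument is not a verbatim repeat of the unshifted case, and verifying that $(c_{n+1})$ inherits the all-ones continued fraction (equivalently, that the shifted sources still admit a unique non-intersecting family) is the step most likely to need an extra line.
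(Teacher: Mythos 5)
Your argument is sound, but it takes a genuinely different route from the paper's. The paper never proves Proposition \ref{pr:Hankel} by direct combinatorial means: it cites \cite{Stanley_1999,Aigner_1999,Layman_2001}, mentions the elementary convolution-based proof of \cite{Adler_Shabat_2018}, and then obtains the identity (\ref{cid}) as a \emph{corollary} of its integrable-systems machinery. Concretely, the master-symmetry reduction of the Volterra lattice (Proposition \ref{pr:p1} with $a=b=1$, $c=0$) solves the unit-step Cauchy problem in closed form and identifies $f_1={}_1F_1(\tfrac{1}{2},2,4t)$ as the EGF of the Catalan numbers; on the other hand, solving (\ref{uff0}) for the unit step gives $f_j(0)=1$ for all $j$, while the Wronskian formulae of Sect.~\ref{s:det} identify these same values $f_j(0)$ with the Hankel determinants (\ref{f1}) built from the Taylor coefficients of $f_1$, i.e. from the Catalan numbers; hence (\ref{cid}). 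Your existence proof---the all-ones S-fraction for the Catalan generating function, contraction to a J-fraction, the product formula $\det A_k=\prod_{n=1}^{k}\lambda_n^{\,k+1-n}$, plus the check that the shifted sequence $(c_{n+1})$ inherits $\lambda_n\equiv1$---is the classical route, essentially that of the cited literature (as is the LGV variant you keep in reserve); it is self-contained within combinatorics, but, as you honestly note, only modulo the Hankel/continued-fraction dictionary, which must itself be cited or reproved. Your uniqueness argument (the corner entry $c_{2k}$ occurs exactly once in $A_k$, with cofactor $\det A_{k-1}$, so each new term is forced by induction) is correct, and it is in effect the same observation the paper makes in the Remark after Proposition \ref{pr:detf} about unique solvability of the bilinear recursion (\ref{fL'}); the paper leaves this half implicit, so your explicit induction is a useful supplement. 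What the paper's route buys, and what yours does not aim at, is the reverse direction of the combinatorics--integrability connection: there the determinant identity emerges as an output of the lattice dynamics, whereas for you (and for the cited references) it is established inside combinatorics and would then serve as an input to Proposition \ref{pr:step}.
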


\begin{proposition}[\cite{Krattenthaler_2010}]\label{pr:gen-Hankel}
The generalized Catalan numbers $C^p_{j,n}$ satisfy the equalities (and are uniquely defined by them)
\begin{equation}\label{Cid}
 \det\left(C^p_{1+(i+k-1)\bmod p,\,\lfloor\frac{i+k-1}{p}\rfloor+j}\right)\Big|^m_{i,j=0}=1,
\end{equation}
for $k=1,\dots,p+1$, $m=0,1,2,\dots$.
\end{proposition}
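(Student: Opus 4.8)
The plan is to prove (\ref{Cid}) by interpreting each determinant as a weighted count of families of non-intersecting lattice paths and then forcing that count to equal $1$; I would also record the shorter argument that becomes available once the initial data are known independently. First, the uniqueness clause ``uniquely defined by them'' is the easy half: as observed just before Proposition \ref{pr:Hankel}, the bottom-right entry of the $(m+1)\times(m+1)$ determinant is the newest element of the array (\ref{aaa}), and its complementary minor is the determinant one size smaller. Expanding along the last row and column therefore writes the size-$(m{+}1)$ determinant as (new entry)$\times$(size-$m$ determinant) plus terms built only from earlier entries; if the smaller determinants are known and nonzero, each $C^p_{j,n}$ is recovered in turn. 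Hence uniqueness reduces to the invertibility already noted in the text, and the real content is the evaluation $=1$.

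For the evaluation I would start from the combinatorial description after Table \ref{t:Cpjn}: $C^p_{j,n}$ counts the paths from $(0,0)$ to $(pn+j,n)$ with unit right/up steps that stay strictly below the line $x=py$. Reindexing through (\ref{bj}), the $(i,j)$ entry $C^p_{1+(i+k-1)\bmod p,\lfloor(i+k-1)/p\rfloor+j}$ is the number of such paths between a source $A_i$ and a sink $B_j$ read off from the interlacing sequence (\ref{aaa}). By the Lindström--Gessel--Viennot lemma the determinant equals the signed sum over families of pairwise non-intersecting paths from $\{A_0,\dots,A_m\}$ to $\{B_0,\dots,B_m\}$. The two things to check are that the staircase geometry of the endpoints admits only the identity permutation, so every surviving term has sign $+1$, and that exactly one non-intersecting family exists, the paths being rigidly pinned down by the ``strictly below the line'' constraint. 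The base cases $m=0$ are immediate from (\ref{Cpjn}), which gives $C^p_{j,0}=1$ for $1\le j\le p$ and $C^p_{1,1}=1$.

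The hard part will be precisely this forcing of a unique non-intersecting family: the endpoints are defined through the $\bmod p$ and floor operations, with a wraparound as $k$ passes from $p+1$ back to $1$ (the same wraparound that surfaces as the identity $W(G')=W(1,G)$ in the proof of Proposition \ref{pr:detf}), so the geometric positions must be tracked carefully before one can argue disjointness is achievable in only one way. This is the computation carried out in \cite{Krattenthaler_2010}. An alternative meshing better with the present paper is to run Dodgson condensation on the determinants $D^{(m)}_k$: this condensation is just the Jacobi identity (\ref{wwwwww}) evaluated at $t=0$, with each Wronskian replaced by its Casoratian, so it produces a Desnanot--Jacobi relation expressing $D^{(m)}_k$ times a central size-$(m{-}1)$ determinant as a difference of products of size-$m$ determinants; induction on $m$ then closes, provided the index shifts are matched using $C^p_{p+1,n}=C^p_{1,n+1}$ from the Remark after Proposition \ref{pr:egf}. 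Here the obstacle merely migrates to the bookkeeping of how the condensation shifts interact with the interlacing and wraparound of the indices.

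Finally I would point out the ``reverse'' route, cleanest once more of the paper is at hand. The determinant in (\ref{Cid}) is, by construction, the Casoratian (\ref{fj0}), hence equals $f_{k+(p+1)m}(0)$ up to transposition, so the claim is equivalent to $f_j(0)=1$ for all $j$. If one knows by an independent argument that the solution generated by $f_1,\dots,f_p$ carries unit-step data, i.e.\ $u_i(0)=1$ for all $i$, then formula (\ref{fuu0}) of Proposition \ref{pr:fuu0} collapses every product to $\prod_i 1^{(\cdots)}=1$ and yields $f_j(0)=1$ at once. One already checks directly from (\ref{Cpjn}) that $u_j(0)=C^p_{j,1}-C^p_{j-1,1}=j-(j-1)=1$ for $1\le j\le p+1$, so only the values at $j>p+1$ still require the lattice reduction; this is exactly how the determinant identity is obtained ``in the opposite direction'' as advertised in the Introduction.
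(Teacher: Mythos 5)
Your map of the proof landscape matches the paper's: the evaluation (\ref{Cid}) is proved combinatorially in \cite{Krattenthaler_2010} (as the special case $\alpha_i=i$ of a determinant formula with general parameters), and the ``reverse'' route you describe last is exactly the paper's independent proof --- Theorem \ref{th:p} with the parameters (\ref{abc}) shows that the unit step generates the $\tau$-functions (\ref{fH}), Proposition \ref{pr:egf} identifies their Taylor coefficients with $C^p_{j,n}$, Proposition \ref{pr:fuu0} then gives $f_j(0)=1$ for all $j$, and the Casoratian formula (\ref{fj0}) turns this into (\ref{Cid}). Your uniqueness argument and the $m=0$ base cases are correct. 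However, none of your three sketches actually establishes the evaluation $=1$, which is the entire content of the proposition. In the Lindstr\"om--Gessel--Viennot route, the construction of sources and sinks realizing the bivariate index $k+i+pj$ as a path count from $A_i$ to $B_j$, together with the proof that exactly one non-intersecting family survives, is deferred to \cite{Krattenthaler_2010}; this forcing is not a routine verification --- Krattenthaler's general theorem yields nontrivial product formulas that collapse to $1$ only for the special parameter choice --- so what you defer is precisely the theorem. In the reverse route you verify $u_j(0)=1$ only for $j\le p+1$; for $j>p+1$ the value $u_j(0)$ is computed via (\ref{uff0}) from $f_{j+p}(0)$, i.e.\ from the very determinants being evaluated, so without an independent proof that the step data produce hypergeometric $\tau$-functions (without all of Sect.~\ref{s:p}) the argument is circular. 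You acknowledge both gaps, but acknowledging them does not fill them.

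The Dodgson-condensation alternative has a concrete defect beyond ``bookkeeping.'' Writing $D^{(m)}_k=\det\bigl(b_{k+i+pj}\bigr)\big|^m_{i,j=0}$ with $b_{i+pn}=C^p_{i,n}$, the Desnanot--Jacobi identity reads
\[
 D^{(m)}_k\,D^{(m-2)}_{k+p+1}=D^{(m-1)}_k D^{(m-1)}_{k+p+1}-D^{(m-1)}_{k+1}D^{(m-1)}_{k+p},
\]
so the recursion inevitably involves the shifted determinants with parameters up to $k+p+1$, which leave the family $k\in\{1,\dots,p+1\}$ covered by (\ref{Cid}). Those outside determinants are \emph{not} equal to $1$: already for $p=1$, $k=3$ one has $\det\bigl(c_{i+j+2}\bigr)\big|^m_{i,j=0}=m+2$ (indeed $c_2=2$ and $c_2c_4-c_3^2=3$; the identity above checks out as $1\cdot 2=3\cdot 1-1\cdot 1$). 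Hence an induction on $m$ cannot close within the identities (\ref{Cid}); one must formulate and prove the evaluation for all shifts $k$, which is essentially the general theorem of \cite{Krattenthaler_2010} over again. In short, your proposal is a correct survey of the two available proofs with the easy halves done, but the crux of each is missing, and the condensation shortcut offered in their place fails as stated.
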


Using these identities, we can prove Conjecture \ref{conj:fCpjn} from which, taking into account Proposition \ref{pr:egf} on the EGFs for the generalized Catalan numbers, Proposition \ref{pr:step} follows as a corollary.

\begin{proof}[{\bfseries Proof of Conjecture \ref*{conj:fCpjn}}]
By solving equation (\ref{uff0}) with $u_j(0)=1$, we find that $f_j(0)=1$ for all $j$. On the other hand $f_j(0)$ are equal to determinants (\ref{fj0}) constructed from the Taylor coefficients (\ref{faaa}). The identity (\ref{Cid}) implies that these coefficients coincide with the generalized Catalan numbers.
\end{proof}

The combinatorial proof of Proposition \ref{pr:gen-Hankel} is given in \cite{Krattenthaler_2010}. It is based on the technique of counting paths on the 2D lattice. In fact, \cite[Theorem 6]{Krattenthaler_2010} contains a more general identity with additional integer parameters $\alpha_0,\dots,\alpha_m$; the identity (\ref{Cid}) is its particular case for $\alpha_i=i$. For the case $p=1$, an elementary proof (using the identity $c_{n+1}=c_0c_n+\cdots+c_nc_0$) is given in \cite{Adler_Shabat_2018}. 

We do not present these proofs; instead, the rest of the paper describes the reduction method, which allows us to obtain an alternative proof of Proposition \ref{pr:step} and thereby Conjecture \ref{conj:fCpjn}. With this approach, the determinant relations are not needed at all, but we can use them to obtain an independent proof of the identities (\ref{cid}) and (\ref{Cid}). Moreover, we obtain functions $f_1,\dots,f_p$ not only for the unit step, but also for some more general $2p$-parametric family of initial data $u_j(0)$. Equation (\ref{fuu0}) gives a computation method for determinants constructed from the Taylor coefficients of these functions. 

In the example with the Narayana polynomials, solving of equation (\ref{uff0}) for $p=1$ and with the initial data $u_j(0)$ of the form (\ref{stepq}) brings to the following generalization of the identities (\ref{cid}). Our experimental fact, that $f_1$ serves as the EGF for $N_n(q)$, follows from the validity of these identities and vice versa.

\begin{proposition}[\cite{Petkovic_Barry_Rajkovic_2012}]\label{pr:Hankelq}
The polynomials $N_n(q)$ (\ref{Nq}) satisfy the equalities (and are uniquely defined by them)
\begin{equation}\label{cidq}
 \left|\begin{matrix} 
   N_0(q) & \dots & N_k(q)\\
  \hdotsfor[2]{3} \\
   N_k(q) & \dots & N_{2k}(q)
 \end{matrix}\right|
 =\left|\begin{matrix} 
   N_1(q) & \dots & N_{k+1}(q)\\
  \hdotsfor[2]{3} \\
   N_{k+1}(q) & \dots & N_{2k+1}(q)
 \end{matrix}\right|= q^{\frac{1}{2}k(k+1)},\quad k=0,1,2,\dotso.
\end{equation}
\end{proposition}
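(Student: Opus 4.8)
The plan is to obtain (\ref{cidq}) from the $\tau$-function machinery of Section~\ref{s:det}, in the same spirit in which Conjecture~\ref{conj:fCpjn} was derived from Proposition~\ref{pr:gen-Hankel}, but now by matching two independent computations of the boundary values $f_j(0)$. Specialize to $p=1$, so that there is a single Casoratian sequence $b_\ell=a_{1,\ell-1}$ and formula (\ref{fj0}) reads $f_{2k+1}(0)=\det(a_{1,i+j})_{i,j=0}^k$ and $f_{2k+2}(0)=\det(a_{1,1+i+j})_{i,j=0}^k$. Once we know that $f_1$ is the EGF of the Narayana polynomials, i.e.\ $a_{1,n}=f_1^{(n)}(0)=N_n(q)$, these two determinants are exactly the left- and right-hand Hankel determinants in (\ref{cidq}). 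Hence the whole proposition reduces to the two evaluations $f_{2k+1}(0)=f_{2k+2}(0)=q^{k(k+1)/2}$.

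The key step is to produce these values without touching the determinants at all, using Proposition~\ref{pr:fuu0} for the two-level step (\ref{stepq}). For $p=1$ the exponent in (\ref{fuu0}) collapses to $(j-i)-\lfloor (j-i)/2\rfloor=\lceil (j-i)/2\rceil$, so only the even sites $i$, where $u_i(0)=q$, contribute, and $f_j(0)=q^{E(j)}$ with $E(j)=\sum_{\ell\ge1,\,2\ell\le j}\lceil (j-2\ell)/2\rceil$. Writing $i=2\ell$ and summing, the terms are $k-\ell+1$ for $j=2k+1$ and $k+1-\ell$ for $j=2k+2$, so $E(2k+1)=E(2k+2)=k(k+1)/2$ in both cases. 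This yields the desired value $q^{k(k+1)/2}$ for both Hankel determinants; the specialization $q=1$, where $N_n(1)=c_n$, recovers Proposition~\ref{pr:Hankel}.

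The clause \emph{and are uniquely defined by them} is the one-to-one correspondence between the sequence (\ref{aaa}) and the determinant array (\ref{fj0}) noted before Proposition~\ref{pr:Hankel}: since $q^{k(k+1)/2}\ne0$ for $q\ne0$, every determinant is nonzero, so reading (\ref{fj0}) from the top, each new member $b_j$ of (\ref{aaa}) occupies the bottom-right corner of a determinant and is recovered linearly, with coefficient the nonzero smaller determinant. Thus the prescribed Hankel values reconstruct $N_n(q)$ uniquely.

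The genuine content, and the main obstacle, is the input $a_{1,n}=N_n(q)$, i.e.\ that $f_1$ is the EGF of the Narayana polynomials; this is established independently in Section~\ref{s:p1q} by the non-autonomous reduction, so the argument is not circular. If instead one wants a proof internal to combinatorics, I would compute the two Hankel determinants directly from the Jacobi continued fraction of $\sum_n N_n(q)z^n$: the associated three-term recurrence has constant $\lambda_i=q$, so the Heilermann product $\prod_{i=1}^{k}\lambda_i^{k+1-i}=q^{k(k+1)/2}$ gives the unshifted determinant, while the shifted one follows from the classical identity $\det(N_{1+i+j}(q))_{i,j=0}^{k}=(-1)^{k+1}\det(N_{i+j}(q))_{i,j=0}^{k}\,P_{k+1}(0)$, where $P_{k+1}$ is the monic orthogonal polynomial of the moment sequence. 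Pinning down these recurrence coefficients — equivalently the values $P_k(0)$ — from the explicit numbers $N_n(q)$ is the crux of that alternative route.
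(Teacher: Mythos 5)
Your proof is correct and follows essentially the same route as the paper: the paper likewise obtains (\ref{cidq}) by evaluating $f_j(0)=q^{\frac{1}{2}k(k+1)}$ from (\ref{fuu0}) with the two-level step data (\ref{stepq}), identifying these values with the Hankel determinants of the Taylor coefficients of $f_1$ via the determinant formulae of Sect.~\ref{s:det}, and feeding in the fact, established by the reduction of Sect.~\ref{s:p1q} (Proposition~\ref{pr:stepq}), that $f_1$ is the EGF of the Narayana polynomials. Your explicit computation of the exponent and your uniqueness argument via the nonvanishing of the determinants merely spell out details the paper leaves implicit.
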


\section{Symmetry reductions for $p=1$}\label{s:p1}

An evolution partial differential or lattice equation can be turned into a finite-dimensional system by imposing some constraint consistent with the dynamics. The standard source of such reductions are the stationary equations of generalized symmetries. Indeed, let $u_t=f[u]$ be the given equation and let $u_\tau=g[u]$ be its symmetry, that is
\[
 [\partial_t,\partial_\tau]=0 \quad\Leftrightarrow\quad \partial_t(g)=f_*(g),
\]
where $f_*$ is the linearization operator. Then the stationary equation $g=0$ defines an invariant manifold, since $\partial_t(g)=0|_{g=0}$. Let us consider the Volterra lattice (that is, BL$_1$) 
\begin{equation}\label{VL}
 u'_j=u_j(u_{j+1}-u_{j-1})
\end{equation} 
as an example, keeping the prime notation for the main derivation $\partial_t$. To build a constraint, we use the higher symmetry, the classical scaling symmetry and the master-symmetry \cite{Zhang_Tu_Oevel_Fuchssteiner_1991, Cherdantsev_Yamilov_1995}:
\begin{align}
\label{t2}
 & u_{j,t_2}=u_j(T-T^{-1})\bigl(u_j(u_{j+1}+u_j+u_{j-1})\bigr),\\
\label{tau0}
 & u_{j,\tau_0}=u_j,\\
\label{tau1}
 & u_{j,\tau_1}= u_j((j+3)u_{j+1}+u_j-ju_{j-1}).
\end{align}
The corresponding derivations satisfy the relations
\begin{equation}\label{commutators}
 [\partial_{t_2},\partial_t]=0,\quad 
 [\partial_{\tau_0},\partial_t]=\partial_t,\quad 
 [\partial_{\tau_1},\partial_t]=\partial_{t_2},
\end{equation}
which imply that $\partial_t$ commutes with $\partial_{t_2}$, $t\partial_t+\partial_{\tau_0}$ and $t\partial_{t_2}+\partial_{\tau_1}$. The general form of the symmetry of order not greater than 2 with respect to the shifts is given by the linear combination
\begin{equation}\label{sym}
 c_1(t\partial_{t_2}+\partial_{\tau_1}) +c_2\partial_{t_2} +c_3(t\partial_t+\partial_{\tau_0}) +c_4\partial_t.
\end{equation}
We set $c_1=1$ and $c_2=0$ (by use of the change $t\to t-t_0$) and denote $c_3=-4a$ and $c_4=b-2$, which slightly simplifies further relations. Then the stationary equation for the symmetry (\ref{sym}) gives, after dividing by $u_j$, the 5-point difference equation
\begin{multline}\label{cA}\qquad
 A_j=tu_{j+1}(u_{j+2}+u_{j+1}+u_j-4a)-tu_{j-1}(u_j+u_{j-1}+u_{j-2}-4a)\\
   +(j+b+1)u_{j+1}+u_j-(j+b-2)u_{j-1}-4a=0,\qquad
\end{multline}
where $a$ and $b$ are arbitrary constants. We can forget about its origin from symmetries and directly verify its compatibility with the lattice equation by checking that differentiation by virtue of (\ref{VL}) satisfies the identity
\[
 A'_j= u_{j+1}A_{j+1}-u_{j-1}A_{j-1}.
\]
From here it follows that if a solution of equation (\ref{VL}) satisfies the constraints $A_j=0$ for all $j$ at $t=0$ then this is also true for all $t$ for which this solution is defined. 

Equation (\ref{cA}) admits a reduction of order by two. First, we notice that it can be written in the form $A_j=B_j+B_{j+1}=0$, which gives
\begin{multline}\label{cB}\qquad
 B_j=u_{j+1}(tu_{j+1}+tu_{j+2}+j+b+1)-u_j(tu_{j-1}+tu_j+j+b-1)\\
  +2a(2tu_j-2tu_{j+1}-1) -4(-1)^jc=0.\qquad
\end{multline}
Next, multiplying by the integrating factor $tu_j+tu_{j+1}+j+b$ brings this to the form $C_{j+1}-C_j=0$ and the equality $C_j=0$ gives the 3-point equation
\begin{multline}\label{cC}\qquad
 (tu_{j-1}+tu_j+j+b-1)(tu_j+tu_{j+1}+j+b)u_j\\
     = a(2tu_j+j+b-\tfrac{1}{2})^2 -(-1)^jc(4tu_j+2j+2b-1) +d.\qquad
\end{multline}
One can verify that the compatibility with (\ref{VL}) is preserved under these manipulations, provided that the integration constants $c$ and $d$ do not depend on $t$. For example, for the constraint (\ref{cB}) this follows from the relation
\begin{equation}\label{cB'}
 B'_j=u_{j+1}(B_{j+1}+B_j)-u_j(B_j+B_{j-1}).
\end{equation}

\begin{remark}
Equation (\ref{cC}) is equivalent, up to notations, to the discrete Painlev\'e equation dP$_{34}$ \cite{Grammaticos_Ramani_2014}. In \cite{Adler_Shabat_2019}, it was shown that the constraint (\ref{cC}) turns the lattice equation (\ref{VL}) into the closed system for any pair of variables $u_{j-1},u_j$, which is equivalent to equation P$_5$ if $a\ne0$ or P$_3$ if $a=0$. Similar reductions for matrix Volterra lattices were studied in \cite{Adler_2020}. The choice of $c_1=0$ and $c_2=1$ in (\ref{sym}) leads to a simpler reduction in terms of the Painlev\'e equations dP$_1$ and P$_4$ \cite{Fokas_Its_Kitaev_1991}. However, in this paper we do not use these results and restrict ourselves to studying special solutions of the Painlev\'e equations that arise under the additional boundary condition that terminates the lattice on the right half-line. In this case, a simplification occurs, the equations are linearized and the answer is expressed in terms of hypergeometric functions.
\end{remark}

A nice feature of our constraint is that for $t=0$ it simplifies to the explicit formula (here we assume that the denominator is not zero)
\begin{equation}\label{icp1}
 u_j(0)=\frac{a(j+b-1/2)^2-(-1)^jc(2j+2b-1)+d}{(j+b-1)(j+b)},
\end{equation}
which defines the family of initial data for solutions governed by this constraint. More precisely, this formula covers only those solutions for which all $u_j(t)$ are regular at $t=0$, but these are the only ones we need. Under the termination on the half-line, the parameter $d$ is fixed from the condition $u_0(0)=0$.

When solving the problem on the half-line, it suffices to use the constraint equations in the form (\ref{cB}) for $j\ge0$: it is easy to see from (\ref{cB'}) that equations
\[
 u_0=0,\quad B_0=B_1=B_2=\dots=0
\]
also define an invariant solution submanifold for (\ref{VL}). Moreover, to find $u_1$ only one equation $B_0=0$ is needed: all other equations are consequences of it and are satisfied automatically, as shown by equation (\ref{cB'}) which plays the role of a mechanism that provides the induction step. To prove the following proposition we use the standard hypergeometric equations
\begin{equation}\label{hyper}
 tf''=(kt-b)f'+akf,\qquad tf''=-af'+kf,
\end{equation}
which define, respectively, functions $f={}_1F_1(a,b,kt)$ and $f={}_0F_1(a,kt)$ as the unique regular solutions satisfying the normalization $f(0)=1$.

\begin{proposition}\label{pr:p1}
Let $u_j$ satisfy the lattice equation $u'_j=u_j(u_{j+1}-u_{j-1})$ for $j>0$, the boundary condition $u_0=0$ and the initial data
\begin{equation}\label{icp10}
 u_1(0)=\frac{2a+4c}{b+1},\quad u_j(0)=\frac{aj(j+2b-1)-(-1)^jc(2j+2b-1)+c(2b-1)}{(j+b-1)(j+b)},
\end{equation}
where $b\ne-1,-2,\dotso$. Then $u_1=f'_1/f_1$, where
\begin{alignat}{2}
\label{f11}
 f_1&= {}_1F_1\Bigl(\frac{1}{2}+\frac{c}{a},b+1,4at\Bigr)&& \text{for}~ a\ne0,\\
\label{f01}
 f_1&= {}_0F_1(b+1,4ct)& \quad& \text{for}~ a=0.
\end{alignat}
\end{proposition}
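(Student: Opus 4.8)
The plan is to collapse the whole problem to a single scalar equation for $u_1$ and then linearize it. Because the lattice is terminated by $u_0=0$, equation (\ref{VL}) at $j=1$ reads $u'_1=u_1u_2$, so on the half-line the entire solution is reconstructed recursively from $u_1$; the only extra ingredient needed is one ODE for $u_1$, which I would extract from the constraint $B_0=0$ of (\ref{cB}).

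First I would confirm that the prescribed data (\ref{icp10}) is nothing but the $u_0(0)=0$ specialization of the constraint family (\ref{icp1}): fixing $d$ so that $u_0(0)=0$ and substituting back reproduces (\ref{icp10}), and in particular yields $u_1(0)=(2a+4c)/(b+1)$. This places the initial data on the invariant submanifold $\{u_0=0,\ B_0=B_1=\dots=0\}$, so by the invariance already deduced from (\ref{cB'}) we get $B_0(t)=0$ for all $t$. I expect this persistence step to be the main obstacle: one cannot close $B_0=0$ by itself, since $B'_0$ involves $B_1$, so the argument must genuinely invoke the whole chain of constraints (equivalently, the invariant manifold), and a little care is needed to check that (\ref{icp10}) makes every $B_j(0)$ vanish, not just $B_0(0)=u_1(0)(b+1)-2a-4c=0$.

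With $B_0(t)=0$ in hand, I would expand it using $u_0=0$,
\[
 u_1(tu_1+tu_2+b+1)-4atu_1-2a-4c=0,
\]
eliminate $u_2$ through $tu_1u_2=tu'_1$, and arrive at the Riccati equation
\[
 tu'_1+tu^2_1+(b+1)u_1-4atu_1-2a-4c=0.
\]
The substitution $u_1=f'_1/f_1$, under which $u'_1+u^2_1=f''_1/f_1$, linearizes this to
\[
 tf''_1=(4at-b-1)f'_1+(2a+4c)f_1.
\]

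Finally I would read off the answer by comparison with the standard hypergeometric equations (\ref{hyper}). For $a\ne0$, taking $k=4a$, lower parameter $b+1$, and upper parameter $(2a+4c)/(4a)=\tfrac12+\tfrac ca$ identifies the regular solution normalized by $f_1(0)=1$ with ${}_1F_1(\tfrac12+\tfrac ca,b+1,4at)$, which is (\ref{f11}); for $a=0$ the equation degenerates to $tf''_1=-(b+1)f'_1+4cf_1$, whose regular normalized solution is ${}_0F_1(b+1,4ct)$, i.e.\ (\ref{f01}). Uniqueness of the regular solution with $f_1(0)=1$, together with the already verified value $u_1(0)=f'_1(0)/f_1(0)=(2a+4c)/(b+1)$, then completes the argument.
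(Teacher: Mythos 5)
Your proposal is correct and follows essentially the same route as the paper's own proof: specialize the constraint family (\ref{icp1}) by $u_0(0)=0$ to obtain (\ref{icp10}), invoke the invariance of the half-line submanifold $\{u_0=0,\ B_0=B_1=\dots=0\}$ established via (\ref{cB'}), reduce to the closed system $u'_1=u_1u_2$, $B_0=0$, and linearize the resulting Riccati equation by $u_1=f'_1/f_1$ to match (\ref{hyper}). The persistence issue you flag is handled exactly as you suggest (the data lie on the full constraint chain, not just $B_0=0$), so there is no gap.
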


\begin{proof}
We substitute $u_0(0)=0$ into (\ref{icp1}) and obtain $d=c(2b-1)-a(b-1/2)^2$, which gives the initial data (\ref{icp10}). In the formula for $u_1(0)$, the factors $b$ in the numerator and denominator cancel, so that the value $b=0$ is admissible, which is easy to justify by analysing the constraint in the form (\ref{cB}). Therefore the initial data (\ref{icp10}) satisfy the equations of the constraint (\ref{cB}) for $j\ge0$ and this implies that this constraint is fulfilled for all $t$. Equation $B_0=0$ and the boundary condition $u_0=0$ lead to the closed system for the variables $u_1$ and $u_2$:
\[
 u'_1=u_1u_2,\quad u_1(tu_1+tu_2+b+1)-2a(2tu_1+1)-4c=0,
\]
which is equivalent to the Riccati equation for $u_1$. The usual linearization $u_1=f'_1/f_1$ (it coincide with the substitution from Proposition \ref{pr:uf}) brings to
\begin{equation}\label{f1''}
 f''_1=\Bigl(4a-\frac{b+1}{t}\Bigr)f'_1+\frac{2a+4c}{t}f_1,\quad f_1(0)=1, 
\end{equation}
which defines, according to (\ref{hyper}), functions (\ref{f11}) or (\ref{f01}).
\end{proof}

We remark that if $a\ne0$ then the formula (\ref{icp10}) can give $u_j(0)=0$ for some $j>0$, due to a special choice of parameters $b$ and $c$. Then the solution constructed from $f_1$ is restricted to a finite segment of the lattice. Nevertheless, the formula (\ref{f11}) for $f_1$ remains true and such cases correspond to the degeneration of the hypergeometric function into a polynomial.

Under the choice $a=1$, $b=1$ and $c=0$, the initial data (\ref{icp10}) turn into the unit step and (\ref{f11}) gives the answer (\ref{egf-c}) which defines the EGF for the Catalan numbers. This proves Conjecture \ref{conj:fCpjn} for $p=1$ and, as a corollary, the identity (\ref{cid}).

\section{A generalization for the Toda lattice}\label{s:p1q}

The constraint (\ref{cA}) admits a generalization allowing, in particular, to obtain the solution for initial data (\ref{stepq}) corresponding to Narayana polynomials. It is more convenient to write this generalization in the variables
\begin{equation}\label{uuyz}
 y_j=u_{2j+2}+u_{2j+1},\quad z_j=u_{2j+1}u_{2j}.
\end{equation}
It is easy to check that if $u_j$ satisfy the Volterra lattice (\ref{VL}) then $y_j$ and $z_j$ satisfy the Toda lattice in the Flaschka form
\begin{equation}\label{TL}
 y'_j=z_{j+1}-z_j,\quad z'_j=z_j(y_j-y_{j-1}).
\end{equation}
The derivations (\ref{t2})--(\ref{tau1}) also can be rewritten in these variables as follows (we use the derivative with respect to $t$ in the first equation only to bring it to a more compact form, it can be expanded in virtue of (\ref{TL})):
\begin{align*}
& y_{j,t_2}=(z_{j+1}+z_j+y^2_j)',\quad  z_{j,t_2}=(z_j(y_j+y_{j-1}))',\\
& y_{j,\tau_0}=y_j,\quad z_{j,\tau_0}=2z_j,\\
& y_{j,\tau_1}=(2j+5)z_{j+1}-(2j+1)z_j+y^2_j,\quad z_{j,\tau_1}=z_j((2j+4)y_j-2jy_{j-1}). 
\end{align*}
The commutator relations (\ref{commutators}) do not change. It is easy to see that a very simple additional symmetry appears in the new variables, corresponding to the shift $y\to y+\varepsilon$:
\[
 y_{j,t_0}=1,\quad z_{j,t_0}=0
\]
(in the variables $u_j$ this symmetry is nonlocal). By adding a term with $\partial_{t_0}$ to the linear combination (\ref{sym}) and passing to the stationary equation we obtain the following constraint:
\begin{equation}\label{cPQ}
\begin{aligned}
 & P_j=tz_{j+1}(y_j+y_{j+1}-4a)-tz_j(y_{j-1}+y_j-4a)\\
 &\qquad\qquad  +(2j+3+b)z_{j+1}-(2j-1+b)z_j+y^2_j-4ay_j+\mu=0,\\
 & Q_j=t(z_{j+1}-z_{j-1}+y^2_j-y^2_{j-1}-4ay_j+4ay_{j-1})\\
 &\qquad\qquad +(2j+2+b)y_j-(2j-2+b)y_{j-1}-8a=0,
\end{aligned}
\end{equation}
where $a$, $b$ and $\mu$ are arbitrary constants. Under the substitutions (\ref{uuyz}) the equalities hold
\[
 P_j=u_{2j+2}A_{2j+2}+u_{2j+1}A_{2j+1}+\mu,\quad Q_j=A_{2j+1}+A_{2j},
\] 
where $A_j$ is the left-hand side of the constraint (\ref{cA}), which means that the constraint (\ref{cPQ}) includes (\ref{cA}) as a 
particular case for $\mu=0$. Similar to the case of the constraint (\ref{cA}), one can forget about the origin of equations (\ref{cPQ}) from symmetries and prove compatibility with the dynamics in $t$ by straightforward checking the relations
\begin{equation}\label{cPQ'}
 P'_j=z_{j+1}Q_{j+1}-z_jQ_j,\quad Q'_j=P_j-P_{j-1}.
\end{equation}
The mapping $(y_{j-1},z_{j-1},y_j,z_j)\mapsto(y_j,z_j,y_{j+1},z_{j+1})$ defined by equations (\ref{cPQ}) is of fourth order with respect to the shift. It has two first integrals, which can be extracted from the matrix Lax representation for the stationary symmetry equation
\begin{equation}\label{LM}
 2(\lambda^2-4a\lambda+\mu)\partial_\lambda(L_j)=M_{j+1}L_j-L_jM_j,
\end{equation}
where $\lambda$ is the spectral parameter. Equations (\ref{cPQ}) are equivalent to (\ref{LM}) under the choice
\[
 L_j=z^{-1/2}_j\begin{pmatrix} 0 & 1\\ -z_j & \lambda-y_j \end{pmatrix},\quad
 M_j=\begin{pmatrix}
  -s_j & 2(r_{j-1}+1)\\ 
  -2z_j(r_j+1) & s_j
 \end{pmatrix},
\]
where
\[
 r_j=t(y_j+\lambda-4a)+2j+b,\quad s_j=t(z_j-z_{j-1})-(y_{j-1}-\lambda)r_{j-1}+2\lambda-4a.
\]
Equation (\ref{LM}) is polynomial in $\lambda$ and it implies the realtion
\[
 \det M_{j+1}=\det M_j \mod(\lambda^2-4a\lambda+\mu),
\]
which provides two first integrals after elimination of powers of $\lambda$ greater than one. We do not write them out explicitly, since they are quite cumbersome. Apparently, by lowering the order with their help and making additional changes of variables, it is possible to reduce the constraint under consideration to one of the Painlev\'e equations, but this is a separate exercise that is beyond the scope of our work. We use these first integrals only at $t=0$ to determine the initial data served by our constraint. Calculating $\det M_j$ for $t=0$, replacing $\lambda^2$ with $4a\lambda-\mu$, and collecting the coefficients at $\lambda^0$ and $\lambda^1$, we arrive at equations
\begin{equation}\label{integrals}
\begin{gathered}
 (2j+b-2)(2j+b)(y_{j-1}(0)-2a)=\alpha,\\
 4(2j+b-1)(2j+b+1)z_j(0) - ((2j+b-2)y_{j-1}(0)+4a)^2 +\mu(2j+b)^2=\beta,
\end{gathered}
\end{equation}
where parameters $\alpha$ and $\beta$ are the values of first integrals. These equations determine the initial data for the regular solutions of the lattice equation (\ref{TL}) satisfying the constraint (\ref{cPQ}) (cf. (\ref{icp1})).

As in the previous section, we restrict ourselves to the study of equations on the half-line, which is the linearizable case. The corresponding reduction defining an invariant submanifold on solutions of (\ref{TL}) is given by equations
\[
 z_0=0,\quad P_0=P_1=\dots=0,\quad Q_1=Q_2=\dots=0.
\]
To obtain a closed system we need just one equation $P_0=0$ since all other equations of the constraint are its consequences, according to (\ref{cPQ'}).

\begin{proposition}\label{pr:r1q}
Let the variables $y_j$ and $z_j$, $j\ge0$, satisfy the lattice equation (\ref{TL}) with the boundary conditions $z_0=0$ and initial data
\begin{equation}\label{icp1q0}
 y_j(0)=2a+\frac{2bd}{(2j+b)(2j+b+2)},\quad 
 z_j(0)=\frac{4j(j+b)(c^2(2j+b)^2-d^2)}{(2j+b-1)(2j+b)^2(2j+b+1)},
\end{equation}
where $b\ne0,-1,-2,\dotso$. Then $y_0=g'/g$, where
\begin{alignat}{2}
\label{g1}
 g&= e^{2(a-c)t}{}_1F_1\Bigl(1+\frac{b}{2}+\frac{d}{2c},b+2,4ct\Bigr)&& \text{for}~ c\ne0,\\
\label{g0}
 g&= e^{2at}{}_0F_1(b+2,2dt)& \quad& \text{for}~ c=0.
\end{alignat}
\end{proposition}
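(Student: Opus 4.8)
The plan is to mirror the proof of Proposition~\ref{pr:p1}: verify that the prescribed data lie on the invariant manifold of the constraint, reduce the half-line dynamics to a single scalar equation, and linearize it into a confluent hypergeometric equation. First I would confirm that the data (\ref{icp1q0}) satisfy the constraint at $t=0$, i.e.\ $P_j(0)=0$ for $j\ge0$ and $Q_j(0)=0$ for $j\ge1$, fixing $\mu$ in the process. Setting $w=2j+b$ and using $4j(j+b)=w^2-b^2$, the data read $y_j(0)=2a+2bd/(w(w+2))$ and $z_j(0)=(w^2-b^2)(c^2w^2-d^2)/(w^2(w^2-1))$. A short computation gives $Q_j(0)=0$ identically, while for $P_j$ the substitution $G(v)=c^2v^2-(d^2+b^2c^2)+b^2d^2/v^2$ collapses the two $z$-terms into $(G(w+2)-G(w))/(w+1)=4c^2-4b^2d^2/(w^2(w+2)^2)$, so that $P_j(0)=0$ holds precisely when $\mu=4(a^2-c^2)$. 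This is the analogue of determining $d$ from $u_0(0)=0$ in Proposition~\ref{pr:p1}, and it is exactly the value making the second integral in (\ref{integrals}) independent of $j$.

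Once the data are verified, the compatibility relations (\ref{cPQ'}) show that the constraint manifold is invariant, so $P_0=0$ holds for all $t$. To extract a closed equation I would use $z_0=0$, which reduces the lattice equations (\ref{TL}) at the first sites to $y_0'=z_1$ and $z_1'=z_1(y_1-y_0)$, whence $z_1=y_0'$ and $y_1=y_0+y_0''/y_0'$. Substituting these into $P_0=0$ and clearing $t$ yields, after simplification,
\[
 ty_0''+2ty_0y_0'-4aty_0'+(3+b)y_0'+y_0^2-4ay_0+\mu=0.
\]
Unlike the case $p=1$, where $B_0=0$ gave a first-order Riccati equation directly, here the elimination of $y_1$ introduces $y_0''$, so this is genuinely second order. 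The key observation is that its left-hand side is an exact $t$-derivative, equal to $\tfrac{d}{dt}\bigl(t(y_0'+y_0^2)+(2+b-4at)y_0+\mu t\bigr)$. Integrating once gives the first integral
\[
 t(y_0'+y_0^2)+(2+b-4at)y_0+\mu t=\kappa,
\]
which is a Riccati equation for $y_0$; this integral is precisely the half-line restriction of the conserved quantities encoded in the Lax matrix $M_j$.

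Finally I would linearize. The substitution $y_0=g'/g$, under which $y_0'+y_0^2=g''/g$, turns the Riccati into the linear equation $tg''=(4at-(b+2))g'+(\kappa-\mu t)g$. Evaluating the first integral at $t=0$ with $g(0)=1$ forces $\kappa=(b+2)y_0(0)=2ab+4a+2d$, and regularity of $g$ at the singular point $t=0$ determines it uniquely. With $\mu=4(a^2-c^2)$, the gauge $g=e^{2(a-c)t}h$ (for $c\ne0$) removes the term $-\mu t\,g$ and casts the equation in the standard form (\ref{hyper}) with parameters $1+\tfrac b2+\tfrac{d}{2c}$, $b+2$, $4c$, identifying $h={}_1F_1(1+\tfrac b2+\tfrac{d}{2c},b+2,4ct)$ and hence $g$ as in (\ref{g1}); the degenerate gauge $g=e^{2at}h$ with $\mu=4a^2$ gives (\ref{g0}) for $c=0$, which is also the confluent limit $c\to0$. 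The normalization is consistent, since $y_0(0)=g'(0)=\kappa/(b+2)=2a+2d/(b+2)$ matches (\ref{icp1q0}). I expect the main obstacle to be this linearization step: spotting the exact-derivative structure that reduces the second-order equation to a Riccati, and then pinning down the gauge factor and the hypergeometric parameters, which is exactly where the special value $\mu=4(a^2-c^2)$ is used.
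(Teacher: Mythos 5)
Your proposal is correct and follows essentially the same route as the paper's proof: verifying the initial data lie on the constraint manifold with $\mu=4(a^2-c^2)$ (the paper derives them from the first integrals (\ref{integrals}) but notes direct checking as an alternative), invoking (\ref{cPQ'}) for invariance, reducing via $z_0=0$ and $P_0=0$ to the closed system for $y_0,y_1,z_1$, integrating once to the same Riccati equation, and linearizing with $y_0=g'/g$ and an exponential gauge. Your computations (the exact-derivative structure, $\kappa=2ab+4a+2d$, and the hypergeometric parameters) all check out and merely make explicit the steps the paper compresses.
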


\begin{proof}
The initial data are determined by equations (\ref{integrals}), with parameter $\beta$ fixed by condition $z_0=0$. Equations (\ref{icp1q0}) are obtained by denoting $\mu=4a^2-4c^2$ and $\alpha=2bd$. It is also possible to check directly that these formulae give a solution of equations (\ref{cPQ}) for $t=0$. This implies that a solution with such initial data satisfies these equations for all $t$. Equation $P_0=0$ reduces the lattice equations to the closed system for the variables $y_0,y_1$ and $z_1$: 
\[
 y'_0=z_1,\quad z'_1=z_1(y_1-y_0),\quad tz_1(y_0+y_1-4a)+(b+3)z_1+(y_0-2a)^2-4c^2=0.
\]
This is equivalent to one equation
\[
 ty''_0 +2t(y_0-2a)y'_0 +(b+3)y'_0 +(y_0-2a)^2-4c^2=0
\]
which admits the first integral with the value fixed by initial condition for $y_0$:
\[
 ty'_0+t(y_0-2a)^2+(b+2)(y_0-2a)-4c^2t-2d=0.
\]
The substitution $y_0=g'/g$ brings to a linear equation equivalent to (\ref{hyper}) up to multiplying $g$ by a suitable exponent, which gives the answers.
\end{proof}

Under the substitution (\ref{uuyz}), solutions of the Volterra lattice are recovered from the Toda lattice solutions up to one arbitrary parameter. Comparison with formulae from Proposition \ref{pr:uf} gives
\[
 y_0=\frac{g'}{g}=u_2+u_1=\frac{f'_2}{f_2}=\frac{f''_1}{f'_1},
\]
whence it follows (taking the condition $f_1(0)=1$ into account)
\begin{equation}\label{fg}
 f_1=1+\kappa\int^t_0g(x)dx,
\end{equation}
where $\kappa$ is an arbitrary constant. 

We remark that instead of (\ref{uuyz}) one can also use the substitution $y_j=u_{2j+1}+u_{2j}$, $z_j=u_{2j}u_{2j-1}$ which also relates the lattice equations (\ref{TL}) and (\ref{VL}). For this substitution the calculation is even simpler and we obtain just $f_1=g$ instead of (\ref{fg}). However, two families of solutions constructed in this way do not coincide.

In particular, to construct the solution with initial data (\ref{stepq}) leading to the Narayana polynomials, we use the substitution (\ref{uuyz}), which gives
\[
 y_0(0)=y_1(0)=\dots=1+q,\quad z_1(0)=z_2(0)=\dots=q.
\]
These values belong to the family (\ref{icp1q0}) for $2a=1+q$, $b=1$, $c^2=q$ and $d=0$; moreover, we have $u_1(0)=f'_1(0)/f_1(0)=\kappa g(0)=\kappa=1$. As the result, we obtain the following answer.

\begin{proposition}\label{pr:stepq}
The solution of the Volterra lattice (\ref{VL}) with the initial data $u_1=u_3=\dots=1$, $u_2=u_4=\dots=q$ at $t=0$ has the first component $u_1=f'_1/f_1$ with
\[
 f_1= 1+\int^t_0{}_1F_1\bigl(\tfrac{3}{2},3,4cx\bigr)e^{(c-1)^2x}\,dx,\quad c=\sqrt{q}>0.
\]
\end{proposition}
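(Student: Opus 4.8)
The plan is to reduce the claim to Proposition~\ref{pr:r1q} through the Volterra--Toda correspondence~(\ref{uuyz}): the two-level step is tailored so that its image under this substitution lands in the integrable family~(\ref{icp1q0}).

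First I would push the initial data forward under~(\ref{uuyz}). With $u_{2j+1}(0)=1$, $u_{2j}(0)=q$ and $u_0=0$, one gets $y_j(0)=u_{2j+2}(0)+u_{2j+1}(0)=1+q$ for all $j\ge0$ and $z_j(0)=u_{2j+1}(0)u_{2j}(0)=q$ for $j\ge1$, while $z_0=u_1u_0=0$ delivers the boundary condition $z_0=0$. So the task becomes solving the Toda lattice~(\ref{TL}) on the half-line with these flat data.

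Second I would match this data to~(\ref{icp1q0}). Choosing $b=1$ and $d=0$ collapses $y_j(0)$ to the constant $2a$, forcing $2a=1+q$; the same choices make the rational expression for $z_j(0)$ collapse, after cancelling the common factor $4j(j+1)(2j+1)^2$, to the constant $c^2$, forcing $c^2=q$. Hence $c=\sqrt q$ and $a=\tfrac{1}{2}(1+q)$. This collapse of $z_j(0)$ to a $j$-independent value is the one computation that must be checked by hand, and it is what makes the two-level step a `magic' datum. Then Proposition~\ref{pr:r1q} (case $c\ne0$) yields $y_0=g'/g$ with $g=e^{2(a-c)t}\,{}_1F_1(\tfrac{3}{2},3,4ct)$, and I would simplify the exponent via $2(a-c)=(1+q)-2\sqrt q=(1-\sqrt q)^2=(c-1)^2$, giving $g=e^{(c-1)^2t}\,{}_1F_1(\tfrac{3}{2},3,4ct)$.

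Finally I would return from the Toda variable $g$ to the Volterra $\tau$-function through~(\ref{fg}), i.e. $f_1=1+\kappa\int_0^t g\,dx$. The residual constant $\kappa$ records the one-parameter ambiguity in inverting~(\ref{uuyz}), and I expect pinning it down to be the only genuine subtlety. I would fix it at the first node: since $f_1(0)=1$ and $f_1'=\kappa g$ with $g(0)=1$, we have $u_1(0)=f_1'(0)/f_1(0)=\kappa$, so the datum $u_1(0)=1$ forces $\kappa=1$. Substituting $g$ then gives exactly the stated formula for $f_1$.
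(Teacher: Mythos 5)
Your proposal is correct and takes essentially the same route as the paper: push the two-level step through the substitution (\ref{uuyz}) to get the flat Toda data $y_j(0)=1+q$, $z_j(0)=q$, $z_0=0$, identify these with the family (\ref{icp1q0}) at $2a=1+q$, $b=1$, $c^2=q$, $d=0$, invoke Proposition \ref{pr:r1q}, and fix $\kappa=1$ in (\ref{fg}) from $u_1(0)=f'_1(0)/f_1(0)=\kappa g(0)$. The only difference is that you spell out the cancellation giving $z_j(0)=c^2$ and the exponent identity $2(a-c)=(c-1)^2$, details the paper leaves implicit.
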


If $c=1$ then this formula is simplified to $f_1={}_1F_1(\tfrac{1}{2},2,4t)$ which correspond to the Catalan numbers, as we already know. For general $c$, the Taylor expansion of $f_1$ brings to the coefficients (\ref{Nq}) as expected; hence, we obtain the identity (\ref{cidq}) as a corollary.

\begin{figure}[t]
\centering
\includegraphics[width=0.45\textwidth]{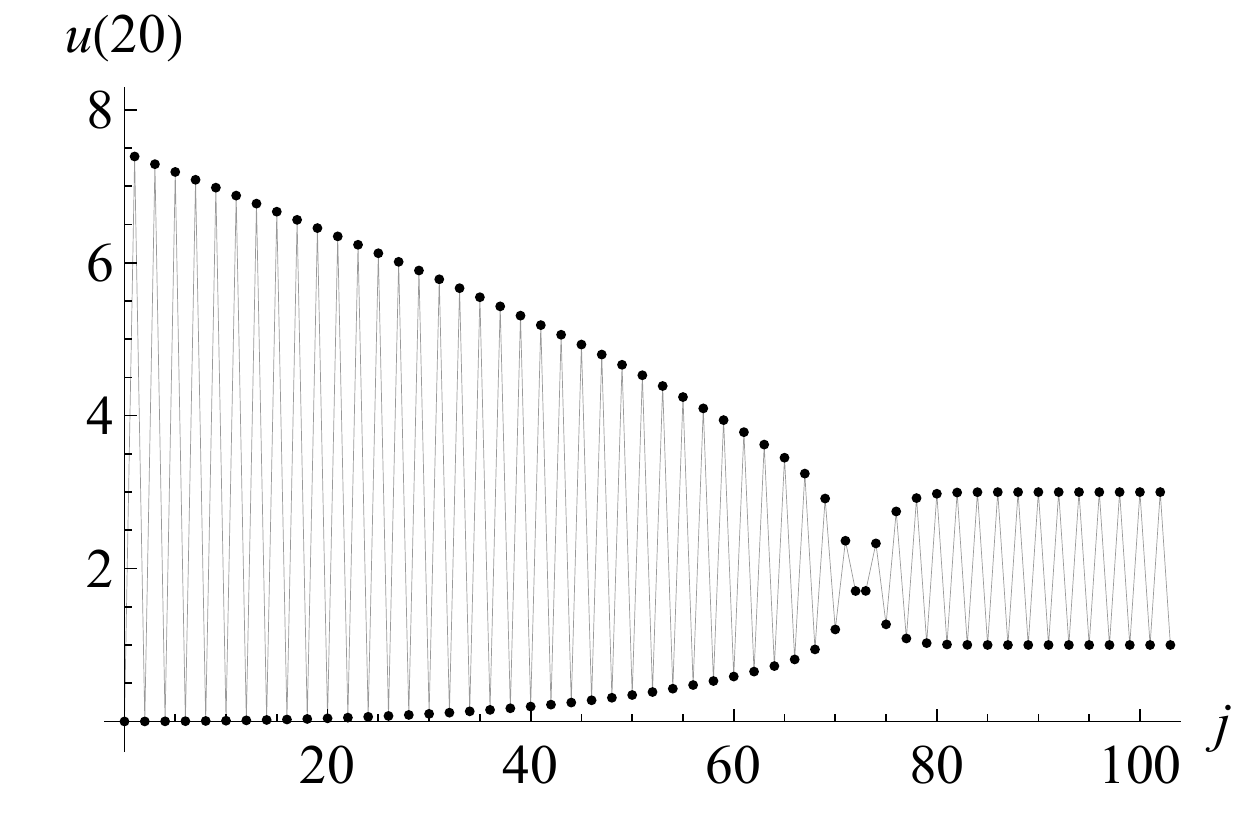}\qquad
\includegraphics[width=0.45\textwidth]{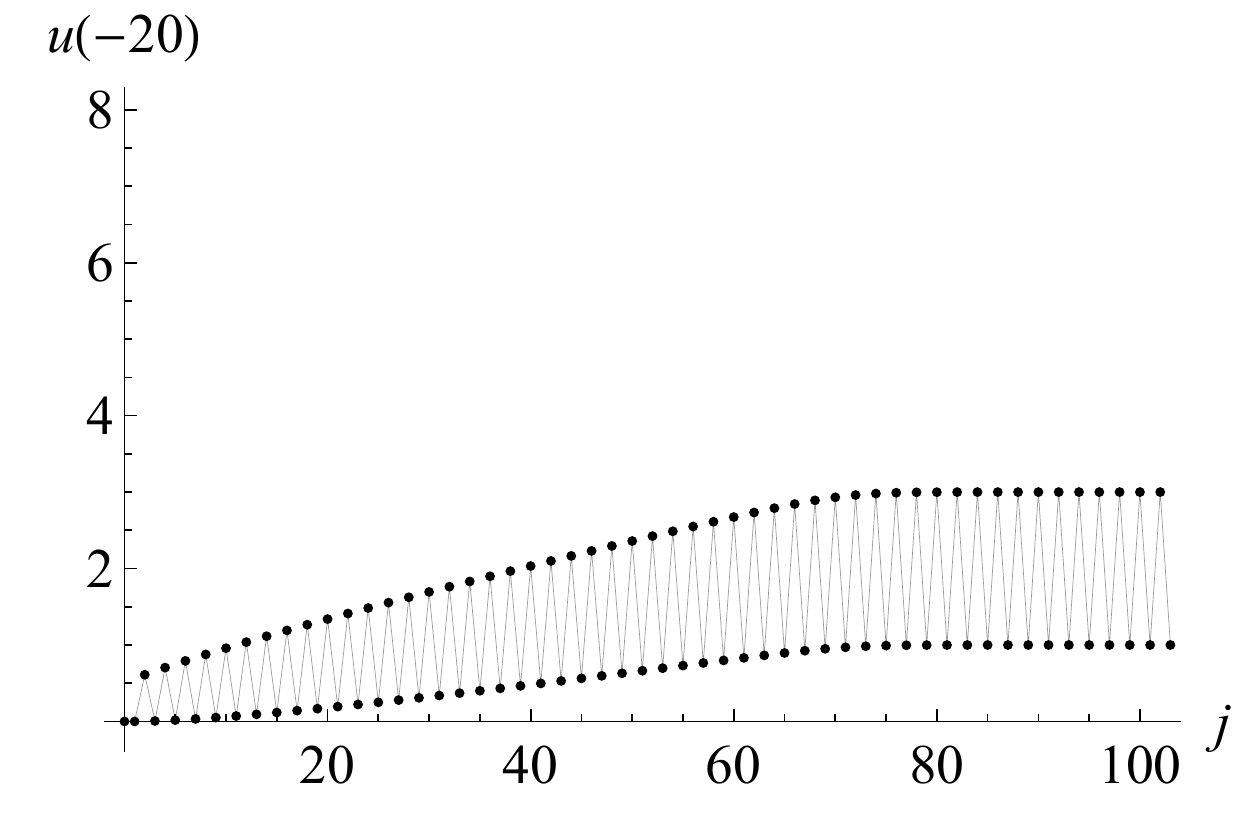}\\
\includegraphics[width=0.45\textwidth]{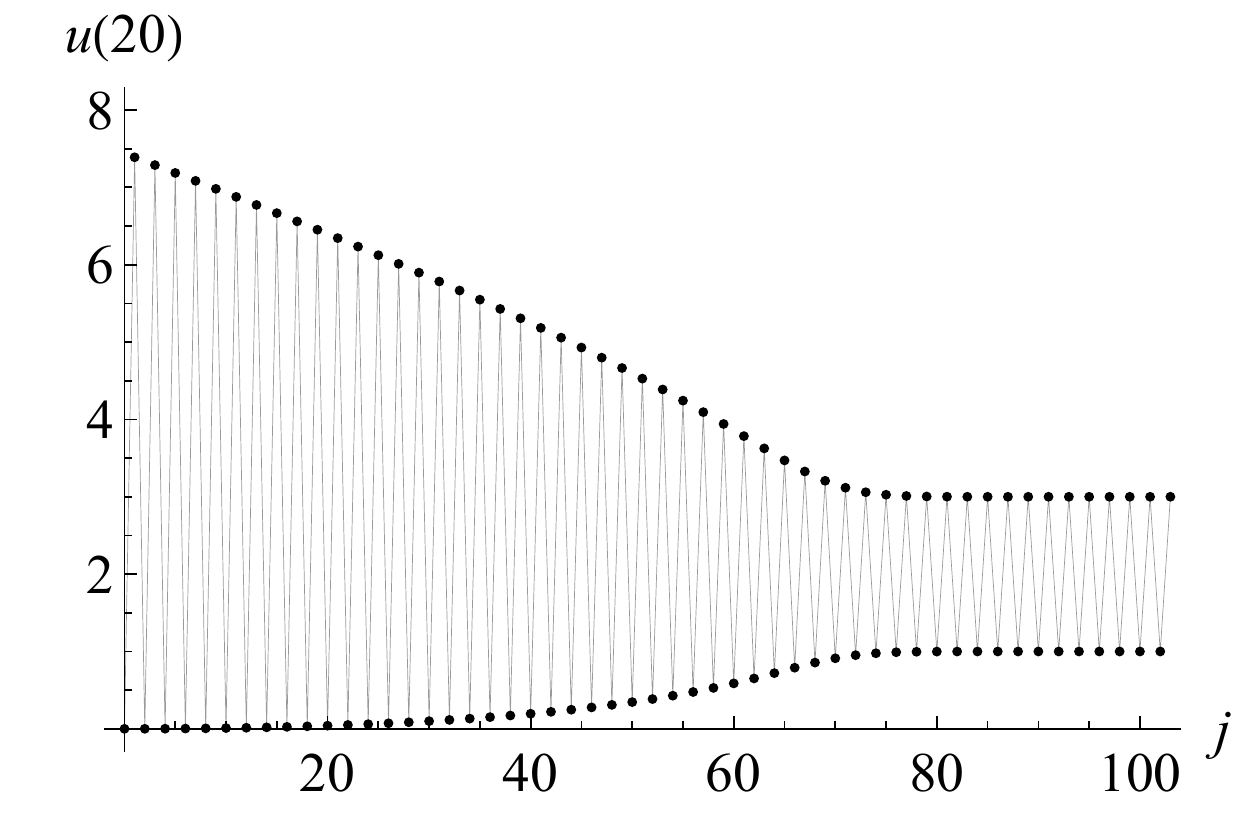}\qquad
\includegraphics[width=0.45\textwidth]{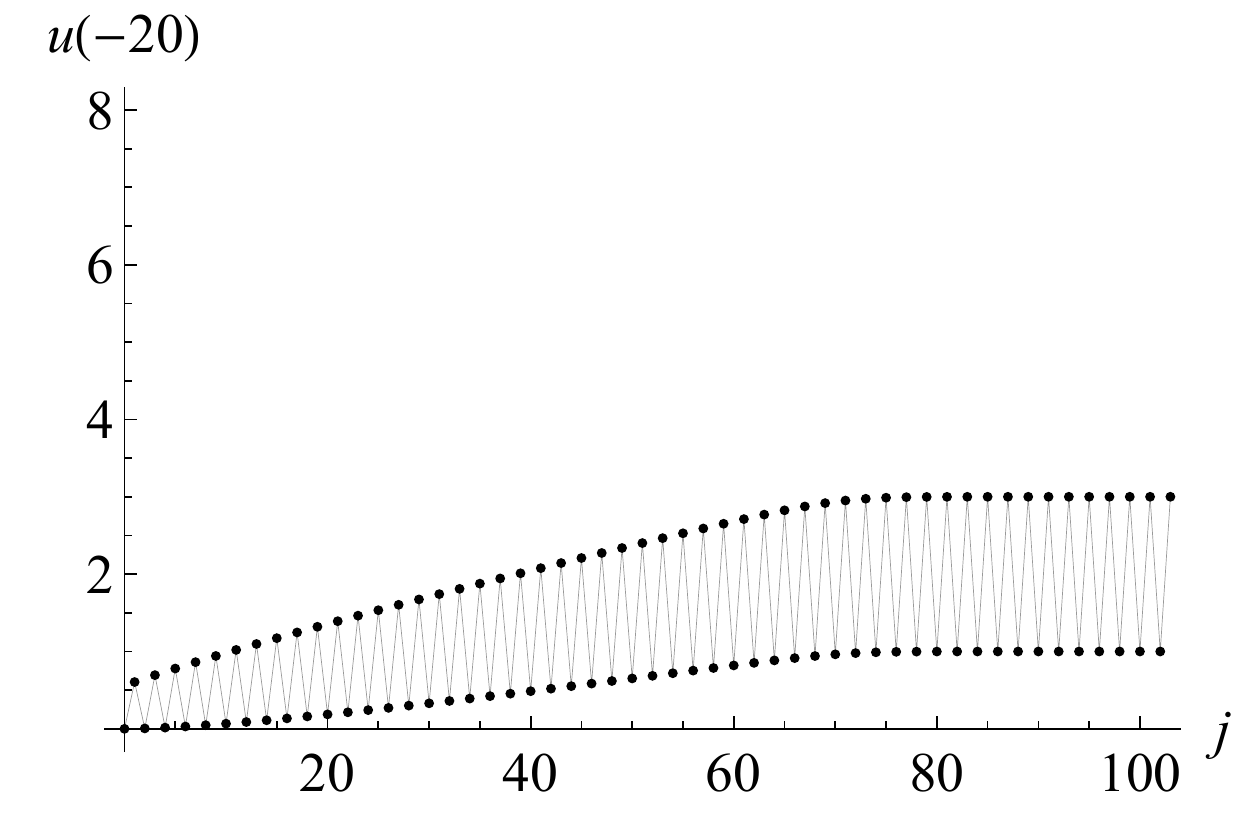}
\caption{Solution of the Volterra lattice with alternating initial data. Top: $0,1,3,1,3,\dots$, bottom: $0,3,1,3,1,\dots$.}
\label{fig:Nar1331} 
\end{figure}

The numeric solution corresponding to $q=c^2=3$ is shown on the top of Fig.~\ref{fig:Nar1331}. The limit values are
\begin{align*}
 & u_{2k-1}\to(c+1)^2,\quad u_{2k}\to0,\quad t\to+\infty,\\
 & u_{2k-1}\to(c-1)^2,\quad u_{2k}\to0,\quad t\to-\infty,\quad \text{if}~0<c\le1,\\
 & u_{2k-1}\to0,\quad u_{2k}\to(c-1)^2,\quad t\to-\infty,\quad \text{if}~1<c.
\end{align*}
The proof is based on the asymptotic formulae \cite{NIST}
\[
 {}_1F_{1}(a,b,z)=
 \left\{\begin{array}{cl}
  Ce^zz^{a-b}(1+C_1/z+o(z^{-1})), & z\to+\infty,\\[4pt]
  K(-z)^{-a}(1+K_1/z+o(z^{-1})), & z\to-\infty
 \end{array}\right. 
\]
(we do not need explicit expressions for $C,C_1,K$ and $K_1$) and also on the relation
\[
 \int^0_{-\infty}{}_1F_1\bigl(\tfrac{3}{2},3,4cx\bigr)e^{(c-1)^2x}\,dx=
  \left\{\begin{array}{cl}
   1,& 0<c\le1,\\
   c^{-2},& 1<c.
 \end{array}\right. 
\] 
This information is enough to calculate the power asymptotics for $u_1=f'_1/f_1$, then the formulae for remaining variables are recursively found directly from the lattice equations (in the case $t\to-\infty$, $1<c$, the variable $u_1$ decreases exponentially, and the power asymptotics starts from the variable $u_2=u'_1/u_1$).

Notice, that under the change $u_j(t)\to qu_j(qt)$ and $q\to1/q$, the inital data (\ref{stepq}) turn into $u_1=u_3=\dots=q$, $u_2=u_4=\dots=1$. The corresponding solution is shown on the bottom of Fig.~\ref{fig:Nar1331}. For this solution, the limit values for $t\to+\infty$ remain the same, but for $t\to-\infty$ the odd and even variables interchange, that is the vanishing variables are the odd ones if $c\le1$ and the even ones if $1<c$. 

\section{Reductions for $p>1$}\label{s:p}

\subsection{Constraints for the Bogoyavlensky lattice}

The BL$_p$ lattice
\begin{equation}\label{BLp}
 u'_j=u_j(u_{j+p}+\cdots+u_{j+1}-u_{j-1}-\cdots-u_{j-p})
\end{equation}
admits a wide set of non-autonomous constraints consistent with the dynamics. For example, the stationary equation for the sum of the higher and scaling symmetries $\partial_{t_2}+c(t\partial_t+\partial_{\tau_0})$ brings, after integration, to the constraint
\begin{equation}\label{constr0}
 u_j(u_{j-p}+\dots+u_{j+p})+cp(p+1)tu_j+cj+a_j+b_j=0,
\end{equation}
where $a_j$ and $b_j$ are integration constants satisfying the periodicity conditions $a_{j+p+1}=a_j$ and $b_{j+p}=b_j$. One can check that if $A_j$ is the expression in the left-hand side of (\ref{constr0}) then the differentiation by virtue of BL$_p$ satisfies the identity
\[
 A'_j=u_j(A_{j+p}+\dots+A_{j+1}-A_{j-1}-\dots-A_{j-p}),
\]
which proves the consistency with the lattice equations. As we already mentioned, equation (\ref{constr0}) for $p=1$ is equivalent to dP$_1$ equation and it reduces the Volterra lattice to the P$_4$ equation \cite{Fokas_Its_Kitaev_1991}. The reduction (\ref{constr0}) is interesting by itself, but it does not help to solve our problem of the decay of the unit step, since it does not admit these initial conditions. Another example is of the form
\begin{equation}\label{constr1}
 u_{j+1}\bigl(t(u_{j+1}+\cdots+u_{j+p+1})+b_{j+1}\bigr)-u_j\bigl(t(u_{j-p}+\cdots+u_j)+b_j-1\bigr)=0,
\end{equation}
with the quasi-periodic parameters $b_{j+p}=b_j+1$; here the left-hand side satisfies the equations
\begin{multline*}
\qquad A'_j=u_{j+1}(A_{j+1}+\cdots+A_{j+p})-u_j(A_{j-p}+\cdots+A_{j-1})\\
  +(u_{j+p}+\cdots+u_{j+1}-u_j-\cdots-u_{j+1-p})A_j.\qquad
\end{multline*}
If $p=1$ then equation (\ref{constr1}) turns into the constraint (\ref{cB}) with zero parameters $a$ and $c$. Again, there are no initial data in the form of the step in this case.

Other examples of polynomial constraints can be given, but finding one that would serve the required initial data is not so easy. We need an analogue of equation (\ref{cA}) which uses both scaling and master-symmetry, but the trouble is that for $p>1$ the master-symmetry becomes very complicated and practically unsuitable for calculations. It is nonlocal already for $p=2$ \cite{Zhang_Tu_Oevel_Fuchssteiner_1991}. Formally, the master-symmetry is defined for any $p$ as the result of applying the recursion operator $R$ to the scaling symmetry $u_{j,\tau_0}=u_j$:
\[
 u_{j,\tau_1}=R(u_j),
\]
but this operator itself is rather involved \cite{Wang_2012}:
\begin{equation}\label{R}
 R=u_j(T-T^{-p})(T-1)^{-1}\prod^{\overset{\curvearrowright}{p}}_{i=1}
  \bigl((T^{p+1-i}u_j-u_jT^{-i})(T^{p-i}u_j-u_jT^{-i})^{-1}\bigr).
\end{equation}
Application of the inverse operators contained in $R$ is equivalent to solving difference equations. If we apply $R$ to the right-hand side of (\ref{BLp}) then all these equations miraculously resolve in local form and lead to the higher symmetry. However, this is not the case if we apply $R$ to the scaling (except for the case $p=1$ for which we obtain the derivation (\ref{tau1})). In principle, the emerging nonlocal variables can be excluded from the constraint, but this leads to very cumbersome relations. For example, for $p=2$ this gives a 9-point constraint of degree 4 with respect to the fields $u_j$, and with the help of computer algebra it is possible to check that the hypergeometric solution from Proposition \ref{pr:step} actually satisfies this constraint. However, for $p=3$ such calculations seem hopeless.

\subsection{Constraint for the modified lattice}

Our approach is based on the transition to the modified Bogoyavlensky lattice
\begin{equation}\label{mBLpc}
 v'_j = (c-v_j)v_j(v_{j+1}\cdots v_{j+p} -v_{j-1}\cdots v_{j-p}),
\end{equation}
which is related with BL$_p$ by any of two Miura type substitutions (see e.g. \cite[Ch.\,17]{Suris_2003})
\begin{equation}\label{uvv}
 u_j=(c-v_{j-1})v_j\cdots v_{j+p-1}
\end{equation}
or
\begin{equation}\label{tuvv}
 u_j= v_j\cdots v_{j+p-1}(c-v_{j+p}).
\end{equation}
The homogeneous lattice with $c=0$ coincide with (\ref{mBL}) up to sign. Up to the author's knowledge, the recursion operator for (\ref{mBLpc}) is known only for $c=0$ \cite{Wang_2012}. It is similar to (\ref{R}) and the corresponding master-symmetry is also nonlocal for $p>1$. Moreover, if $c\ne0$ then there is no scaling symmetry to which we could apply the recursion operator. Therefore, at first glance, the new variables only make the problem more difficult. However, it turns out that the lattice (\ref{mBLpc}) admits a local reduction, which generalizes the example from Sect.~\ref{s:p1} and works for the step initial data.

\begin{proposition}
The lattice equation (\ref{mBLpc}) is consistent with $(2p+1)$-point difference equation
\begin{equation}\label{cG}
 G_j= t(c-v_j)\biggl(\sum^p_{s=0}v_{j-p+s}\cdots v_{j+s} 
   -c\sum^p_{s=1}v_{j-p+s}\cdots v_{j+s-1}\biggr) +b_jv_j -d_j=0,
\end{equation}
where parameters satisfy the quasi-periodic conditions
\begin{equation}\label{bd}
 b_{j+p}=b_j+1,\quad d_{j+p+1}=d_j+c.
\end{equation}
\end{proposition}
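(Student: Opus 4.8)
The plan is to prove consistency directly, following the same strategy used earlier for the constraints (\ref{cA}), (\ref{constr0}), and (\ref{constr1}): namely, differentiate the left-hand side $G_j$ by virtue of the modified lattice (\ref{mBLpc}) and show that $G'_j$ lies in the ideal generated by the $G_k$, so that the manifold $\{G_k=0 \text{ for all }k\}$ is invariant under the $t$-flow. Concretely, I would look for a \emph{telescoping} identity of the form
\begin{equation*}
 G'_j = (c-v_j)v_j\bigl(H_{j+1}-H_{j-1}\bigr) + (\text{terms proportional to }G_{j\pm s}),
\end{equation*}
analogous to $A'_j=u_{j+1}A_{j+1}-u_{j-1}A_{j-1}$ and the multi-point versions displayed for (\ref{constr0}) and (\ref{constr1}); the precise combination of neighbouring $G_k$ (and possibly of the $v$-fields multiplying them) is what must be pinned down.

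First I would record the elementary building blocks. Write $\sigma_j=\sum_{s=0}^p v_{j-p+s}\cdots v_{j+s}$ for the degree-$(p+1)$ sum and $\pi_j=\sum_{s=1}^p v_{j-p+s}\cdots v_{j+s-1}$ for the degree-$p$ sum appearing in (\ref{cG}), so that $G_j=t(c-v_j)(\sigma_j-c\pi_j)+b_jv_j-d_j$. Each $v_k$ evolves by (\ref{mBLpc}), i.e. $v'_k=(c-v_k)v_k(v_{k+1}\cdots v_{k+p}-v_{k-1}\cdots v_{k-p})$; I would substitute this into $G'_j$ and organize the result. There are three sources of terms: the explicit $t$-derivative of the prefactor $t$, the derivative of $(c-v_j)$, and the derivative of the bracket $(\sigma_j-c\pi_j)$. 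The monomials in $\sigma_j$ and $\pi_j$ are products of consecutive $v$'s, and differentiating a product of consecutive factors produces a telescoping cascade because each $v'_k$ itself is a difference of two products of consecutive $v$'s; this is the structural reason a telescoping identity should exist. The quasi-periodicity $b_{j+p}=b_j+1$ and $d_{j+p+1}=d_j+c$ is exactly what is needed so that the discrete ``integration constants'' match up across the shift when the products in $\sigma$ and $\pi$ slide by $p$ or $p+1$ positions.

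The main obstacle will be the bookkeeping of the telescoping cancellation: unlike the $p=1$ case, where each product has at most two factors, here the sums $\sigma_j$ and $\pi_j$ each contain $p{+}1$ and $p$ monomials of degree $p{+}1$ and $p$, and differentiating by (\ref{mBLpc}) multiplies the degree further, so one faces a large collection of degree-$(2p{+}1)$ and degree-$(2p{+}2)$ monomials that must cancel in blocks. I expect the cleanest route is to reverse-engineer the answer: posit the ansatz $G'_j=(c-v_j)v_j(H_{j+1}-H_{j-1})+\lambda_j G_{j}+\cdots$ with unknown $H$ built from the same $\sigma,\pi$-type sums, match coefficients of the highest-degree monomials first (these should determine $H$ uniquely), and then verify the lower-degree and the $t$-linear terms, where the conditions (\ref{bd}) on $b_j,d_j$ get used. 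An alternative, and probably safer, strategy is to exploit the Miura substitution (\ref{uvv}) (or (\ref{tuvv})) to pull the constraint (\ref{cG}) back to a constraint on the $u$-fields satisfying BL$_p$, and to check consistency there against the already-verified structure of constraints like (\ref{constr1}); the degree-$4$, $9$-point calculation mentioned in the text for $p=2$ suggests that working in the $v$-variables is in fact the simplification, so I would carry out the $v$-side computation but use the $u$-side as an independent check for small $p$ before asserting the general identity.
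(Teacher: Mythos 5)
Your overall strategy --- differentiate $G_j$ in virtue of (\ref{mBLpc}) and exhibit an identity showing that $G'_j$ lies in the ideal generated by the $\{G_k\}$ --- is the same as the paper's, and your remark that the quasi-periodicity (\ref{bd}) must enter through shifts by $p$ and $p+1$ is on target. But the proposal stops exactly where the proof begins: the entire content of the proposition is the explicit identity, and you never produce it. Worse, the ansatz you commit to,
\begin{equation*}
 G'_j=(c-v_j)v_j\bigl(H_{j+1}-H_{j-1}\bigr)+\lambda_jG_j+\cdots,
\end{equation*}
is modeled on the $p=1$ case and has the wrong shape for $p>1$. The identity that actually holds is
\begin{equation*}
 G'_j=(c-v_j)\bigl(T^{-1}+\dots+T^{-p}\bigr)\bigl(v_{j+1}\cdots v_{j+p}(G_{j+p+1}-G_j)\bigr),
\end{equation*}
a sum of $p$ blocks $(c-v_j)\,v_{j+1-s}\cdots v_{j+p-s}\,(G_{j+p+1-s}-G_{j-s})$, $s=1,\dots,p$: the $G$'s enter through differences across $p+1$ sites (not $2$), spread by the order-$p$ operator $T^{-1}+\dots+T^{-p}$, and $G_j$ itself does not occur at all. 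Already for $p=2$ one checks that such a combination cannot be rewritten as $(c-v_j)v_j$ times a two-site difference $H_{j+1}-H_{j-1}$ plus a multiple of $G_j$ (the former has no $G_j$ term but reaches $G_{j\pm 2}$, the latter cannot), so your coefficient-matching scheme, as described, would simply fail to close; it could only succeed after the ansatz itself is replaced by the correct $(p+1)$-step telescoping structure, which is the heart of the matter.

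The second gap is tactical but important: both computational routes you offer avoid the device that makes the ``direct calculation'' tractable. The paper neither works in pure $v$-variables nor pulls the constraint back to pure $u$-variables (your ``safer'' alternative --- which the paper's own remark following the proof says is essentially more complicated, consistently with the pure-$u$ computations being called hopeless already for $p=3$). Instead it rewrites $G_j$ in \emph{mixed} variables via the Miura map (\ref{uvv}),
\begin{equation*}
 G_j=t(v_j-c)(u_{j-p+1}+\dots+u_j)+tv_ju_{j+1}+b_jv_j-d_j,
\end{equation*}
and differentiates the $u$'s by (\ref{BLp}) and the $v$'s by (\ref{mBLpc}) simultaneously. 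This replaces your degree-$(p{+}1)$ and degree-$p$ sums $\sigma_j$, $\pi_j$ by an expression linear in the $u$'s, which is what keeps the telescoping bookkeeping finite and lets one recognize the derivative as the combination of $G_k$'s displayed above. Without either this reformulation or the explicit identity, your proposal remains a plan rather than a proof.
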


\begin{proof}
When checking the consistency, it is convenient (but not necessary) to take into account the substitution (\ref{uvv}) and rewrite the constraint in mixed variables as
\begin{equation}\label{vuu}
 G_j=t(v_j-c)(u_{j-p+1}+\dots+u_j)+tv_ju_{j+1}+b_jv_j-d_j=0,
\end{equation}
where $u_j$ and $v_j$ are differentiated in virtue of equations (\ref{BLp}) and (\ref{mBLpc}), respectively. The direct calculations prove the identity
\begin{equation}\label{F'}
 G'_j=(c-v_j)\bigl(T^{-1}+\dots+T^{-p}\bigr)\bigl(v_{j+1}\cdots v_{j+p}(G_{j+p+1}-G_j)\bigr),
\end{equation}
which means that if all $G_j=0$ then also $G'_j=0$, as required. 
\end{proof}

\begin{remark}
We can express $v_j$ from equation (\ref{vuu}) as a rational function of $u_{j-p+1},\dots,u_{j+1}$ and $t$. By substituting this into (\ref{uvv}) it is possible, in principle, to obtain an equivalent constraint of the form $H_j(t,u_{j-p},\dots,u_{j+p})=0$ in the original variables, but it is essentially more complicated.
\end{remark}

Due to the lattice (\ref{mBLpc}) and the constraint (\ref{cG}), each tuple $V_j=(v_j,\dots,v_{j+2p-1})$ satisfies a closed equation of $2r$ ODEs of first order, and equation (\ref{cG}) determines a rational map $V_j\mapsto V_{j+1}$ on solutions of these systems (B\"acklund transformation). The case $p=1$ is reduced to the Painlev\'e equations. It is natural to assume that the systems corresponding to $p>1$ also have the Painlev\'e property. However, as before, we do not analyse them in the general setting and restrict ourselves by studying a more special reduction that arises under the additional boundary condition $v_0=0$. In this case the lattice (\ref{mBLpc}) is divided into two subsystems for negative and positive $j$. They are on equal footing and we consider only $j>0$. Then the constraint equations with negative numbers are not needed since they enters into the right hand side of the identity (\ref{F'}) with the zero factor $v_0$. Thus, our reduction is defined by equations
\begin{equation}\label{v00}
 v_0=0,\quad G_0=G_1=G_2=\dots=0.
\end{equation}
The variables $v_1,\dots,v_p$ satisfy the equations
\begin{equation}\label{sysv1}
 v'_1=(c-v_1)v_1\cdots v_{p+1},~\dots,~~ v'_p=(c-v_p)v_p\cdots v_{2p},
\end{equation}
where $v_{p+1},\dots,v_{2p}$ can be eliminated by use of equations
\begin{equation}\label{sysv2}
 G_1(v_1,\dots,v_{p+1})=0,~\dots,~~ G_p(v_1,\dots,v_{2p})=0.
\end{equation}
Therefore, in contrast to the general reduction (\ref{cG}) which is equivalent to a system of ODEs of order $2p$, the termination on the half-line brings to a system of order $p$. We will show that this system is linearizable and its regular solution is expressed in terms of ${}_pF_p$-type hypergeometric functions.

Before proceeding to this problem, let us analyse the initial data provided by our reduction. A solution of the lattice equation (\ref{mBLpc}) with the constraint (\ref{cG}) is regular at $t=0$ if the values $v_j(0)$ satisfy the relations
\[
 b_jv_j(0)=d_j,
\]
where $b_j$ and $d_j$ satisfy equations (\ref{bd}). The boundary condition $v_0=0$ implies $d_0=0$, and $b_0$ can be arbitrary. If $c=0$ then we have $d_{j+p+1}=d_j$, but then some variables $v_{k(p+1)}$ also turn to zero, because all $b_{k(p+1)}$ can not vanish due to their quasi-periodicity. Therefore, in this case the lattice is restricted to a segment.  Further on, we assume that $c\ne0$. Additionally, we assume that
\[
 d_j\ne0,\quad b_j\ne 0,\quad j>0,
\]
then all $v_j(0)$ for $j>0$ are uniquely defined and are not equal to 0. These formulae determine the general family of inital data governed by the constraint (\ref{cG}). The initial data for the lattice (\ref{BLp}) are obtained under the substitutions (\ref{uvv}) or (\ref{tuvv}), and in both cases the condition $v_0=0$ provides $u_{-p+1}=\dots=u_0=0$, which is sufficient for the termination of (\ref{BLp}) onto the half-line. It is clear that for a given solution of (\ref{mBLpc}) these substitutions give two different solutions of BL$_p$, however, whole solutions set coincide. Indeed, when we use the first substitution, the initial data for $u_j$ are
\begin{equation}\label{icpbd}
 u_j(0)= \frac{(cb_{j-1}-d_{j-1})d_j\cdots d_{j+p-1}}{b_{j-1}\cdots b_{j+p-1}},\quad j>0,
\end{equation}
while the second substitution gives
\[
 u_j(0)= \frac{d_j\cdots d_{j+p-1}(cb_{j+p}-d_{j+p})}{b_j\cdots b_{j+p}},\quad j>0, 
\]
and this is reduced to (\ref{icpbd}) if we apply the change $b_i\to b_{i-1}$ and use the quasi-periodicity condition. Therefore one substitution is sufficient and in what follows we use only (\ref{uvv}) for which the calculations are slightly simpler (due to the broken symmetry when restricting onto the half-line). For the choice of $b_j=(j+1)/p$ and $d_j=cj/(p+1)$, the formula (\ref{icpbd}) gives our favourite step, up to a scaling factor: $u_j(0)=c^{p+1}p^p/(p+1)^{p+1}$.

\begin{remark}
Let us compare the reduction (\ref{cG}) for $p=1$ with (\ref{cA}). For $p=1$, the lattice equation (\ref{mBLpc}) admits a local master-symmetry and one can check that (\ref{cG}) is an integrated stationary equation for a non-autonomous symmetry:
\[
 tv_{j,t_2}+v_{j,\tau_1}+b_0v_{j,t} =(c-v_j)v_j(G_{j+1}-G_{j-1})=0,
\]
where flows in the left hand side are defined by equations
\begin{align*}
 v'_j &= (c-v_j)v_j(v_{j+1}-v_{j-1}),\\
 v_{j,t_2} &= (c-v_j)v_j\bigl((c-v_{j+1})v_{j+1}(v_{j+2}+v_j)-(c-v_{j-1})v_{j-1}(v_j+v_{j-2})\\
   &\qquad +c(c-v_{j-1}-v_{j+1})(v_{j-1}-v_{j+1})\bigr),\\
 v_{j,\tau_1} &= (c-v_j)v_j\bigl((j+1)v_{j+1}-(j-1)v_{j-1}-c\bigr).
\end{align*}
Moreover, both substitutions (\ref{uvv}) and (\ref{tuvv}) send this stationary equation to the constraint (\ref{cA}) for the Volterra lattice, up to renaming of parameters.
\end{remark}

\subsection{Linearization}

The system (\ref{sysv1}), (\ref{sysv2}) is linearized by introducing $\tau$-functions $f_j$ related with $u_j$ according to equations from Proposition \ref{pr:uf}. Although solutions of the lattice equation are determined by first $p$ functions $f_1,\dots,f_p$, it is convenient to use also $f_{p+1}=f'_1$. In proving the following theorem, we use the standard rules for differentiating generalized hypergeometric functions \cite{NIST}, which can be written as
\begin{equation}\label{dF}
\begin{aligned}
 & \Bigl(\frac{t}{a_j}\frac{d}{dt}+1\Bigr)F(A;B;t) = F(A+1_j;B;t),\\
 & \Bigl(\frac{t}{b_j}\frac{d}{dt}+1\Bigr)F(A;B+1_j;t) = F(A;B;t),\\
 & \frac{d}{dt}F(A;B;t)=\frac{a_1\cdots a_p}{b_1\cdots b_q}F(A+1_1+\dots+1_p;B+1_1+\dots+1_q;t),
\end{aligned}
\end{equation}
where $F(A;B;t)={}_pF_q(a_1,\dots,a_p;b_1,\dots,b_q;t)$ and $1_j$ denotes addition of 1 to $j$-th parameter.

\begin{theorem}\label{th:p}
Let parameters $a_j$, $b_j$ satisfy conditions
\begin{gather*}
 a_{j+p+1}=a_j+1,\quad b_{j+p}=b_j+1,\quad j\ge0,\\
 a_0=0,\quad c\ne0,\quad a_j\ne0,\quad b_j\ne 0,\quad j>0.
\end{gather*}
Then the solution of BL$_p$ with the boundary condition $u_j=0$ for $j\le0$ and initial data
\begin{equation}\label{icp}
 u_1(0)= c^{p+1}\frac{a_j\cdots a_{j+p-1}}{b_j\cdots b_{j+p-1}},\quad 
 u_j(0)= c^{p+1}\frac{(b_{j-1}-a_{j-1})a_j\cdots a_{j+p-1}}{b_{j-1}\cdots b_{j+p-1}},\quad j>1 
\end{equation}
is constructed by formulae (\ref{uf}) and (\ref{fL}) from functions
\begin{equation}\label{fH}
 f_j={}_pF_p(a_j,\dots,\widehat1,\dots,a_{j+p};b_j,\dots,b_{j+p-1};c^{p+1}t),\quad j=1,\dots,p,
\end{equation}
where $\widehat1$ denotes the omitted value $a_{p+1}=1$ in the first set of parameters.
\end{theorem}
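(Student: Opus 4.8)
The plan is to mirror the proof of Proposition~\ref{pr:p1}: use the constraint on the half-line to collapse the lattice to a finite system, pass to the $\tau$-functions of Proposition~\ref{pr:uf}, and show that in these variables the constraint turns into the system of contiguity relations characterizing the functions~(\ref{fH}). I would first record the initial data. By the identity~(\ref{F'}), the equations $v_0=0$, $G_0=G_1=\dots=0$ cut out an invariant submanifold, and regularity at $t=0$ forces $b_jv_j(0)=d_j$; setting $d_j=ca_j$ (so that the quasi-periodicity $d_{j+p+1}=d_j+c$ becomes $a_{j+p+1}=a_j+1$) gives $v_0=0$ and $v_j(0)=ca_j/b_j$ for $j>0$. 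Feeding these into the Miura map~(\ref{uvv}), the product $u_j(0)=(c-v_{j-1}(0))v_j(0)\cdots v_{j+p-1}(0)$ telescopes to exactly~(\ref{icp}), so it remains to identify the solution of the reduced system~(\ref{sysv1})--(\ref{sysv2}), which is an ODE initial value problem with a unique regular solution.

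Next I would introduce the $\tau$-functions, with $u_j$ taken from~(\ref{uvv}) and $f_{p+1}:=f_1'$ (this is~(\ref{fL}) at $j=1$); the BL$_p$ dynamics is then automatic through the bilinear equation~(\ref{fL}) and the Wronskian formulae of Proposition~\ref{pr:detf}. On the half-line the backward product $v_{j-1}\cdots v_{j-p}$ contains $v_0=0$ for $1\le j\le p$, so~(\ref{sysv1}) collapses to $v_j'=v_ju_{j+1}$; since $u_{j+1}=(\log(f_{j+1}/f_j))'$, this integrates to $v_j=\kappa_jf_{j+1}/f_j$ with a constant $\kappa_j$ fixed by the values at $t=0$. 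Writing the constraint in the mixed form~(\ref{vuu}) and using the telescoping $\sum_{i=j-p+1}^{j}u_i=f_j'/f_j$ (valid because $f_{j-p}=1$ for $1\le j\le p$) together with $v_ju_{j+1}=v_j'$, the equation $G_j=0$ becomes, after multiplication by $f_j$ and fixing the $\kappa_j$ from the initial data, the first-order linear relation
\begin{equation*}
 \frac{t}{b_j}f_{j+1}'+f_{j+1}=\frac{t}{a_j}f_j'+f_j,\qquad j=1,\dots,p,
\end{equation*}
where the case $j=p$ carries an extra constant because $f_{p+1}=f_1'$ is normalized to $f_1'(0)$ rather than to $1$.

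Finally I would recognize this as precisely the contiguity relation obtained by composing the first two rules of~(\ref{dF}): the step $f_j\mapsto f_{j+1}$ raises the numerator parameter $a_j$ to $a_j+1=a_{j+p+1}$ and the denominator parameter $b_j$ to $b_j+1=b_{j+p}$, which is exactly the shift of parameters built into~(\ref{fH}); the third rule of~(\ref{dF}) likewise confirms $f_{p+1}=f_1'$. Since the functions~(\ref{fH}) satisfy these relations and obey $f_j(0)=1$, they furnish a solution of the reduced system, hence the solution by uniqueness, and the $u_j$ are then recovered by~(\ref{uf}) and~(\ref{fL}) with initial data~(\ref{icp}) as computed above. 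The delicate step, where I expect the real work to lie, is the translation of the constraint into this linear relation: one must verify that the proportionality $v_j\propto f_{j+1}/f_j$ survives only by virtue of the vanishing boundary term $v_0=0$, and carefully track the omitted parameter $\widehat1=a_{p+1}$ and the non-unit normalization $f_{p+1}(0)=f_1'(0)$ so that the parameter increments line up with~(\ref{dF}) exactly. Once the contiguity relations are in hand, the remaining verifications are routine.
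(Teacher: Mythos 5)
Your proposal is correct and follows essentially the same route as the paper's proof: both pass to the constraint in mixed form (\ref{vuu}) with $d_j=ca_j$, derive $v_j=\kappa_j f_{j+1}/f_j$ from the collapsed dynamics $v_j'=v_ju_{j+1}$ on the half-line, substitute to obtain the linear system (\ref{fsystem}) (your contiguity relations, including the extra constant at $j=p$ coming from $f_{p+1}(0)=f_1'(0)$), and verify via the differentiation rules (\ref{dF}) that the functions (\ref{fH}) solve it. The details you flag as delicate are handled in the paper exactly as you anticipate, so no gap remains.
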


\begin{proof} 
Initial data (\ref{icp}) are the same as (\ref{icpbd}), here we only denoted $d_j=ca_j$ and wrote the expression for $u_1$ separately, in order to handle the case $b_0=0$ (notice that if $j=1$ then the ratio $(cb_0-d_0)/b_0$ in (\ref{icpbd}) is obtained from $c-v_0$ and therefore it is always equal to $c$, even if $b_0=0$). Therefore, the solution $u_j$ is obtained by substitution (\ref{uvv}) from the solution $v_j$ of the lattice equation (\ref{mBLpc}) which satisfies the constraint (\ref{cG}) and the boundary condition $v_0=0$. 

We derive relations between $v_j$ and $f_j$. First, we use the constraint equation in the form (\ref{vuu}). By replacing $u_j$ through $f_j$ and taking into account that $u_j=0$ for $j\le0$ we obtain
\begin{equation}\label{vf}
 v_j\Bigl(t\frac{f'_{j+1}}{f_{j+1}}+b_j\Bigr)=c\Bigl(t\frac{f'_j}{f_j}+a_j\Bigr),\quad j=1,\dots,p,
\end{equation}
which gives $v_1,\dots,v_p$. On the other hand, we have from the substitution (\ref{uvv}) 
\begin{gather*}
 u_1=\frac{f_{p+1}}{f_1}=cv_1\cdots v_p,\\
 u_j=\frac{f'_j}{f_j}-\frac{f'_{j-1}}{f_{j-1}}=(c-v_{j-1})v_j\cdots v_{j+p-1}
    =\frac{v'_{j-1}}{v_{j-1}},\quad j=2,\dots,p+1,
\end{gather*}
whence it follows
\[
 v_j=\kappa_j\frac{f_{j+1}}{f_j},\quad j=1,\dots,p,\quad c\kappa_1\cdots \kappa_p=1.
\]
Integration constants $\kappa_j$ are fixed by initial data $v_j(0)=ca_j/b_j$ and $f_1(0)=\dots=f_p(0)=1$, which gives
\begin{equation}\label{vff}
 v_j=\frac{ca_jf_{j+1}}{b_jf_j},\quad j=1,\dots,p-1,\quad
 v_p=\frac{b_1\cdots b_{p-1}f_{p+1}}{c^pa_1\cdots a_{p-1}f_p}.
\end{equation}
Substituting here (\ref{vf}), we obtain a closed system of linear differential equations for $f_j$:
\begin{equation}\label{fsystem}
\begin{gathered}
  f'_1=f_{p+1},\quad \frac{1}{a_j}tf'_j+f_j = \frac{1}{b_j}tf'_{j+1}+f_{j+1},\quad j=1,\dots,p-1,\\
 c^{p+1}\frac{a_1\cdots a_p}{b_1\cdots b_p}\Bigl(\frac{1}{a_p}tf'_p+f_p\Bigr) = \frac{1}{b_p}tf'_{p+1}+f_{p+1},
\end{gathered}
\end{equation} 
whose solution leads to the answer (\ref{fH}). Indeed, let us denote
\[
 F(A;B;t)={}_pF_p(a_1,\dots,a_p;b_1,\dots,b_p;t)
\]  
then taking into account that $a_{j+p+1}=a_j+1$ and $b_{j+p}=b_j+1$ (and also that the parameters in each set $A$ and $B$ can be permuted) functions (\ref{fH}) can be written as
\begin{gather*}
 f_1=F(A;B;c^{p+1}t),\quad f_2=F(A+1_1;B+1_1;c^{p+1}t),~~\dots\\
 f_j=F(A+1_1+\cdots+1_{j-1};B+1_1+\cdots+1_{j-1};c^{p+1}t),~~ \dots,\\
 f_p=F(A+1_1+\cdots+1_{p-1};B+1_1+\cdots+1_{p-1};c^{p+1}t).
\end{gather*}
From the identities (\ref{dF}) it follows that $f_{p+1}=f'_1=c^{p+1}\frac{a_1\cdots a_p}{b_1\cdots b_p}F(A+1_1+\dots+1_p;B+1_1+\dots+1_p;c^{p+1}t)$ and all these functions satisfy the system (\ref{fsystem}).
\end{proof}

Proposition \ref{pr:step} is a special case of the theorem proved corresponding to the choice of parameters
\begin{equation}\label{abc}
 a_j=j/(p+1),\quad b_j=(j+1)/p,\quad c^{p+1}=(p+1)^{p+1}/p^p.
\end{equation}
Notice that in this case the factor on the left hand side of the last equation (\ref{fsystem}) is equal to 1.

\begin{remark}
If parameters $a_j$ and $b_j$ coincide for some $j$ then the hypergeometric functions degenerate, and the variable $u_{j+1}$ is zero, according to (\ref{icp}). For $p>1$ this does not lead to the restriction of the lattice to a segment.
\end{remark}

\subsection{Asymptotics for the unit step}\label{s:asymp}

Until now, we have not been interested in the global properties of the solutions found. Series of type ${}_pF_p(A;B;z)$ converge in the entire complex plane of $z=c^{p+1}t$, so the $\tau$-functions $f_j$ defined from them are entire (for all $j$, since for $j>p+1$ they are written in terms of Wronskians). However, functions $u_j$ and $v_j$ can have poles. Apparently, for real-valued solutions, the most important condition of regularity is the constancy of the sign of the initial data \cite{Kulaev_Shabat_2019}. An analysis of the restrictions on parameters that arise here is beyond the scope of this article. We confine ourselves to calculating the first term of the asymptotics for the case of the unit step which is as a typical example defined for all $t\in\mathbb R$ (for $p=1$, a slightly more detailed description can be found in \cite{Adler_Shabat_2019}, where we obtained two terms of the asymptotics of solutions from Proposition \ref{pr:p1}; moreover, some numeric experiments described there show that small changes in parameters and initial data do not change the general structure of solutions).

We use the following asymptotic formulae for the generalized hypergeometric functions \cite{NIST}:
\begin{align}
\label{Fas+}
 {}_pF_p(a_1,\dots,a_p;b_1,\dots,b_p;z)&=Ce^zz^\nu(1+C_1/z+\dots),\quad z\to+\infty,\\
\label{Fas-}
 {}_pF_p(a_1,\dots,a_p;b_1,\dots,b_p;z)&=\sum^p_{j=1}K_j(-z)^{-a_j}(1+K_{j,1}/z+\dots),\quad z\to-\infty,
\end{align}
where $\nu=a_1+\cdots+a_p-b_1-\cdots-b_p$. The expressions for remaining constants and further coefficients of the expansions in terms of $a_j$ and $b_j$ are also known, but we do not need their explicit form.

\begin{proposition}\label{pr:asymp}
The solution of the lattice equation (\ref{BL}) with the initial data (\ref{step}) has the following asymptotics:
\begin{gather}
\label{+inf}
 u_{(p+1)k+i} = \left\{\begin{array}{ll}
  \dfrac{(p+1)^{p+1}}{p^p}-\dfrac{4k+3}{2t}+o(t^{-1}), & i=1,\\[8pt]
  \dfrac{p^{p-1}(k+1)\bigl(k+\frac{p+i}{p+1}\bigr)}{(p+1)^{p+1}t^2}+o(t^{-2}), & i=2,\dots,p+1
\end{array}\right. ~\text{for}~t\to+\infty,\\
\label{-inf}
 u_{pk+i} = -\frac{k+1}{(p+1)t}+o(t^{-1}),\quad i=1,\dots,p\quad \text{for}~t\to-\infty.
\end{gather}
\end{proposition}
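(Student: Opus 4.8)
The plan is to pass to the logarithmic derivatives $w_j:=(\log f_j)'=f'_j/f_j$, in terms of which Proposition~\ref{pr:step} reads $u_1=w_1$, $u_j=w_j-w_{j-1}$ for $j>1$, equivalently $w_j=u_1+\dots+u_j$. Thus the asymptotics of every $u_j$ are governed by the large-$|t|$ behaviour of the entire $\tau$-functions $f_j$, each of which is a Wronskian (\ref{fj}) of $f_1,\dots,f_p$ and their derivatives. I would first read off the leading behaviour of $f_1,\dots,f_p$ from (\ref{Fas+}), (\ref{Fas-}) and then propagate it through the Wronskians. A preliminary computation for the step parameters (\ref{abc}) gives $\nu=a_1+\dots+a_p-b_1-\dots-b_p=-\tfrac32$ for each $f_i$, and shows that the least upper parameter of $f_i$ equals $i/(p+1)$ for $i=1,\dots,p$; the natural scale is $z=\mu t$ with $\mu=(p+1)^{p+1}/p^p$.

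For $t\to-\infty$ the argument is clean. Formula (\ref{Fas-}) gives $f_i\sim K_i(-z)^{-i/(p+1)}$, so each $f_i^{(n)}$ has the single leading power $(-z)^{-i/(p+1)-n}$; along the interlaced sequence (\ref{fff}) these exponents $\gamma_{i+pn}=i/(p+1)+n$ are strictly increasing. Hence the Wronskian of the corresponding powers is a nonvanishing generalized Vandermonde, and $f_j\sim C_j(-t)^{-\Gamma_j}$ with $\Gamma_j=\sum_{s=0}^m\gamma_{k_0+s}+\binom{m+1}{2}$ for $j=k_0+(p+1)m$. Then $w_j=-\Gamma_j/t+o(t^{-1})$ and $u_{pk+i}=-(\Gamma_{pk+i}-\Gamma_{pk+i-1})/t+o(t^{-1})$, and the remaining step is the finite bookkeeping that $\Gamma_{pk+i}-\Gamma_{pk+i-1}=(k+1)/(p+1)$, which gives (\ref{-inf}).

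For $t\to+\infty$ all of $f_1,\dots,f_{p+1}$ carry the factor $e^z$, and the decisive simplification is the identity $W(e^z\sigma_{k_0},\dots,e^z\sigma_{k_0+m})=e^{(m+1)z}W(\sigma_{k_0},\dots,\sigma_{k_0+m})$, proved by reducing $(\mu+\partial_t)^i$ to $\partial_t^i$ through unipotent row operations. Consequently $w_j=(m+1)\mu+(\log W_\sigma)'$, so the coefficient of $\mu$ is exactly the Wronskian size $m+1=k+1$, namely the number of surviving nodes $\le j$; this yields the leading order $u_{(p+1)k+1}\to\mu$, $u_{(p+1)k+i}\to0$ for $i\ge2$, and reduces the rest to the leading powers $\delta_j$ of the algebraic Wronskians $W_\sigma$ through $(\log W_\sigma)'=\delta_j/t+O(t^{-2})$. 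The surviving-node correction can then be fixed by a short bootstrap in (\ref{BL}): writing $u_{(p+1)k+1}=\mu+A_k/t+\dots$ one has $u'=O(t^{-2})$, while the right-hand side equals $\mu\,(S_k-S_{k-1})\,t^{-\beta}$ with $S_k=\sum_{i=2}^{p+1}B_{k,i}$ and $t^{-\beta}$ the common decay of the vanishing nodes; matching powers forces $\beta=2$, the vanishing-node equations then give $A_{k+1}-A_k=-2$, and with $A_0=-\tfrac32$ read off from the ${}_pF_p$ asymptotics of $u_1$ this yields $A_k=-(4k+3)/2$, i.e.\ (\ref{+inf}) for $i=1$; the same surviving-node balance fixes $S_k$.

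The hard part will be the confluence at $t\to+\infty$. Since $\nu=-\tfrac32$ is the same for all $f_i$, the algebraic parts $\sigma_{i+pn}=(\mu+\partial_t)^n\rho_i$ all share the leading power $z^{-3/2}$, the naive Vandermonde for $W_\sigma$ degenerates, and one must descend to the subleading coefficients of the hypergeometric expansions to break the degeneracy and evaluate $\delta_j$ together with the constants $C_j$. I expect this confluent-Wronskian step — equivalently, propagating the second asymptotic coefficients $C_{i,1}$ of the ${}_pF_p$ through the bilinear recurrence (\ref{fL}) — to be the technical heart, in particular for the individual coefficients $B_{k,i}=p^{p-1}(k+1)\bigl(k+\tfrac{p+i}{p+1}\bigr)/(p+1)^{p+1}$ at general $k$, which the low-order bootstrap leaves undetermined beyond their sum. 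For $k=0$ a direct check is available, since $u_2,\dots,u_{p+1}$ are governed by the explicit $f_1,\dots,f_{p+1}$ whose $t^{-2}$ terms come from the known $C_{i,1}$.
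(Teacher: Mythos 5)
Your $t\to-\infty$ argument and your treatment of the surviving nodes ($i=1$) at $t\to+\infty$ are sound. For $t\to-\infty$ you in fact take a legitimately different route from the paper: the paper reads off only $u_1,\dots,u_p$ from (\ref{Fas-}) and then propagates the coefficients $\alpha_j$ of $u_j=\alpha_j/t+o(t^{-1})$ through the lattice recursion $\alpha_{j+p}+\cdots+\alpha_{j+1}-\alpha_{j-1}-\cdots-\alpha_{j-p}=-1$, whereas you push the exponents through the Wronskians (\ref{fj}); since the exponents $i/(p+1)+n$ along the sequence (\ref{fff}) are pairwise distinct, the generalized Vandermonde coefficient is nonzero and your bookkeeping closes uniformly in $k$ (I checked it reproduces (\ref{-inf})). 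The gap is in the second line of (\ref{+inf}), and you have named it yourself: your bootstrap through (\ref{BL}) determines only the sums $S_k=\sum_{i=2}^{p+1}B_{k,i}$ (via $-A_k=\mu(S_k-S_{k-1})$ with $\mu=(p+1)^{p+1}/p^p$), never the individual $B_{k,i}$, and the deferred ``confluent-Wronskian step'' is not a technicality but the entire difficulty. Because $\nu=-3/2$ is the same for every $f_i$, all algebraic parts $\sigma_l$ share the leading power $z^{-3/2}$, so the naive leading term of a size-$(m+1)$ Wronskian $W_\sigma$ is annihilated by cancellations through order $m$; extracting its true leading behaviour requires the expansion coefficients $C_{i,1},\dots,C_{i,m}$ of each ${}_pF_p$ and determinants built from them, with $m=k$ unbounded. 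No closed form for these is offered, and the $k=0$ check does not propagate: as you yourself observe, the lattice equation cannot transport $t^{-2}$ coefficients from one group of nodes to the next, because that would require $t^{-3}$ data of the previous group. So for $i=2,\dots,p+1$ and general $k$ the statement remains unproved in your proposal.

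The paper closes exactly this gap by a device your plan never invokes: the solution satisfies not only (\ref{BL}) but also the non-autonomous constraint (\ref{cG}), used in the mixed form (\ref{vuu}), because it arises from the reduction of Theorem \ref{th:p}. Writing $v_j=\gamma_j/t+\dots$ for $j\equiv0$ and $v_j=c-\delta_j/t+\dots$ for $j\not\equiv0\pmod{p+1}$ and substituting into (\ref{vuu}) fixes the coefficients explicitly, $\gamma_j=a_j/c^p$ and $\delta_j=(b_j-a_j)/c^p$ with $a_j,b_j$ from (\ref{abc}); then expanding $u_j=(c-v_{j-1})v_j\cdots v_{j+p-1}$ gives each coefficient individually, $\beta_j=c^{p-1}\gamma_{(p+1)(k+1)}\delta_{j-1}$ for $j=(p+1)k+i$, which is precisely the second line of (\ref{+inf}). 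This one extra algebraic relation per node is the ``additional information'' beyond the lattice equation that the paper stresses is indispensable for $p>1$; to complete your proof you would either have to import this constraint or actually carry out the confluent Wronskian analysis, which is essentially a re-derivation of the same data by much heavier means.
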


\begin{proof} 
Consider first the asymptotics for $t\to+\infty$. The solution is constructed by functions (\ref{fH}) with the parameters (\ref{abc}). It is easy to check that these parameters satisfy, for all $j$ and $p$, the equality
\[
 \nu=a_j+\cdots+a_{j+p}-1-b_j-\cdots-b_{j+p-1}= -3/2,
\] 
which defines one and the same parameter $\nu$ in (\ref{Fas+}) for all functions $f_1,\dots,f_{p+1}$. Then (\ref{uf}) implies that first $p+1$ functions have the asymptotics of the form
\[
 u_1=c^{p+1}-\frac{3}{2t}+o(t^{-1}),\quad u_j=\frac{\beta_j}{t^2}+o(t^{-2}),\quad j=2,\dots,p+1,
\]
where $\beta_j$ are related with the coefficients $C_1$ from the respective expansions (\ref{Fas+}). A direct substitution into (\ref{BL}) proves that this pattern is reproduced for subsequent $j$ modulo $p+1$:
\[
 u_j=\left\{\begin{array}{cl}
  c^{p+1}-\alpha_j/t+\dots, & j\equiv 1\mod(p+1),\\
  \beta_j/t^2+\dots,        & j\not\equiv 1\mod(p+1).
 \end{array}\right.
\]
Moreover, $\alpha_j$ satisfy the recurrence relation $\alpha_{j+p+1}=\alpha_j-2$, which gives the first line in (\ref{+inf}), but for $\beta_j$ there is only one equation $c(\beta_{j+1}+\cdots+\beta_{j+p}-\beta_{j-p}-\cdots-\beta_{j-1})=\alpha_j$, $j\equiv 1\mod(p+1)$, which is not sufficient when $p>1$. However, we should remember that our solution satisfies also the constraint (\ref{cG}) which gives additional information. Relations (\ref{vff}) and the lattice equation (\ref{mBLpc}) imply the following asymptotic formula for the variables $v_j$:
\[
 v_j=\left\{\begin{array}{cl}
  \gamma_j/t+\dots,   & j\equiv 0\mod(p+1),\\
  c-\delta_j/t+\dots, & j\not\equiv 0\mod(p+1).
 \end{array}\right.
\]
By substituting formulae for $u_j$ and $v_j$ into the constraint equation (\ref{vuu}) and keeping only the leading terms we obtain
\[
 \gamma_j=a_j/c^p,\quad \delta_j=(b_j-a_j)/c^p,
\]
where $a_j$ and $b_j$ are defined by (\ref{abc}). In turn, $\alpha_j$ and $\beta_j$ are expressed through $\gamma_j$ and $\delta_j$ by expanding the relation $u_j=(c-v_{j-1})v_j\cdots v_{j+p-1}$, which gives
\begin{gather*}
 \alpha_j = c^p(\gamma_{j-1}+\delta_j+\cdots+\delta_{j+p-1}),\\
 \beta_j=c^{p-1}\gamma_{(p+1)(k+1)}\delta_{j-1},\quad j=(p+1)k+i,\quad i=2,\dots,p+1,
\end{gather*}
and we arrive to (\ref{+inf}) after easy manipulations.

For $t\to-\infty$, we use the formula (\ref{Fas-}). The leading term of the asymptotics of $f_j$ corresponds to the minimal value in the set of parameters, that is $a_j=j/(p+1)$, and equation (\ref{uf}) provides the asymptotics for first $p$ variables
\[
 u_j=-\frac{1}{(p+1)t}+o(t^{-1}),\quad u_j=1,\dots,p.
\]
In this case it is sufficent to use the lattice equation only. It proves that $u_j=\alpha_j/t+o(t^{-1})$ for all $j$ and that $\alpha_{j+p}+\cdots+\alpha_{j+1}-\alpha_{j-1}-\cdots-\alpha_{j-p}=-1$, which leads to the formula (\ref{-inf}).
\end{proof}

\begin{figure}[t]
\centering
\includegraphics[width=0.46\textwidth]{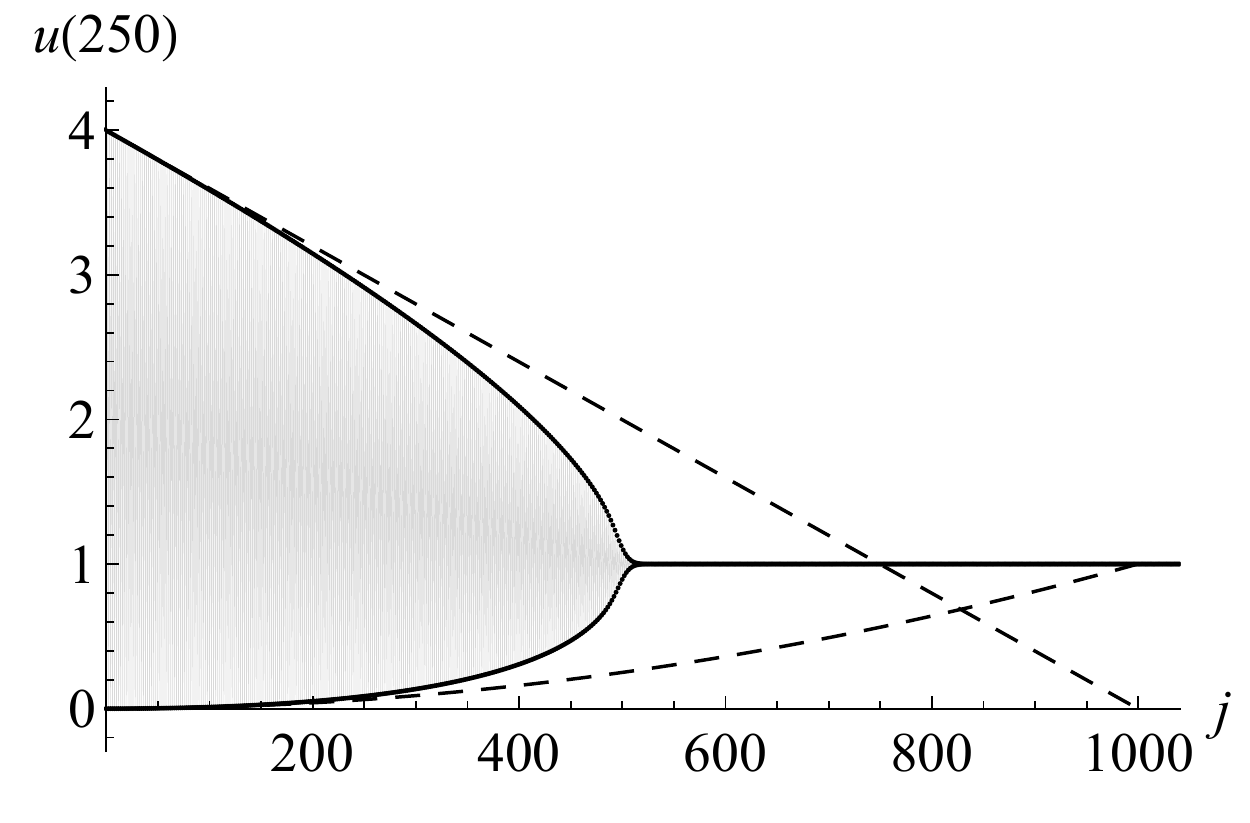}\qquad
\includegraphics[width=0.46\textwidth]{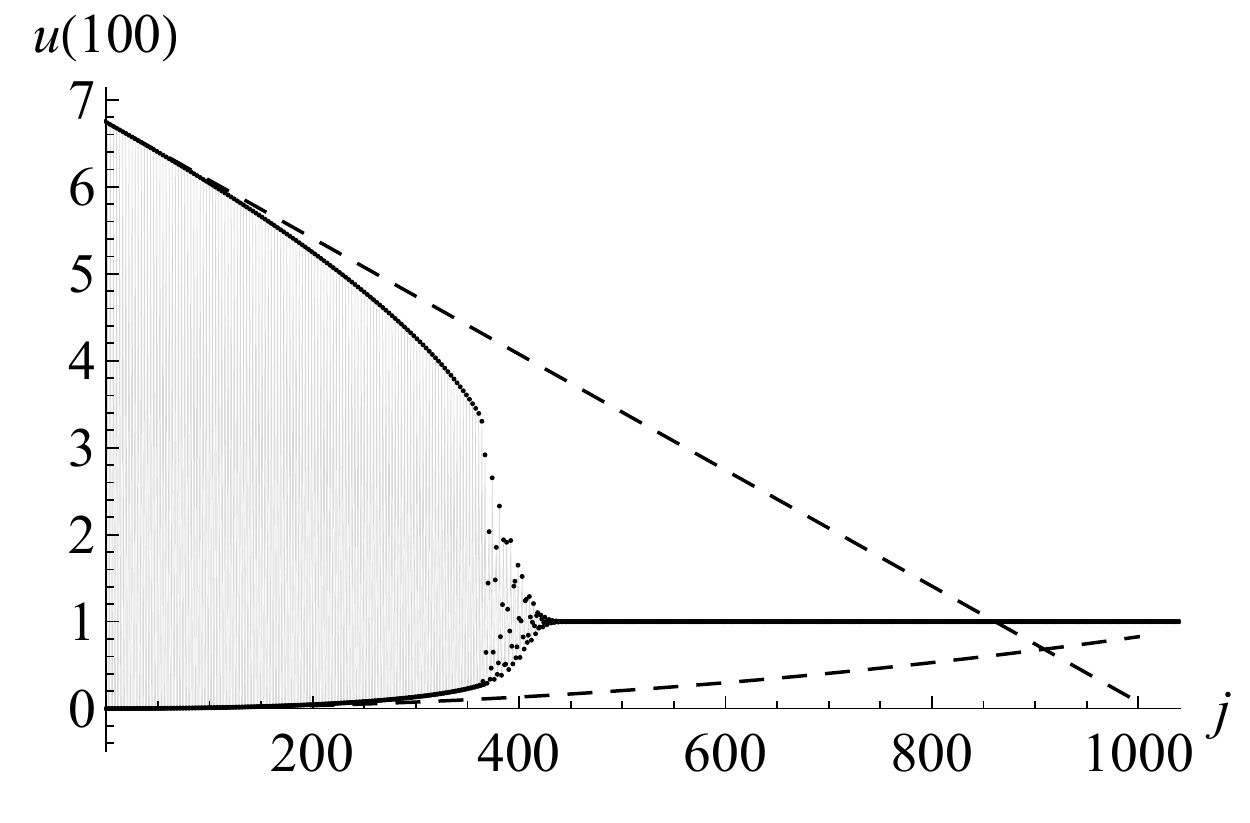}
\caption{Decay of the unit step for $p=1$ and $p=2$. The dashed lines show the asymptotics (\ref{+inf}) as functions of $j$ for a fixed $t$.}
\label{fig:p12} 
\end{figure}

The formula (\ref{-inf}) gives a good approximation of the solution for $t<0$ up to the values $j\sim p(p+1)t$. Near this point, the solution smoothly joins the initial unit step. Slightly coarsening, we can say that for $t<0$ the solution behaves almost like a piecewise linear function $u_j(t)=\min\{1,j/(p(p+1)t)\}$. The decay of the step for $t>0$ is more interesting. Here the formula (\ref{+inf}) gives the upper limit of the variables, and also makes it possible to roughly estimate the size of the decay zone. To do this, consider the leading terms of the asymptotics (\ref{+inf}) as functions of $j$ for a fixed $t>0$. The plots of these functions are shown on Fig.~\ref{fig:p12} (for $p>1$, parabolas corresponding to the second line in (\ref{+inf}) practically coincide). We see that the main information is given by the straight line
\[
 u=\frac{(p+1)^{p+1}}{p^p}-\frac{1}{t}\Bigl(2\frac{j-1}{p+1}+\frac{3}{2}\Bigr)
\] 
below which lies the graph of the solution. This line meets the $j$-axis near the point $j=Ct$, where $C=(p+1)^{p+2}/(2p^p)$
is an upper estimate for the rate of the decay zone expansion. It can be seen from the plots that the actual speed $j(t)$ of the decay point (which can be defined as the maximum value of $j$ at which the value of $u_j(t)$ differs noticeably from 1) is several times less. Somewhat more accurate estimates can be obtained by calculating the next asymptotic terms, but this does not completely solve the problem, since the asymptotics describes the solution behaviour for a fixed $j$ far beyond the decay point. For $p=1$, the even and odd variables separate immediately after passing through this point and tend monotonically to their limit values. However, already for $p=2$, one can notice a small transition region formed behind the decay point, in which the solution behaves quite chaotically (that is, the variables separate not immediately, but after several oscillations). As $p$ increases, the relative size of this transition zone increases (see Fig.~\ref{fig:p37}), and its description remains an open problem.

\begin{figure}[t]
\centering
\includegraphics[width=0.46\textwidth]{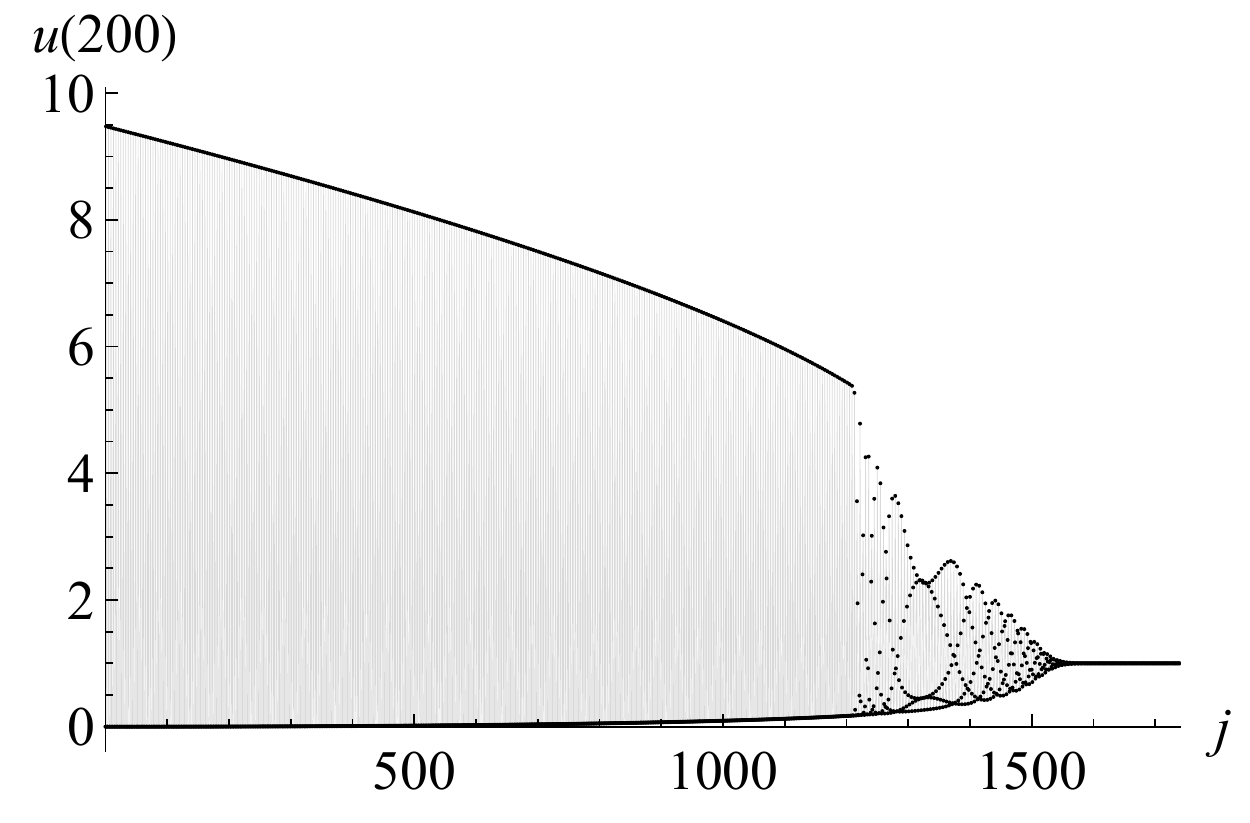}\qquad
\includegraphics[width=0.46\textwidth]{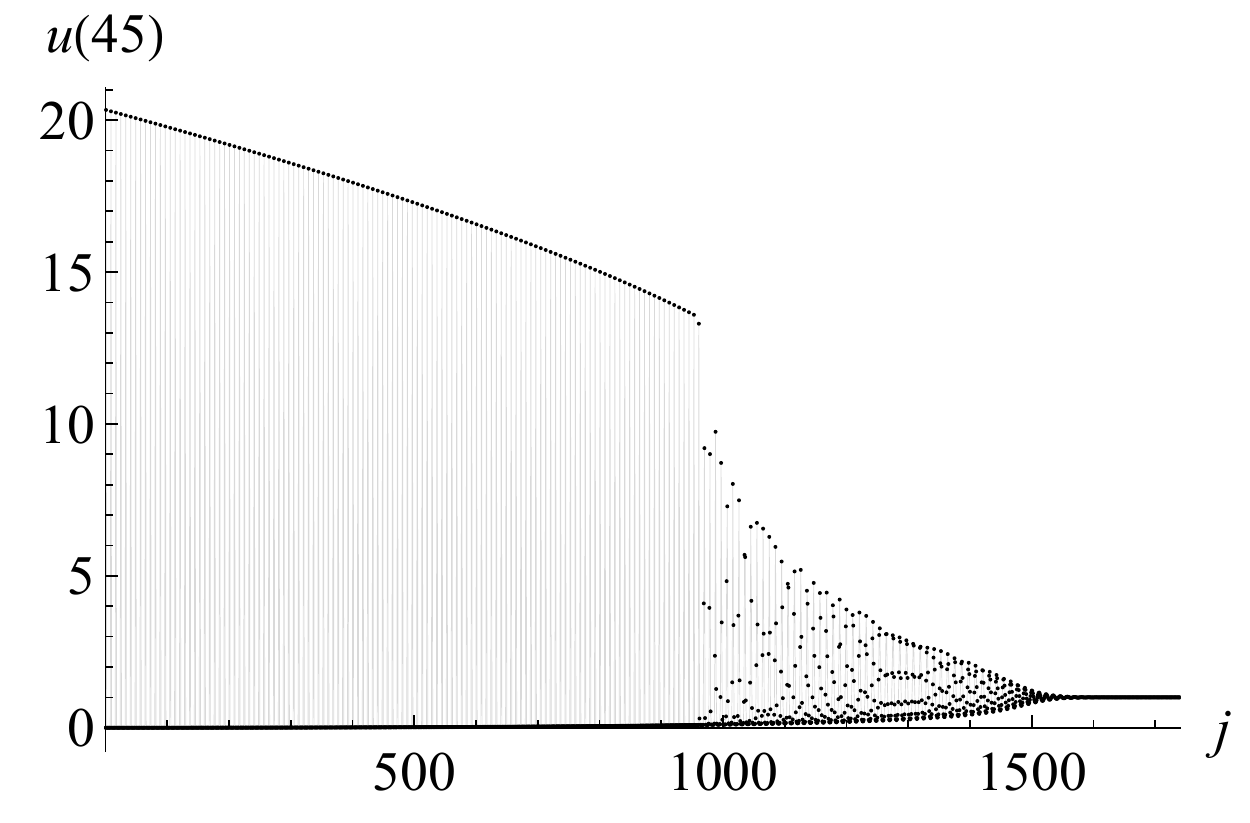}
\caption{Decay of the unit step for $p=3$ and $p=7$.}
\label{fig:p37} 
\end{figure}

\section{Conclusion}

The main idea of our work is that simple experiments with Taylor series can lead to the discovery of unexpected exact solutions of integrable equations. Indeed, if it were not for the observation with the generalized Catalan numbers (made possible thanks to the extensive \cite{OEIS} database) then, most likely, our solutions would not have been found soon, since a priori there is no reason to expect that certain initial data correspond to solutions described by some ODEs. On the other hand, the method of symmetry reductions, common in the theory of integrable equations, is more systematic and, as we have seen, can be used in the opposite direction, to prove identities that arise in the combinatorics from completely different considerations. It is to be hoped that the list of such examples can be extended.

In the problem of decay of the unit step, many questions remain open. Firstly, this is a more accurate description of the decay zone, and secondly, generalizations for more complex initial data. In particular, the general stationary solution of  BL$_p$ is of the form $u_j=\alpha_{j\bmod p}+\beta_{j\bmod(p+1)}$ and the question arises about its decay for the terminated lattice. As we have seen in the example of $p=1$ (Narayana polynomials), this problem is solved by passing from the Volterra lattice to the Toda lattice. Similar substitutions are known which relates BL$_p$ with the multifield Toda lattice \cite{Suris_2003} and the Belov--Chaltikian lattice \cite{Belov_Chaltikian_1993, Hikami_Sogo_Inoue_1997}. Moreover, the Belov--Chaltikyan lattice is known to have local master-symmetry \cite{Khanizadeh_Mikhailov_Wang_2013}, so it probably can be used to obtain non-autonomous reductions covering a fairly rich family of initial data.

\subsection*{Declarations}

\subsubsection*{Funding}

The work was done at Ufa Institute of Mathematics with the support by the grant \#21-11-00006 of the Russian Science Foundation, https://rscf.ru/project/21-11-00006/.

\subsubsection*{Data Availability Statements}

All data generated or analysed during this study are included in this published article.


\end{document}